\newtheorem{definition}{Definition}
\newtheorem{theorem}{Theorem}
\newtheorem{corollary}{Corollary}
\theoremstyle{definition}
\let\NAT@parse\undefined
\newtheorem{innerexample}{Example}[section]
\newcommand{\abs}[1]{\left\lvert#1\right\rvert}
\newcommand{\R}{\mathbb{R}}
\newcommand{\sgn}{\mathrm{sgn}}
\newenvironment{example}[1][]%
{\begin{innerexample}[#1]\pushQED{\qed}}%
{\popQED\end{innerexample}}
\renewenvironment{proof}[1][Proof]{\par\pushQED{\qed}\normalfont\textit{#1.}~}{\popQED\par}
\def\BibTeX{{\rm B\kern-.05em{\sc i\kern-.025em b}\kern-.08em
T\kern-.1667em\lower.7ex\hbox{E}\kern-.125emX}}
\begin{document}


\title{Nonholonomic Robot Parking by Feedback---\\ Part II: Nonmodular, Inverse Optimal, Adaptive, Prescribed/Fixed-Time and Safe Designs}

\author{Kwang Hak Kim, Velimir Todorovski, and Miroslav Krsti\'{c}
\thanks{This work was supported in part by the Office of Naval Research under Grant No. N00014-23-1-2376, in part by the Air Force Office of Scientific Research under Grant No. FA9550-23-1-0535, and in
part by the National Science Foundation under Grant No. ECCS-2151525. The results and opinions in this paper are solely of the authors and do not reflect the position or the policy of the U.S. Government or the National Science
Foundation. }
\thanks{K. Kim, V. Todorovski, and M. Krstić are with the Department of Mechanical and Aerospace Engineering, UC San Diego, 9500 Gilman Drive, La Jolla, CA, 92093-0411, {\tt\small \{kwk001,vtodorovski,krstic\}@ucsd.edu}}}

\maketitle

\begin{abstract}
For the unicycle system, we provide constructive methods for the design of feedback laws that have one or more of the following properties: being nonmodular and globally exponentially stabilizing, inverse optimal, robust to arbitrary decrease or increase of input coefficients, adaptive, prescribed/fixed-time stabilizing, and safe (ensuring the satisfaction of state constraints). Our nonmodular backstepping feedbacks are implementable with either unidirectional or bidirectional velocity actuation. Thanks to constructing families of strict CLFs for the unicycle, we introduce a general design framework and families of feedback laws for the unicycle, which are inverse optimal with respect to meaningful costs. These inverse optimal feedback laws are endowed with robustness to actuator uncertainty and arbitrarily low input saturation due to the unicycle's driftlessness. Besides ensuring robustness to unknown input coefficients, we also develop adaptive laws for these unknown coefficients, enabling the achievement of good transient performance with lower initial control effort. Additionally, we develop controllers that achieve stabilization within a user-specified time horizon using two systematic methods: time-dilated prescribed-time design with smooth-in-time convergence to zero of both the states and the inputs and homogeneity-based fixed-time control that provides an explicit bound on the settling time. Finally, with a nonovershooting design we guarantee strictly forward motion without curb violation. This article, along with its Part I, lays a broad constructive design foundation for stabilization of the nonholonomic unicycle. 
\end{abstract}


\allowdisplaybreaks

\section{Introduction}

For the unicycle system, 
known from the classic results by Brockett, Ryan, Coron, and Rosier~\cite{brockett1983asymptotic, ryan1994brockett, coron1994relation} 
to not be stabilizable by static state feedback in Cartesian coordinates, in a companion paper~\cite{Part1_todorovskiCLF2025} we introduce a broad framework for designing feedback laws and strict global CLFs for the polar coordinate model. 
Our framework in~\cite{Part1_todorovskiCLF2025} is {\em modular}, meaning that, by separating the design for the feedback laws for the longitudinal and angular velocity inputs, it allows feedback synthesis using the three principal nonlinear feedback design methods---passivity, backstepping, and integrator forwarding. Building on this foundation, in this article we first develop nonmodular stabilizing controllers inspired by~\cite{restrepo2020leader} to achieve exponential stability, and then expand the feedback design methodology for the unicycle to advanced stabilization goals: inverse optimality, adaptive stabilization, prescribed-time stabilization, and constraint-compliant (safe) stabilization.

\subsection{Related works}

\paragraph{Inverse optimality} Classical optimal control suffers from the curse of dimensionality, making both solving and storing solutions of the Hamilton-Jacobi-Bellman (HJB) equation infeasible for high-dimensional systems, motivating Kalman \cite{kalman1964inverse_opt} to formulate the inverse optimal control problem for linear systems. Introduced for nonlinear systems in \cite{moylan1973nonlinear}, the inverse optimal approach has since been extended in various directions, including stochastic systems \cite{deng1997stochastic} and differential games with bounded disturbances \cite{freeman2008robust}, later generalized to arbitrary disturbances through input-to-state stabilizing controllers \cite{krstic1998inverse}. The first constructive exploration of nonlinear inverse optimal control appears in \cite{sepulchre2012constructive}, for feedback laws of the form $-(L_gV)^{\rm T} = - g^{\rm T}\nabla V $, where $g$ is the input vector field of a system affine in control. This approach has found success in applications such as attitude control of rigid spacecraft \cite{krstic1999inverse,bharadwaj1998geometry}, as well as in inverse optimal safety \cite{krstic2023inverse,krstic2024offender}, yet remains considerably underdeveloped for stabilization of nonholonomic systems. Notably, the work of \cite{do2015global} addresses the stabilization of stochastic nonholonomic systems within the inverse optimal framework; however, establishes inverse optimality only at the subsystem level and without strict CLFs thanks to the inherent properties of stochastic systems. 

\paragraph{Unicycle adaptive control with model uncertainties} Adaptive control for nonholonomic systems is largely limited to time-varying, switching, or hybrid methods for stabilization and tracking problems. Results such as \cite{jiang1995backstepping} achieve global asymptotic stabilization using explicitly time-varying feedback for chained-form systems, while \cite{hespanha1999logic} employs logic-based switching to handle parametric uncertainties. Robust and adaptive switching schemes also provide adaptive control laws that overcome Brockett’s obstruction and address model uncertainties~\cite{jiang2000robust,ge2003adaptive}. Other works, such as \cite{jiangdagger1997tracking}, focus on adaptive tracking rather than regulation. The absence of globally strict CLFs historically prevented the construction of time-invariant adaptive feedback laws for unicycles, resulting in a literature dominated by time-varying or switching schemes. The recent development of globally strict CLFs~\cite{Part1_todorovskiCLF2025,restrepo2020leader} allows for the design of time-invariant adaptive feedback laws.

\paragraph{Prescribed/Fixed-time stabilization} The concepts of fixed-time (FxT) stabilization via nonsmooth Lyapunov inequalities and prescribed-time (PT) stabilization via time-varying feedback are introduced in \cite{polyakov2011nonlinear,sanchez2018class,song2017time,krishnamurthy2020dynamic} for a broad class of nonlinear systems, respectively. For unicycle models, FxT and PT stabilization results remain limited; in the context of nonholonomic source seeking, we present a prescribed-time seeking method in \cite{todorovski2023practical}. In contrast, finite-time convergence is achieved in Dubins vehicles using deadbeat controllers \cite{Krstic2025_Dubins} and in unicycles via sliding-mode methods \cite{thomas2016posture}. In addition, fixed-time algorithms for nonholonomic consensus tracking are developed in \cite{defoort2016fixed}, and for FxT stabilization of nonholonomic systems in chained form in \cite{sanchez2020predefined}.  This paper develops prescribed-time and fixed-time control laws that guarantee parking of the unicycle within a user-specified finite time.

\paragraph{Safety-critical control} Analytical safety filters enable real-time enforcement of safety constraints without relying on reachability analysis or set approximations. Among these, Control Barrier Function (CBF) methods~\cite{wieland2007constructive,ames2016CBF} have become a standard framework for ensuring safety in nonlinear systems, with recent developments in inverse optimal safety filters providing safety with built-in optimality and performance tradeoffs~\cite{krstic2023inverse,lyu2025safetyMarginsISSf}. The simplicity of implementation and rigorous theoretical guarantees enable widespread adoption across robotic platforms, including unicycle models~\cite{lindemann2020control,kim2025robust}. Preceding the development of CBFs, the nonovershooting control technique~\cite{krstic_nonovershooting_2006} offers an alternative analytical route to constraint satisfaction. Unlike optimization-based safety filters, this approach provides feedback laws that inherently enforce strict output constraints without requiring a separate filter layer.

\subsection{Contributions and Organization}

Building on the modular framework and strict CLFs in the companion paper~\cite{Part1_todorovskiCLF2025}, this work establishes a comprehensive set of \emph{nonmodular}, \emph{inverse optimal}, \emph{adaptive}, \emph{prescribed-time}, \emph{fixed-time}, and \emph{safety-critical} stabilization results for the unicycle system. The main contributions are:

\paragraph{Nonmodular stabilization with strict CLFs} In Section~\ref{sec:nonmodular} we develop nonmodular stabilization feedback laws, following~\cite{restrepo2020leader}, which achieve global exponential stability with backstepping for both unidirectional and bidirectional  velocity actuation.

\paragraph{General inverse optimal control for the unicycle} In Section~\ref{sec:IOC} we exploit our construction of globally strict CLFs in~\cite{Part1_todorovskiCLF2025} to design an extensive family of inverse optimal controllers, including linear, sublinear, bounded, and ``relay-approximating'' feedback laws, providing a systematic way to trade off control effort and convergence behavior while preserving global stability. Portions of the results presented without proofs appear in~\cite{kim2025inverseoptimalfeedbackgain}.

\paragraph{Gain margin and robustness} We show that the inverse optimal controllers inherit an \emph{infinite gain margin}, a consequence of the unicycle’s driftless structure. This yields robustness to actuator uncertainty and allows stabilization even under arbitrarily low input saturation levels.

\paragraph{Adaptive control with unknown input coefficients} 
In Section~\ref{sec:adapt}, we design adaptive feedback and update laws that compensate for unknown input coefficients while preserving global asymptotic stability. Our approach leverages the globally strict CLFs in~\cite{Part1_todorovskiCLF2025}, which enable smooth and time-invariant adaptive feedback design.
Unlike prior adaptive stabilization results, our design avoids the trajectory zigzagging due to oscillating gains inherent to time-varying methods~\cite{jiang1995backstepping}, does not rely on the assumption of known bounds on the uncertain parameters as in~\cite{jiang2000robust}, and eliminates the potential chattering or excessive oscillations from switching control laws in~\cite{hespanha1999logic,ge2003adaptive}.
Adaptation allows for the initial gains to be low while the state is large, and the gains used when the state becomes small to be larger---enabling good transients at reduced (early) peaks in control effort.

\paragraph{Prescribed/Fixed-time stabilization} In Section~\ref{sec:PT}, by properly scaling the nonmodular stabilizing control laws, we guarantee convergence to the origin in a user defined prescribed/fixed-time, with inputs that are not only bounded but whose values, and derivatives, also go to zero in prescribed/fixed-time. The scaling factor in the developed PT-control laws is time-varying and smooth in time and it is derived based on the temporal transformation approach. In contrast, similar to the use of homogeneous scaling factors in time-varying control design \cite{murray1997homogeneous}, where  asymptotic stability is achieved, but only with exponential (infinite-time) convergence, we introduce a homogeneous state-dependent scaling factor that enforces a suitable Lyapunov inequality, thereby guaranteeing fixed-time convergence of the closed-loop system. Consequently, the proposed designs provide the first control laws that guarantee user-prescribed finite-time convergence while exhibiting a less restrictive region of attraction than existing approaches.
\paragraph{Safety design} In Section~\ref{sec:safety}, we add curb-like position constraints on top of the stabilization objective and guarantee curb-avoiding stabilization with strictly forward motion, guaranteeing inherent safety without performance-degrading safety filters. 


\section{Unicycle in Polar Coordinates}\label{sec:unicycle}

Consider the standard unicycle model given as
\begin{subequations}
\label{eq:unicycle_cartesian}
\begin{eqnarray}
\dot{x} &=&  v \cos(\theta)\\
\dot{y} &=& v \sin(\theta)  \\
\dot{\theta} &=& \omega \,,\label{eq:dot_theta_omega}
\end{eqnarray}
\end{subequations}
where $(x,y) \in \R^2$ is the position of the unicycle in Cartesian coordinates, $\theta \in \mathbb{R}$ is the heading angle, $v$ is the forward velocity input, and $\omega$ is the angular velocity input. 
We introduce the polar transformation of the system \eqref{eq:unicycle_cartesian} as shown in Table~\ref{tab:polar_coordinates}, obtaining  the unicycle model in polar coordinates:
\begin{subequations}
\label{eq:unicycle_polar}
\begin{eqnarray}
\label{eq:unicycle_polar_rhodot}
\dot{\rho} &=& 
-v \cos\gamma\\
\label{eq:unicycle_polar_deltadot}
\dot{\delta} &=&  \frac{v}{\rho} \sin\gamma
\\
\label{eq:unicycle_polar_gammadot}
\dot{\gamma} &=&  \frac{v}{\rho} \sin\gamma -\omega  \,.
\end{eqnarray} 
\end{subequations}
Additionally, we define the following state-spaces for the system:
\begin{align}
\mathcal{S}   &:= \left\{ \rho > 0 \right\} \times \mathcal{T},
& \quad \mathcal{T}   &:= \left\{ \delta \in \mathbb{R},\, \gamma \in \mathbb{R} \right\} \label{eq:ss_S}\\[2pt]
\mathcal{S}_1 &:= \left\{ \rho > 0 \right\} \times \mathcal{T}_1,
& \quad \mathcal{T}_1 &:= \left\{ \delta \in \mathbb{R},\, \abs{\gamma} < \pi \right\} \label{eq:ss_S1}\\[2pt]
\mathcal{S}_2 &:= \left\{ \rho > 0 \right\} \times \mathcal{T}_2,
& \quad \mathcal{T}_2 &:= \left\{ \abs{\delta} < \pi,\, \gamma \in \mathbb{R} \right\} \label{eq:ss_S2}\\[2pt]
\mathcal{S}_3 &:= \left\{ \rho > 0 \right\} \times \mathcal{T}_3,
& \quad \mathcal{T}_3 &:= \left\{ \abs{\delta} < \pi,\, \abs{\gamma} < \pi \right\} \label{eq:ss_S3}\,.
\end{align}

\begin{figure}[t!]
\centering
\includegraphics[width=.6\linewidth]{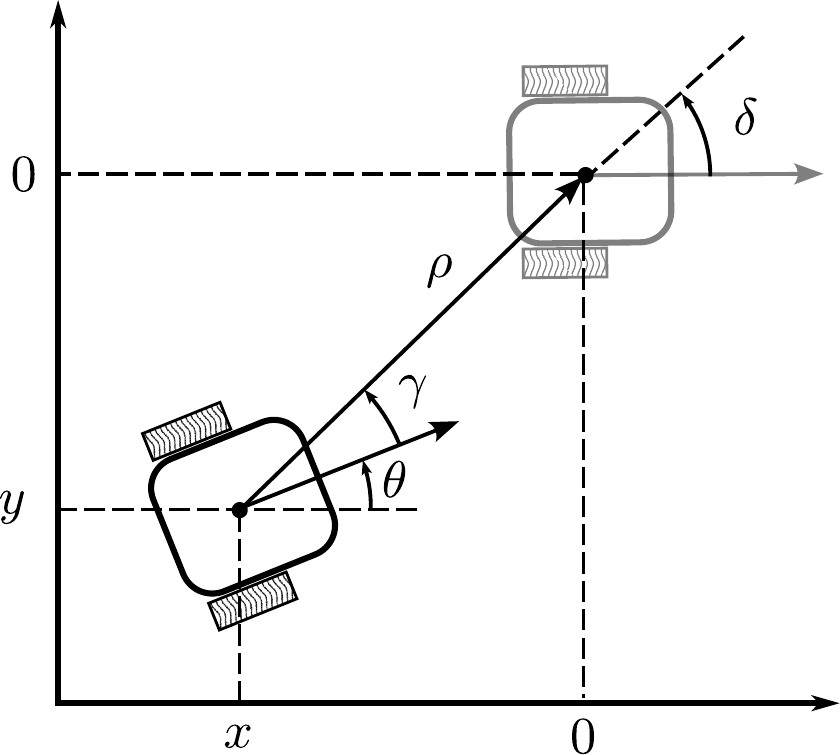}
\caption{Unicycle orientation $(x, y, \theta)$ relative to the goal state $(0, 0, 0)$, and the corresponding polar coordinate transformation $(\rho, \gamma, \delta)$.}
\label{fig:unicycle_cord}
\end{figure}

\begin{table}[t!]
\centering
\begin{tabular}{|l|l|l|}
\hline
$\rho = \sqrt{x^2 + 
y^2}$  & Distance to origin \\
\hline
 $\delta = \text{{\rm atan2}}(y ,x )  + \pi$
& Polar  angle
\\
\hline
$\gamma = \text{{\rm atan2}}(y ,x ) - \theta  + \pi$
& Line-of-sight angle   \\
\hline
\end{tabular}
\caption{Polar coordinates and their expressions in terms of Cartesian coordinates. 
The transformation $(x,y,\theta)\mapsto (\rho,\delta,\gamma)$ is discontinuous on $x<0,y=0$ and not defined at $x=y=0$. If the target is at $(x^*, y^*, \theta^*)\neq 0$, the transformation generalizes to $\rho=\sqrt{(x-x^*)^2+(y-y^*)^2}, \delta= \text{{\rm atan2}}(y-y^* ,x-x^* ) -\theta^* + \pi, \gamma= \delta-\theta+\theta^*$.
}
\label{tab:polar_coordinates}
\end{table}

\section{Nonmodular Unicycle Stabilization}\label{sec:nonmodular}

\label{sec:nonmodular_unicycle_stabilization}
The companion paper \cite{Part1_todorovskiCLF2025} develops a {\em modular} framework for designing stabilizing controllers and global strict CLFs for the unicycle. Modularity allows multiple design methods (passivity, integrator forwarding, backstepping) but guarantees only global asymptotic stability (GAS), not global exponential stability (GES). Notably, sacrificing modularity enables GES via a backstepping design~\cite{restrepo2020leader} that jointly determines the forward velocity and steering inputs in the first step for the $(\rho,\delta)$-subsystem. This, however, comes at the cost of higher design complexity and the lack of extension beyond backstepping to passivity or forwarding methods. Despite these limitations, we develop two backstepping extensions of the nonmodular approach in \cite{restrepo2020leader}.

\subsection{Unidirectional GES BAR-FLi}

In many practical systems, such as fixed-wing aircraft or guided missiles, reversing the forward velocity is infeasible. Motivated by this, we first present an extension for the GES controllers from \cite{restrepo2020leader} to prevent crossing in front of the target under the constraint of unidirectional forward velocity. 

\begin{theorem}[Unidirectional BAR-FLi]
\label{thm:unidirectional_bkstp}
For the 
system \eqref{eq:unicycle_polar}, 
consider the feedback laws
\begin{eqnarray}\label{eq:unidir_control_general}
v &=& k_1\sigma(\delta)\rho
\label{eq:unidir_control_v}
\\
\omega &=& \dfrac{v}{\rho} \sin(\gamma) +\tilde\omega
\label{eq:unidir_control_omega}\\
\tilde\omega &=& k_4z - k_3\rho^2\sigma(\delta)\frac{\partial\psi(z,\gamma)}{\partial \gamma} + k_3\frac{V_\delta(\delta)}{\phi(\delta)}\sigma(\delta)\psi(z,\gamma)\nonumber\\
&& +\frac{k_1k_2}{\sigma^2(\delta)}\phi'(\delta)\left(\sigma(\delta)\psi(z,\gamma)z - \phi(\delta)\right)\label{eq:unidir_control_omegabar}\,,
\end{eqnarray}
where
\begin{eqnarray}
V_\delta(\delta) &=& 4\tan^2\frac{\delta}{2}
\\
\phi(\delta) &=& \sin\delta\\
\sigma(\delta) &=& \sqrt{1+ k_2^2\phi^2(\delta)}\\
z &=& \gamma + \arctan\left(k_2\phi(\delta)\right)\label{eq:unidir_z}\\
\psi(r,s) &=& \frac{\sin(r-s) +\sin(s)}{r}\label{eq:GES_psi}\,,
\end{eqnarray}
and $k_1,k_2,k_3,k_4 >0$. This feedback renders the point $\rho = \delta = \gamma = 0$ globally exponentially stable (GES) on $\mathcal{S}_2$, i.e., there exist $K \geq 1$ and $\lambda > 0$ such that, for all $t\geq 0$, it holds that $|(\rho(t),\delta(t), \gamma(t))|_{\mathcal{S}_2}\leq K|(\rho_0,\delta_0, \gamma_0)|_{\mathcal{S}_2}e^{-\lambda (t-t_0)}$.
Furthermore, 
\begin{align}\label{eq:unidir_V_general}
V = \rho^2 + V_\delta + q^2\left[\gamma + \arctan\left(k_2\phi(\delta)\right)\right]^2\,,
\end{align}
with $q = \sqrt{k_1/k_3}$ 
is a globally strict CLF for \eqref{eq:unicycle_polar} on $\mathcal{S}_2$ with respect to the input pair $(v/\rho,\omega)$ in the sense of~\cite[Def.~2]{Part1_todorovskiCLF2025}. 
\end{theorem}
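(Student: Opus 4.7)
The plan is a backstepping Lyapunov analysis that simultaneously establishes the strict-CLF property and exponential stability. First I would treat $(\rho,\delta)$ as a subsystem and identify a stabilizing virtual control for $\gamma$; then I would use backstepping on the $\gamma$-equation to drive $\gamma$ to that virtual value. The natural virtual control is $\gamma^\star(\delta)=-\arctan(k_2\phi(\delta))$, because substituting $\gamma=\gamma^\star$ together with $v=k_1\sigma(\delta)\rho$ into \eqref{eq:unicycle_polar_rhodot}--\eqref{eq:unicycle_polar_deltadot} gives $\dot\rho=-k_1\rho$ and $\dot\delta=-k_1k_2\sin\delta$. Combined with the computational identity $V_\delta'(\delta)\phi(\delta)=2V_\delta(\delta)$, which follows directly from $V_\delta=4\tan^2(\delta/2)$ and $\phi=\sin\delta$, this yields $\dot V_1=-2k_1\rho^2-2k_1k_2 V_\delta$ along the nominal $(\rho,\delta)$ dynamics with $V_1=\rho^2+V_\delta$. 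The error variable $z=\gamma-\gamma^\star$ in \eqref{eq:unidir_z} is then the natural backstepping coordinate, and the CLF candidate \eqref{eq:unidir_V_general} is $V=V_1+q^2 z^2$.

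For general $\gamma=z+\gamma^\star$, the expansions $\sigma\cos\gamma=\cos z+k_2\phi\sin z$ and $\sigma\sin\gamma=\sin z-k_2\phi\cos z$ split $\dot V_1$ into the nominal part $-2k_1(\rho^2+k_2 V_\delta)$ plus cross terms proportional to $\sin z$ and $1-\cos z$, which are repackaged in terms of $\psi$ from \eqref{eq:GES_psi} via the identity $\sigma\,z\,\psi(z,\gamma)=\sin z+k_2\phi(1-\cos z)$. Because $\omega$ in \eqref{eq:unidir_control_omega} is designed so that $\dot\gamma=-\tilde\omega$, the chain rule through $\arctan(k_2\phi(\delta))$ gives $\dot z=-\tilde\omega+k_1k_2\phi'\sin\gamma/\sigma$. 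Plugging into $\dot V=\dot V_1+2q^2 z\dot z$ and grouping the residual couplings into a single cross expression $\mathcal{E}(\rho,\delta,z,\gamma)$,
\begin{equation*}
\dot V=-2k_1\rho^2-2k_1k_2 V_\delta-2q^2 z\tilde\omega+2q^2 z\cdot\mathcal{E}.
\end{equation*}
The four summands of $\tilde\omega$ in \eqref{eq:unidir_control_omegabar} are engineered to cancel $\mathcal{E}$ and inject damping: the $-k_3\rho^2\sigma\,\partial\psi/\partial\gamma$ term absorbs the $\rho^2$ coupling, the $k_3(V_\delta/\phi)\sigma\psi$ term absorbs the $V_\delta$ coupling (the factor $V_\delta/\phi$ coming from $V_\delta'\phi=2V_\delta$), the $(k_1k_2/\sigma^2)\phi'(\sigma\psi z-\phi)$ term accounts for the chain-rule term $k_1k_2\phi'\sin\gamma/\sigma$ in $\dot z$, and the $k_4 z$ term provides active damping, leaving
\begin{equation*}
\dot V\le -2k_1\rho^2-2k_1k_2 V_\delta-2q^2 k_4 z^2,
\end{equation*}
which is negative definite on $\mathcal{S}_2$ and certifies $V$ as a globally strict CLF in the sense of \cite[Def.~2]{Part1_todorovskiCLF2025}.

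GES then follows because $V$ is equivalent to a squared $\mathcal{S}_2$-distance: $\rho^2$ is a genuine quadratic in $\rho$; $V_\delta=4\tan^2(\delta/2)$ is the natural boundary-aware quadratic on $\{|\delta|<\pi\}$ that blows up at the boundary; and $q^2 z^2$ is equivalent to a quadratic in $\gamma$ because $z=\gamma+\arctan(k_2\phi(\delta))$ is a smooth diffeomorphism whose difference from $\gamma$ is uniformly bounded by $\arctan(k_2)$. Combined with $\dot V\le -\lambda V$ for $\lambda=2\min\{k_1,k_1k_2,k_4\}$, one obtains $V(t)\le V(0)e^{-\lambda t}$, which via the equivalence yields the exponential bound with some $K\ge 1$. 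The main obstacle throughout is purely algebraic: verifying that the four-term $\tilde\omega$ of \eqref{eq:unidir_control_omegabar} exactly cancels $\mathcal{E}$ requires coordinated use of $V_\delta'\phi=2V_\delta$, the $\psi$-identity $\sigma z\psi=\sin z+k_2\phi(1-\cos z)$, and the chain rule through $\dot\delta$, and it is precisely this bookkeeping that dictates the otherwise opaque form of $\tilde\omega$.
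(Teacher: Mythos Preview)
Your proposal is correct and follows essentially the same backstepping Lyapunov argument as the paper: both use the change of variable $z=\gamma+\arctan(k_2\phi(\delta))$, compute the closed-loop $(\dot\rho,\dot\delta,\dot z)$ under \eqref{eq:unidir_control_v}--\eqref{eq:unidir_control_omega}, and show that the choice \eqref{eq:unidir_control_omegabar} yields $\dot V=-2k_1\rho^2-2k_1k_2 V_\delta-2k_4q^2z^2\le -\underline{c}V$, from which GES and the strict-CLF property follow. Your write-up supplies the motivating identities (e.g., $V_\delta'\phi=2V_\delta$ and $\sigma z\psi=\sin z+k_2\phi(1-\cos z)$) that the paper uses implicitly; note also that in fact equality (not just $\le$) holds in your $\dot V$ expression.
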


\begin{proof}
First, we note that $\sigma(\delta)\cos(z-\gamma) = 1$ and $\sigma(\delta)\sin(z-\gamma) = \phi(\delta)$. With the backstepping transformation \eqref{eq:unidir_z} and substituting \eqref{eq:unidir_control_v} into \eqref{eq:unicycle_polar_rhodot} and \eqref{eq:unicycle_polar_deltadot} yields
\begin{eqnarray}
\dot\rho &=& -k_1\rho - k_1\rho\sigma(\delta)\frac{\partial\psi(z,\gamma)}{\partial\gamma}z \label{eq:rho_dot_unidir}\\
\dot\delta &=& -k_1k_2\phi(\delta) + k_1\sigma(\delta)\psi(z,\gamma)z\,. \label{eq:delta_dot_unidir}
\end{eqnarray}
Moreover, with \eqref{eq:unidir_control_omega}, we get
\begin{equation}
\dot z = \frac{k_1k_2}{\sigma^2(\delta)}\phi'(\delta)\left(\sigma(\delta)\psi(z,\gamma)z - k_2\phi(\delta)\right) - \bar\omega\,. \label{eq:z_dot_unidir}
\end{equation}
Then, taking the time derivative of \eqref{eq:unidir_V_general} gives
\begin{align}
\dot V =& -2k_1\rho^2  -2k_1k_2V_\delta(\delta) \nonumber\\
&+2q^2z\biggl[-k_3\rho^2\sigma(\delta)\frac{\partial\psi(z,\gamma)}{\partial \gamma}+ k_3\frac{V_\delta(\delta)}{\phi(\delta)}\sigma(\delta)\psi(z,\gamma)\nonumber\\
&+ \frac{k_1k_2}{\sigma^2(\delta)}\phi'(\delta)\left(\sigma(\delta)\psi(z,\gamma)z - k_2\phi(\delta)\right) - \bar\omega\biggr]\,, 
\end{align}
which, with \eqref{eq:unidir_control_omegabar}, yields
\begin{equation}\label{eq:unidir_Vdot1}
\dot V = -2k_1\rho^2  - 2k_1k_2V_\delta(\delta) - 2k_4q^2z^2 \le - \underline{c} V \,,
\end{equation}
 where $\underline c = \min\{2k_1,2k_1k_2, 2k_4q^2\}>0$. Hence, \eqref{eq:unidir_V_general} is a strict CLF for \eqref{eq:unicycle_polar}, and by the comparison principle $|(\rho(t),\delta(t),\gamma(t))|_{\mathcal{S}_2} \leq K |(\rho_0,\delta_0,\gamma_0)|_{\mathcal{S}_2} e^{-\lambda (t-t_0)}$, where $K(k_2) > 1$ and $\lambda = \underline{c}/2$.
\end{proof}

\subsection{Bidirectional GES Backstepping}

When the vehicle is allowed to reverse the forward velocity, it would be more efficient to exploit this capability when needed, especially in parking scenarios. The following Theorem exploits this and achieves GES. Fig.~\ref{fig:loria_vs_thm9} compares the trajectory generated by the controller from Theorem~\ref{thm:bidirectional_bkstp} with that of \cite[Sec.~III.A]{restrepo2020leader}.

\begin{theorem}[Bidirectional Backstepping]
\label{thm:bidirectional_bkstp}
For the 
system \eqref{eq:unicycle_polar},  consider the feedback laws
\begin{eqnarray}\label{eq:bidir_control_general}
v &=& k_1\rho\sigma(\delta)\cos\gamma
\label{eq:bidir_control_v}
\\
\omega &=& \dfrac{v}{\rho} \sin \gamma +\tilde\omega
\label{eq:bidir_control_omega}\\
\tilde\omega &=& k_4z - k_3\rho^2\sigma(\delta)\psi_2(2z,2\gamma)+2k_3\delta\sigma(\delta)\psi(2z,2\gamma) \nonumber\\
&& +\frac{k_1k_2}{\sigma^2(\delta)}\left(\sigma(\delta)\psi(2z,2\gamma)z - k_2\delta\right)\label{eq:bidir_control_omegabar}\,,
\end{eqnarray}
where
\begin{eqnarray}
\sigma(\delta) &=& \sqrt{1+ (2k_2\delta)^2}\\
z &=& \gamma + \frac{1}
{2}\arctan\left(2k_2\delta\right)\label{eq:bidir_z}\\
\psi_2(r,s) &=& \frac{\partial \psi(r,s)}{\partial s}\,,
\end{eqnarray}
with $\psi(r,s)$ defined in \eqref{eq:GES_psi} and $k_1,k_2,k_3,k_4 >0$.
This feedback renders the point $\rho = \delta = \gamma = 0$ globally exponentially stable (GES) on $\mathcal{S}$, i.e., there exist $K \geq 1$ and $\lambda > 0$ such that, for all $t\geq 0$, it holds that $|(\rho(t),\delta(t), \gamma(t))|_{\mathcal{S}}\leq K|(\rho_0,\delta_0, \gamma_0)|_{\mathcal{S}}e^{-\lambda (t-t_0)}$.
Furthermore, 
\begin{align}\label{eq:bidir_V_general}
V = \rho^2 + \delta^2 + q^2\left[\gamma + \frac{1}{2}\arctan\left(2k_2\delta\right)\right]^2\,,
\end{align}
with $q = \sqrt{k_1/k_3}$ 
is a globally strict CLF for \eqref{eq:unicycle_polar} on $\mathcal{S}$ with respect to the input pair $(v/\rho,\omega)$ in the sense of~\cite[Def.~2]{Part1_todorovskiCLF2025}. 
\end{theorem}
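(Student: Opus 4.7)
The plan is to mirror the proof of Theorem~\ref{thm:unidirectional_bkstp}, adapted to the bidirectional parametrization. The critical first step is to establish trigonometric identities implied by the backstepping transformation \eqref{eq:bidir_z}. Since $2(z-\gamma) = \arctan(2k_2\delta)$, we get $\sigma(\delta)\cos\bigl(2(z-\gamma)\bigr) = 1$ and $\sigma(\delta)\sin\bigl(2(z-\gamma)\bigr) = 2k_2\delta$, which, after expanding $2\gamma = 2z - 2(z-\gamma)$, yield
\[
\sigma(\delta)\cos(2\gamma) = \cos(2z) + 2k_2\delta\sin(2z),\qquad \sigma(\delta)\sin(2\gamma) = \sin(2z) - 2k_2\delta\cos(2z).
\]

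Next, I would substitute $v = k_1\rho\sigma(\delta)\cos\gamma$ into \eqref{eq:unicycle_polar_rhodot}--\eqref{eq:unicycle_polar_deltadot}. Using $\cos^2\gamma = (1+\cos 2\gamma)/2$ and $\sin\gamma\cos\gamma = \sin(2\gamma)/2$, the identities above give $\dot\rho = -\tfrac{k_1\rho}{2}\bigl[\sigma(\delta) + 1 + \sigma(\delta)\psi_2(2z,2\gamma)\cdot 2z\bigr]$ and $\dot\delta = \tfrac{k_1}{2}\bigl[\sin 2z - 2k_2\delta\cos 2z\bigr]$, from which a short calculation using $\psi$ and $\psi_2$ gives the clean forms
\[
\dot\rho = -\tfrac{k_1\rho(\sigma(\delta)+1)}{2} - k_1\rho\sigma(\delta)\psi_2(2z,2\gamma)\,z,\qquad \dot\delta = -k_1k_2\delta + k_1\sigma(\delta)\psi(2z,2\gamma)\,z.
\]
Then $\omega = (v/\rho)\sin\gamma + \tilde\omega$ yields $\dot\gamma = -\tilde\omega$, and $\dot z = \dot\gamma + (k_2/\sigma^2)\dot\delta$ produces $\dot z = \tfrac{k_1k_2}{\sigma^2(\delta)}\bigl(\sigma(\delta)\psi(2z,2\gamma)z - k_2\delta\bigr) - \tilde\omega$, mirroring \eqref{eq:z_dot_unidir}.

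With $V = \rho^2 + \delta^2 + q^2 z^2$ and $q^2 = k_1/k_3$, I would then differentiate along the closed loop and observe that each summand of \eqref{eq:bidir_control_omegabar} is tailored to cancel the corresponding cross term from $2\rho\dot\rho$ and $2\delta\dot\delta$: the term $-k_3\rho^2\sigma\psi_2(2z,2\gamma)$ in $\tilde\omega$ cancels $-2k_1\rho^2\sigma\psi_2\,z$, the term involving $\delta\sigma\psi(2z,2\gamma)$ cancels $+2k_1\delta\sigma\psi\,z$, and the last $\tfrac{k_1k_2}{\sigma^2}(\cdot)$ term in $\tilde\omega$ cancels the matching expression in $\dot z$. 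What remains is negative definite in $(\rho,\delta,z)$, of the form $\dot V \leq -k_1\rho^2(\sigma(\delta)+1) - 2k_1k_2\delta^2 - 2k_4q^2 z^2$. Since $\sigma(\delta) \ge 1$, we obtain $\dot V \leq -\underline c V$ with $\underline c = \min\{2k_1,\,2k_1k_2,\,2k_4\}$, and the comparison principle delivers GES on $\mathcal S$ with the claimed exponential estimate; the CLF property follows from the same inequality.

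The main obstacle I expect is step two: carefully verifying that the additional $\cos\gamma$ factor in $v$ (compared to Theorem~\ref{thm:unidirectional_bkstp}) together with the half-angle in the transformation \eqref{eq:bidir_z} produces $\psi(2z,2\gamma)$ and $\psi_2(2z,2\gamma)$ at the right scaling so that the cancellations match the coefficients in \eqref{eq:bidir_control_omegabar}. The algebra is more delicate than in Theorem~\ref{thm:unidirectional_bkstp} because the driving $\dot\rho$ now carries a $\cos^2\gamma$ rather than a constant, requiring the identity $\sigma\cos(2\gamma) = 1 + \sigma\psi_2(2z,2\gamma)\cdot 2z$ before the backstepping cancellation machinery can be applied. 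Once the two decompositions for $\dot\rho$ and $\dot\delta$ are in hand, the rest is bookkeeping analogous to Theorem~\ref{thm:unidirectional_bkstp}.
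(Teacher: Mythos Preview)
Your proposal is correct and follows essentially the same route as the paper: derive $\dot\rho,\dot\delta,\dot z$ in the clean forms you state, differentiate $V=\rho^2+\delta^2+q^2z^2$, and observe that the summands of \eqref{eq:bidir_control_omegabar} cancel the cross terms, leaving a negative-definite $\dot V$ and hence GES by the comparison principle. Your intermediate bound $\dot V\le -k_1\rho^2(1+\sigma(\delta))-2k_1k_2\delta^2-2k_4q^2z^2$ is in fact slightly sharper than the paper's stated $\dot V=-k_1\rho^2-2k_1k_2\delta^2-2k_4q^2z^2$, and your $\underline c=\min\{2k_1,2k_1k_2,2k_4\}$ is the correct rate (the paper's $2k_4q^2$ in $\underline c$ is a slip, since the $q^2$ factor sits in $V$ as well).
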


\begin{proof}
With the backstepping transformation \eqref{eq:bidir_z} and substituting \eqref{eq:bidir_control_v} into \eqref{eq:unicycle_polar_rhodot} and \eqref{eq:unicycle_polar_deltadot} yields
\begin{eqnarray}
\dot\rho &=& -\frac{1}{2}k_1\rho(1+\sigma(\delta)) -k_1\rho\sigma(\delta)\psi_2(2z,2\gamma)z \label{eq:rho_dot_bidir}\\
\dot\delta &=& -k_1k_2\delta + k_1\sigma(\delta)\psi(2z,2\gamma)z\,. \label{eq:delta_dot_bidir}
\end{eqnarray}
Moreover, with \eqref{eq:bidir_control_omega}, we get
\begin{equation}
\dot z = \frac{k_1k_2}{\sigma^2(\delta)}(\sigma(\delta)\psi(2z,2\gamma)z - k_2\delta) - \bar\omega\,. \label{eq:z_dot_bidir}
\end{equation}
Then, taking the time derivative of \eqref{eq:bidir_V_general} gives
\begin{align}
\dot V =& -k_1\rho^2\left(1+\sigma(\delta)\right) -2k_1k_2\delta^2 \nonumber\\
&+2q^2z\biggl[-k_3\rho^2\sigma(\delta)\psi_2(2z,2\gamma) + k_3\sigma(\delta)\psi(2z,2\gamma)\nonumber\\
&+ \frac{k_1k_2}{\sigma^2(\delta)}\left(\sigma(\delta)\psi(2z,2\gamma)z - k_2\delta\right) - \bar\omega\biggr]\,, 
\end{align}
which, with \eqref{eq:bidir_control_omegabar}, yields
\begin{equation}\label{eq:bidir_Vdot1}
\dot V = -k_1\rho^2  - 2k_1k_2\delta^2 - 2k_4q^2z^2 \le -\underline{c} V \,,
\end{equation}
 where $\underline c = \min\{k_1,2k_1k_2, 2k_4q^2\}>0$. Hence, \eqref{eq:bidir_V_general} is a strict CLF for \eqref{eq:unicycle_polar}, and by the comparison principle $|(\rho(t),\delta(t),\gamma(t))|_{\mathcal{S}} \leq K |(\rho_0,\delta_0,\gamma_0)|_{\mathcal{S}} e^{-\lambda (t-t_0)}$, where ${K(k_2) > 1}$ and ${\lambda = \underline{c}/2}$.
\end{proof}

\begin{figure}[t]
\centering
\includegraphics[width=.75\linewidth]{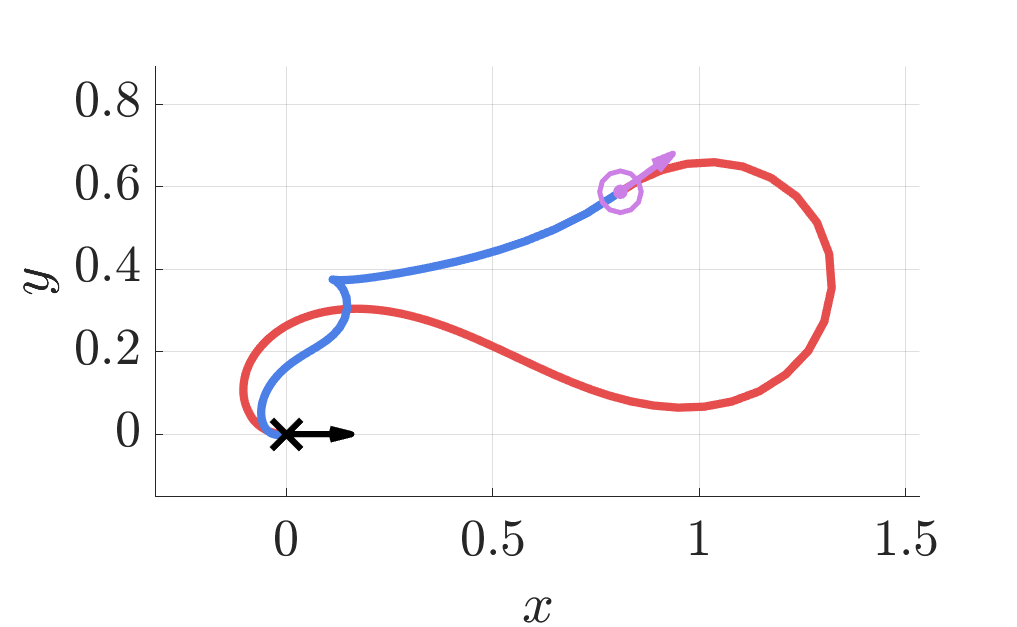}
\caption{Comparison of the control law from Theorem~\ref{thm:bidirectional_bkstp} (blue) and \cite[Sec.~III.A]{restrepo2020leader} (red) with initial conditions $(\rho_0,\delta_0,\gamma_0) = (1,-4\pi/5,\pi)$, showing that restricting to unidirectional forward velocity yields a less efficient trajectory to the target.}
\label{fig:loria_vs_thm9}
\end{figure}

\section{Inverse Optimal Design}\label{sec:IOC}

We now turn to the design of optimal controllers. The inescapable `curse of dimensionality' of classical optimal control is to an equal degree about storing the results as it is about solving the HJB PDE. For instance, on devices with less than a terabyte of storage, there is not enough space to store the HJB solution on a uniformly quantized grid for a robotic manipulator with more than 3DOF. Instead, in this section, we derive a family of inverse optimal controllers for the unicycle based on the numerous CLFs available. It is important to note that the derived controllers fall into the category of damping $L_gV$ controllers 
(i.e., see \cite{jurdjevic1978controllability},  \cite[Prop. 5.9.1]{sontag2013mathematical}) that immediately satisfy the Jurdjevic-Quinn conditions and globally stabilize the unicycle.

\subsection[Alternative controllers: LgV (gradient) designs]{Alternative controllers: $L_gV$ (gradient) designs}
\label{sec-alternatives}

We return to the unicycle model 
\eqref{eq:unicycle_polar}.
If we take the $L_gV$ approach in its vanilla form, we run into a problem: $g_1(\rho,\gamma)$ has a singularity at $\rho=0$. We circumvent this singularity in a way that is not just mathematically resourceful but physically meaningful. 
We introduce a change of velocity control variable from $v$ to $v/\rho$,
where $v/\rho$ is a new control input to be designed, and where the velocity feedback $v$ is forced to vanish at the target location, $\rho=0$. It is to be expected that the velocity be proportional to the distance from the target---the vehicle needs to slow down in order to stop at the target. 

Having replaced the input pair $(v,\omega)$ by $(v/\rho,\omega)$, we now have a new input vector field $\bar g = [\rho g_1, g_2]$, with which $L_{\bar g} V$ is ``desingularized.'' From \eqref{eq:unicycle_polar}, the vector fields are given by
\begin{align}
\quad g_1 &\coloneqq\dfrac{1}{\rho}\bar g_1\coloneqq \dfrac{1}{\rho}{\begin{bmatrix}
-\rho\cos\gamma \\
\sin\gamma \\
\sin\gamma 
\end{bmatrix}} \,, \ \  g_2 := {\begin{bmatrix}
0 \\
0 \\
-1
\end{bmatrix}}\,.
\end{align}
And we are now ready to study the design of controllers that employ $L_{\bar g} V$, as well as their stabilizing and inverse optimality properties. 

As we shall see shortly, the inverse optimal design will rely on the property that 
\begin{center}
$L_{\bar g} V(\Xi) = 0$ only at $\Xi=0$, where $\Xi := (\rho,\gamma, \delta)$,    
\end{center}
which in turn implies that
\begin{center}
$|L_{\bar g} V(\Xi)|$ is positive definite in $\Xi$.
\end{center}
This property holds for the following reason. For systems affine in control, $\dot x = f(x) + g(x)u$, a positive definite radially unbounded differentiable function $V(x)$ is a CLF if and only if the following implication holds: for all $x\neq 0$ where $L_gV(x)=0$ it holds that $L_fV(x) < 0$. When $f(x)\equiv 0$, this implication gives that $L_gV(x)\neq 0$ for all $x\neq 0$, namely, that $L_gV(x)=0 $ only at $x=0$. For instance, we establish that all Lyapunov functions in~\cite[Thm. 1]{Part1_todorovskiCLF2025} are CLFs in the sense of~\cite[Def.~2]{Part1_todorovskiCLF2025} with the Genova feedback law. 

While our theorems establish that we construct Lyapunov functions $V(\rho,\delta,\gamma)$ such that $|L_{\bar g} V(\Xi)|$ is positive definite in $\Xi$, it is not radially unbounded. Radial unboundedness can be achieved with a modification of the CLF, which is described in the final segment of the proof of~\cite[Thm. 3.2]{krstic1998inverse}.


As an illustration of what the Lie derivatives $-L_{\bar{g}_1}V$ and $-L_{g_2}V$ look like, we compute them for the Genova CLF in~\cite[Eq. (25) with (13) and (24)]{Part1_todorovskiCLF2025},
with unity gains:
\begin{subequations}
\label{eq:LgV_Genova-both}
\begin{align}
-L_{\bar{g}_1}V/2 &= \rho^2\cos\gamma - \sin\gamma\left(\delta^2 + \gamma^2 + 4\right)(\delta+\gamma)\label{eq:Lg1V_Genova}\\
-L_{g_2}V/2 &= \left(\delta^2 + \gamma^2 + 2\right)\gamma + (\delta+\gamma)
\,.
\end{align}
\end{subequations}
Similarly, the Lie derivatives $-L_{\bar{g}_1}V$ and $-L_{g_2}V$ for the GloBa CLF in~\cite[Eq. (37) with (13) and (38)]{Part1_todorovskiCLF2025},
with unity gains, yield
\begin{subequations}
\label{eq:LgV_GloBa-both}
\begin{align}
-L_{\bar{g}_1}V/2 &= \rho^2\cos\gamma  - \sin\gamma\left[\delta + \left(1+\frac{1}{1+4\delta^2}\right)z\right] \label{eq:Lg1V_GloBa}\\
-L_{g_2}V/2 &= z
\,.
\label{eq:Lg2V_GloBa}
\end{align}
\end{subequations}
where $z = \gamma + \frac{1}{2}\arctan(2\delta)$. In both cases, the introduction of the change of control variable from $v$ to $v/\rho$ 
has successfully ``desingularized'' the Lie derivative expressions as seen in \eqref{eq:Lg1V_Genova} and \eqref{eq:Lg1V_GloBa}. 

\subsection{Basic quadratic inverse optimality}
\label{sec-basic-quadratic}

Before presenting the general inverse optimality methodology in Section \ref{sec-gen-inv-opt}, we give a particular example of what we are after. For any of our previously designed CLFs $V(\rho,\delta,\gamma)$, all controllers of the form 
\begin{subequations}\label{eq-u*-quad0}  
\begin{align}
\label{eq-LQ-forward}
v^* &= -\rho\varepsilon_1^2L_{\bar{g}_1}V\nonumber\\
&:=  \varepsilon_1^2\rho\left[\dfrac{\partial V}{\partial\rho}\rho\cos\gamma - \left(\dfrac{\partial V}{\partial\delta}+\dfrac{\partial V}{\partial\gamma}\right) \sin\gamma \right]\\
\label{eq-LQ-steer}
\omega^* &= -\varepsilon_2^2L_{g_2}V := \varepsilon_2^2\dfrac{\partial V}{\partial\gamma}\,,
\end{align}
\end{subequations}
for all $\varepsilon_1, \varepsilon_2>0$, are not only globally asymptotically stabilizing but are the minimizers of the parametrized costs
\begin{equation}
J = \int_0^\infty \Biggl[l(\rho,\delta,\gamma) +\left(\dfrac{v}{\varepsilon_1\rho}\right)^2 +\left(\dfrac{\omega}{\varepsilon_2}\right)^2  \Biggr] dt\,,
\end{equation}
where 
\begin{equation}
l(\rho,\delta,\gamma) = \left(\varepsilon_1 L_{\bar{g}_1}V\right)^2 +\left(\varepsilon_2L_{g_2}V\right)^2\,.
\end{equation}
Such costs are meaningful in the sense of imposing (i.) a zero penalty when all the state variables $\rho,\delta,\gamma$ and both of the ($\rho$-weighted) control inputs $v/\rho$ and $\omega$ are zero, and (ii.) a positive penalty when any of the states and either of the (weighted) controls are nonzero. Note that when $\rho = 0$, the forward velocity input is subjected to an infinite penalty, which is desirable, as it discourages  movement away from the positional origin once it is reached.

It is relevant to examine the complexity of the $L_gV$ controllers as compared to their counterparts used to derive the CLFs. To that end, we focus on the GloBa controller and note that the inverse optimal steering feedback \eqref{eq-LQ-steer} with \eqref{eq:Lg2V_GloBa} is much simpler than the non-optimal GloBa feedback~\cite[Eq. (7) with (41)]{Part1_todorovskiCLF2025}. But, in contrast, compared to the basic forward velocity feedback $v=\rho\cos\gamma$, the inverse optimal forward velocity feedback \eqref{eq-LQ-forward} with \eqref{eq:Lg1V_GloBa}, given by $v^* = \rho^3\cos\gamma  - \rho\sin\gamma\left[\delta + \left(1+\frac{1}{1+4\delta^2}\right)z\right]$ is considerably more complex and depends not only on $(\rho,\gamma)$ but also on $\delta$ (when the vehicle points sidewise relative to the target, the velocity is nonzero). 

\subsection{General inverse optimal designs}
\label{sec-gen-inv-opt}

To generalize the inverse optimality result from the end of Section \ref{sec-basic-quadratic}, we  recall the \textit{Legendre–Fenchel transform}, defined as follows.

\begin{definition}(Legendre-Fenchel transform)\label{def:legendre_Fenchel}. Let $\eta$ be a class $\mathcal{K}_\infty[0,a)$ function whose derivative $\eta'$ is also a class $\mathcal{K}_\infty[0,a)$ function where $a > 0$ is finite or infinite.
The mapping 
\begin{align}
\ell \eta(r) = \int_0^r (\eta')^{-1}(s) ds
\end{align}
represents the Legendre-Fenchel transform, where $(\eta')^{-1}(r)$ stands for the inverse function of $d\eta(r)/dr$.
\end{definition} 

The transformation has useful properties, given in~\cite[Lemma A1]{krstic1998inverse}. Also, to minimize notational repetition, we use the subscript $i$ in place of the subscripts $1$ and $2$, corresponding to the vector fields $\bar g_1$ and $g_2$ and their associated control inputs $v/\rho$ and $\omega$.

\begin{theorem} \label{thrm:IOC} Consider the system \eqref{eq:unicycle_polar} rewritten as
\begin{align}\label{eq:ioc_sys}
\dot{\Xi} = \bar{g}_1(\Xi)\frac{v}{\rho} + g_2(\Xi)\omega
\end{align}
where $\Xi \coloneqq [\rho,\delta,\gamma]^\top$ and
\begin{align}
\bar{g}_1 = \begin{bmatrix}
-\rho\cos\gamma\\
\sin\gamma\\
\sin\gamma
\end{bmatrix}, \quad
g_2 = \begin{bmatrix}
0\\
0\\
-1
\end{bmatrix}\,.
\end{align}
For any $\eta_i \in \mathcal{K}_\infty[0,a_i)$ such that also $\eta_i' \in \mathcal{K}_\infty[0,a_i)$ where $a_i > 0$ is finite or infinite,
and for any continuous positive scalar-valued functions $\varepsilon_1(\rho, \delta, \gamma)$ and $\varepsilon_2(\rho, \delta, \gamma)$, the cost functional
\begin{equation}\label{eq:ioc_J_thrm}
J = \int_0^\infty \Biggl[l(\rho,\delta,\gamma) + \eta_1\left(\frac{|v|}{\varepsilon_1\rho}\right) + \eta_2\left(\frac{|\omega|}{\varepsilon_2}\right)\Biggr] dt\,,
\end{equation}
where
\begin{equation}
l(\rho,\delta,\gamma) = \ell\eta_1(\varepsilon_1|\nu_1|) + \ell\eta_2(\varepsilon_2|\nu_2|)\,,
\end{equation} is minimized by the feedback law
\begin{subequations}
\label{eq:ioc_u*}
\begin{align}
v^* &= -\rho\varepsilon_1 (\eta_1')^{-1}(\varepsilon_1|\nu_1|)\sgn(\nu_1)\\
\omega^* &= -\varepsilon_2 (\eta_2')^{-1}(\varepsilon_2|\nu_2|)\sgn(\nu_2)\,,
\end{align}
\end{subequations}
with $\nu_1(\Xi) \coloneqq L_{\bar{g}_1}V$ and $\nu_2(\Xi) \coloneqq L_{g_2}V$ denoted for the CLFs given by \cite[Thms. 1-8]{Part1_todorovskiCLF2025},
and  
for all initial conditions on the respective state spaces $\mathcal{S},\mathcal{S}_1,\mathcal{S}_2,$ and $\mathcal{S}_3$. Additionally, the feedback law
\begin{subequations}
\label{eq:ioc_u}
\begin{align}
v &= -\rho\varepsilon_1 \frac{\ell\eta_1(\varepsilon_1|\nu_1|)}{\varepsilon_1|\nu_1|}\sgn(\nu_1)\\
\omega &= -\varepsilon_2\frac{\ell\eta_2(\varepsilon_2|\nu_2|)}{\varepsilon_2|\nu_2|}\sgn(\nu_2)\,,
\end{align}
\end{subequations}
is continuous in $(\nu_1,\nu_2)$ and renders the point $\rho=\delta = \gamma = 0$ of the system \eqref{eq:ioc_sys} GAS on the respective state spaces $\mathcal{S},\mathcal{S}_1,\mathcal{S}_2,$ and $\mathcal{S}_3$, in accordance to~\cite[Def.~1]{Part1_todorovskiCLF2025}.
\end{theorem}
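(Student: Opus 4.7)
The plan is to verify the Hamilton--Jacobi--Bellman (HJB) identity for the candidate optimal value function $V$ provided by the CLFs in \cite[Thms.~1--8]{Part1_todorovskiCLF2025}. Because the system \eqref{eq:ioc_sys} is driftless, the Hamiltonian associated with the cost \eqref{eq:ioc_J_thrm} reduces to
\begin{equation*}
H(\Xi,v,\omega) = l(\Xi) + \eta_1\!\left(\tfrac{|v|}{\varepsilon_1\rho}\right) + \eta_2\!\left(\tfrac{|\omega|}{\varepsilon_2}\right) + \nu_1\tfrac{v}{\rho} + \nu_2\omega ,
\end{equation*}
so the HJB equation is simply $\min_{v,\omega} H(\Xi,v,\omega) = 0$. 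I would first minimize $H$ pointwise in $(v,\omega)$. Writing $u_1 := v/\rho$, stationarity of $\eta_1(|u_1|/\varepsilon_1)+\nu_1 u_1$ yields $(1/\varepsilon_1)\eta_1'(|u_1|/\varepsilon_1)\sgn(u_1) = -\nu_1$, and since $\eta_1'>0$ this forces $\sgn(u_1)=-\sgn(\nu_1)$ and $|u_1|=\varepsilon_1(\eta_1')^{-1}(\varepsilon_1|\nu_1|)$; identical algebra applies to $\omega$. This recovers \eqref{eq:ioc_u*} and confirms the minimum is unique (since $\eta_i$ is strictly convex on account of $\eta_i'\in\mathcal{K}_\infty$).

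Next I would check that $H(\Xi,v^*,\omega^*)=0$, which is where the Legendre--Fenchel transform is the workhorse. Using the standard identity $\eta_i(r) + \ell\eta_i(s) \geq rs$ with equality when $s=\eta_i'(r)$ (see \cite[Lemma~A1]{krstic1998inverse}), evaluated at $r=(\eta_i')^{-1}(\varepsilon_i|\nu_i|)$ and $s=\varepsilon_i|\nu_i|$, gives
\begin{equation*}
\eta_i\!\left(\tfrac{|u_i^*|}{\varepsilon_i}\right) + \nu_i u_i^* = -\ell\eta_i(\varepsilon_i|\nu_i|).
\end{equation*}
Summing for $i=1,2$ and substituting the definition of $l(\Xi)$ yields $H(\Xi,v^*,\omega^*) \equiv 0$. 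Since $V$ is positive definite, radially unbounded (after the modification in the final segment of the proof of~\cite[Thm.~3.2]{krstic1998inverse} when needed), and vanishes along the closed-loop trajectories (see next paragraph), the classical verification argument shows $V(\Xi_0)$ equals the achieved cost under $(v^*,\omega^*)$ and lower bounds the cost under any admissible pair; this establishes optimality.

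For the stability claim under the continuous feedback \eqref{eq:ioc_u}, I would substitute into $\dot V = \nu_1 v/\rho + \nu_2\omega$ to obtain
\begin{equation*}
\dot V = -\ell\eta_1(\varepsilon_1|\nu_1|) - \ell\eta_2(\varepsilon_2|\nu_2|) \leq 0,
\end{equation*}
with equality iff $\nu_1=\nu_2=0$. Invoking the driftless-CLF observation made in Section~\ref{sec-alternatives}, namely that $|L_{\bar g}V(\Xi)|$ is positive definite in $\Xi$ so that $\nu_1(\Xi)=\nu_2(\Xi)=0$ forces $\Xi=0$, direct application of the Barbashin--Krasovskii theorem (or LaSalle) delivers GAS on the appropriate state space $\mathcal{S},\mathcal{S}_1,\mathcal{S}_2$, or $\mathcal{S}_3$ corresponding to the chosen CLF from \cite[Thms.~1--8]{Part1_todorovskiCLF2025}. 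Continuity of \eqref{eq:ioc_u} at the origin follows from $\ell\eta_i(r)/r \to (\eta_i')^{-1}(0)=0$ as $r\to 0^+$.

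The main obstacle, and the step meriting the most care, is the Legendre--Fenchel bookkeeping: one must verify that the hypothesis $\eta_i,\eta_i'\in\mathcal{K}_\infty[0,a_i)$ is strong enough for $(\eta_i')^{-1}$ to be well-defined on the range actually attained by $\varepsilon_i|\nu_i(\Xi)|$ along trajectories, so that \eqref{eq:ioc_u*} and \eqref{eq:ioc_u} are well-posed, and that the conjugate-duality equality holds at $u_i^*$ rather than merely the Young-type inequality. Handling the case $a_i<\infty$ (bounded controls) requires confirming the sublevel set $\{V\leq V(\Xi_0)\}$ keeps $\varepsilon_i|\nu_i|$ inside the domain of $(\eta_i')^{-1}$; this is where the flexibility of choosing $\varepsilon_i(\Xi)$ as a state-dependent gain becomes essential and warrants an explicit sublevel-set argument.
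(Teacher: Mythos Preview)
Your proposal is correct and follows essentially the paper's route: both hinge on the Legendre--Fenchel/Young duality, with you framing the optimality step as pointwise HJB verification ($\min_{v,\omega} H = 0$) while the paper instead adds and subtracts $\dot V$ inside $J$ to obtain $J\geq V(\Xi_0)$ with equality at $(v^*,\omega^*)$; the stability and continuity arguments for \eqref{eq:ioc_u} are identical. Your closing concern about the case $a_i<\infty$ is unnecessary: since $\eta_i'\in\mathcal{K}_\infty[0,a_i)$ is by hypothesis onto $[0,\infty)$, the inverse $(\eta_i')^{-1}$ is defined on all of $[0,\infty)$, so \eqref{eq:ioc_u*} is globally well-posed and automatically satisfies $|u_i^*|/\varepsilon_i = (\eta_i')^{-1}(\varepsilon_i|\nu_i|) < a_i$ without any sublevel-set argument.
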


\begin{proof}
\textbf{Step 1:} \eqref{eq:ioc_u} is continuous in $(\nu_1,\nu_2)$ and stabilizes \eqref{eq:ioc_sys}. First, from Definition~\ref{def:legendre_Fenchel} and applying the Leibniz rule yields $ (\ell\eta)'(r) = (\eta')^{-1}(r)$. Then, applying L'Hôpital's rule, we get 
\begin{align}
\label{eq-L'Hopital}
\lim_{r\rightarrow 0}\frac{\ell\eta(r)}{r} = \lim_{r\rightarrow 0}\frac{(\eta')^{-1}(r)}{1}\,.
\end{align}
Since $\eta'(r)$ is a class $\mathcal{K}_\infty$ function, its inverse $(\eta')^{-1}(r)$ is also a class $\mathcal{K}_\infty$ function, which by definition is zero at $r = 0$. Thus, the limit in \eqref{eq-L'Hopital} is  zero  and the feedback law \eqref{eq:ioc_u} is continuous in $(\nu_1,\nu_2)$.

Secondly, plugging in \eqref{eq:ioc_u} into $\dot{V}$ yields
\begin{align}
\dot{V}|_{\eqref{eq:ioc_u}} &= \nu_1 \frac{v}{\rho} + \nu_2\omega \nonumber\\
&= -\ell\eta_1(\varepsilon_1|\nu_1|)  - \ell\eta_2(\varepsilon_2|\nu_2|)< 0\label{eq:ioc_V_stable}\,.
\end{align}
By \cite[Lemma A1]{krstic1998inverse}, $\ell\eta_i$ are class $\mathcal{K}_\infty$ functions, and since $V$ is a strict CLF, it follows by definition that $\nu_1(\Xi) = 0$ and $\nu_2(\Xi) = 0$ if and only if $\Xi = 0$. Hence, \eqref{eq:ioc_V_stable} holds, and the point $\rho = \delta =\gamma = 0$ is GAS under each CLF on its respective state space when using the feedback law \eqref{eq:ioc_u}.

\textbf{Step 2:} \eqref{eq:ioc_u*} stabilizes \eqref{eq:ioc_sys}. Plugging in \eqref{eq:ioc_u*} into $\dot{V}$ yields
\begin{align}
\dot{V}|_{\eqref{eq:ioc_u*}} &= \nu_1 \frac{v^*}{\rho} + \nu_2\omega^* \nonumber\\
&= -\varepsilon_1|\nu_1|(\eta_1')^{-1}(\varepsilon_1|\nu_1|)-\varepsilon_2|\nu_2|(\eta_2')^{-1}(\varepsilon_2|\nu_2|)\label{eq:ioc_V_stable*_1}\,.
\end{align}
From \cite[Lemma A1]{krstic1998inverse}, we see that $r(\eta')^{-1}(r) = \ell \eta(r) + \eta((\eta')^{-1}(r))$ and \eqref{eq:ioc_V_stable*_1} can be rewritten as
\begin{align}
\dot{V}|_{\eqref{eq:ioc_u*}} &= -\sum_i \ell\eta_i(\varepsilon_i|\nu_i|) -\sum_i\eta_i((\eta_i')^{-1}(\varepsilon_i|\nu_i|))\nonumber\\
&\leq -\ell\eta_1(\varepsilon_1|\nu_1|)  - \ell\eta_2(\varepsilon_2|\nu_2|)  = \dot{V}|_{\eqref{eq:ioc_u}}< 0.
\end{align}
Thus, the point $\rho = \delta = \gamma = 0$ is GAS under each CLF on its respective state space when using the feedback law \eqref{eq:ioc_u*}.

\textbf{Step 3:} \eqref{eq:ioc_u*} minimizes \eqref{eq:ioc_J_thrm}. First, we rearrange \eqref{eq:ioc_J_thrm} and consider the fact that $\lim_{t\rightarrow\infty}V(\Xi(t)) = 0$, yielding
\begin{align}
J &= \int_0^\infty \biggl[\ell\eta_1(\varepsilon_1|\nu_1|) + \ell\eta_2(\varepsilon_2|\nu_2|)\nonumber\\
&\qquad \qquad \qquad+ \eta_1\left(\frac{|v|}{\varepsilon_1\rho}\right) + \eta_2\left(\frac{|\omega|}{\varepsilon_2}\right) + \dot{V} - \dot{V}\biggr] dt\nonumber\\
&= V(\Xi(0)) - \lim_{t\rightarrow\infty}V(\Xi(t)) \nonumber\\
&\qquad + \int_0^\infty \biggl[\ell\eta_1(\varepsilon_1|\nu_1|) + \ell\eta_2(\varepsilon_2|\nu_2|) \nonumber\\
&\qquad \qquad \qquad + \eta_1\left(\frac{|v|}{\varepsilon_1\rho}\right) + \eta_2\left(\frac{|\omega|}{\varepsilon_2}\right) + \nu_1\frac{v}{\rho} + \nu_2\omega\biggr] dt\nonumber\\
&= V(\Xi(0)) +\int_0^\infty \biggl[\ell\eta_1(\varepsilon_1|\nu_1|) + \eta_1\left(\frac{|v|}{\varepsilon_1\rho}\right) + \nu_1\frac{v}{\rho}\nonumber\\
&\qquad \quad \qquad +\ell\eta_2(\varepsilon_2|\nu_2|) + \eta_2\left(\frac{|\omega|}{\varepsilon_2}\right) +  \nu_2\omega\biggr] \label{eq:ioc_J_proof_1}dt\,.
\end{align}
Using  \cite[Thm. 156]{hardy_inequalities_1989} and the fact that $\rho > 0$ for all $\mathcal{S},\mathcal{S}_1,\mathcal{S}_2,$ and $\mathcal{S}_3$, we get the following upper bounds
\begin{align}
-\nu_1\frac{v}{\rho} &= \left(\frac{v}{\varepsilon_1\rho }\right)\left(-\varepsilon_1\nu_1\right) \leq \eta_1\left(\frac{|v|}{\varepsilon_1\rho}\right) + \ell\eta_1(\varepsilon_1|\nu_1|)\\
-\nu_2\omega &= \left(\frac{\omega}{\varepsilon_2}\right)\left(-\varepsilon_2\nu_2\right)\leq \eta_2\left(\frac{|\omega|}{\varepsilon_2}\right) + \ell\eta_1(\varepsilon_2|\nu_2|)\,.
\end{align}
Then, multiplying both sides with $-1$ yields
\begin{align}
\nu_1\frac{v}{\rho} &\geq -\eta_1\left(\frac{|v|}{\varepsilon_1\rho}\right) - \ell\eta_1(\varepsilon_1|\nu_1|)\label{eq:ioc_young1}\\
\nu_2\omega &\geq -\eta_2\left(\frac{|\omega|}{\varepsilon_2}\right)- \ell\eta_1(\varepsilon_2|\nu_2|)\label{eq:ioc_young2}\,,
\end{align}
and we observe that the cost functional \eqref{eq:ioc_J_proof_1} achieves a minimum when \eqref{eq:ioc_young1} and \eqref{eq:ioc_young2} are equalities. The result from \cite[Thm. 156]{hardy_inequalities_1989} states that \eqref{eq:ioc_young1} and \eqref{eq:ioc_young2} are equalities if and only if
\begin{align}
x_1 = (\eta_1')^{-1}(|y_1|)\frac{y_1}{|y_1|}, \;\; \text{and}\;\; x_2 = (\eta_2')^{-1}(|y_2|)\frac{y_2}{|y_2|},
\end{align}
where $x_1 \coloneqq \frac{v}{\varepsilon_1\rho},\; y_2 \coloneqq-\varepsilon_1\nu_1, \; x_2 \coloneqq \frac{\omega}{\varepsilon_2},$ and $y_2 \coloneqq-\varepsilon_2\nu_2$, which implies
\begin{align}
\frac{v}{\varepsilon_1\rho} &= -(\eta_1')^{-1}(\varepsilon_1|\nu_1|)\frac{\varepsilon_1\nu_1}{\varepsilon_1|\nu_1|}\nonumber\\
&\implies v = -\rho\varepsilon_1 (\eta_1')^{-1}(\varepsilon_1|\nu_1|)\sgn(\nu_1) = v^*\\
\frac{\omega}{\varepsilon_2} &= -(\eta_2')^{-1}(\varepsilon_2|\nu_2|)\frac{\varepsilon_2\nu_2}{\varepsilon_2|\nu_2|}\nonumber\\
&\implies \omega = -\varepsilon_2 (\eta_2')^{-1}(\varepsilon_2|\nu_2|)\sgn(\nu_2) = \omega^*\,.
\end{align}
Thus, 
the inequalities \eqref{eq:ioc_young1} and \eqref{eq:ioc_young2} are equalities if and only if $v = v^*$ and $\omega = \omega^*$, which yields
\begin{subequations}\label{eq:ioc_young_equality}
\begin{align}
\nu_1\frac{v}{\rho} &= -\eta_1\left(\frac{|v|}{\varepsilon_1\rho}\right) - \ell\eta_1(\varepsilon_1|\nu_1|)\\
\nu_2\omega &= -\eta_2\left(\frac{|\omega|}{\varepsilon_2}\right)- \ell\eta_1(\varepsilon_2|\nu_2|)\,.
\end{align}
\end{subequations}
Substituting \eqref{eq:ioc_young_equality} into \eqref{eq:ioc_J_proof_1} yields $J = V(x(0))$. Thus, \eqref{eq:ioc_u*} minimizes \eqref{eq:ioc_J_thrm} for each CLF on its respective state space.
\end{proof}

While the $L_gV$ controllers achieve optimality, they do not retain the local exponential stabilizing property of the basic stabilizing controllers, showcased in Section X of \cite{Part1_todorovskiCLF2025}. This is evident, for example, from the $L_{\bar g_1}V$ expressions \eqref{eq:Lg1V_Genova} and \eqref{eq:Lg1V_GloBa}, which are locally quadratic in the state near the origin, and therefore cannot be exponentially stabilizing. Such a linearity-lacking dependence is the result of the change of control variable from $v$ to $v/\rho$ and is necessitated by the singularity of the unicycle model at $\rho=0$. 

\subsection{Specific inverse optimal choices}

\begin{table}[t]
\renewcommand{\arraystretch}{2.0}
\centering
\small
\begin{tabular}{|c|c|c|}
\hline
$\eta(r)$ & $(\eta')^{-1}(r)$ & $\dfrac{\ell \eta(r)}{r}$ \\[10pt]
\hline \hline
$\dfrac{r^2}{2}$ & $r$ & $\dfrac{r}{2}$ \\[10pt]
\hline
$\cosh(r) - 1$ & $\operatorname{arcsinh}(r)$ & $\operatorname{arcsinh}(r) - \dfrac{r}{\sqrt{r^2 + 1} + 1}$ \\[10pt]
\hline
$-\ln(\cos(r))$ & $\arctan(r)$ & $\arctan(r) - \dfrac{\ln(1 + r^2)}{2r}$ \\[10pt]
\hline
$\dfrac{e}{e^{{1}/{r}} - e}$ & $\dfrac{1}{1 + \ln\left(1 + 1/r \right)}$ & 
$ \dfrac{1}{r} \displaystyle\int  \dfrac{\text{d} r}{1 + \ln\left(1 + 1/r \right)}
$\\[10pt]
\hline
\end{tabular}
\caption{The cost-on-control functions $\eta(\cdot)$ and the resulting expressions that define  the inverse optimal and stabilizing $L_gV$ controllers.}
\label{tab:cost-on-control-functions}
\end{table}        

\begin{figure*}[t]
\centering
\begin{subfigure}[t]{0.24\textwidth}
\centering
\includegraphics[width=\textwidth]{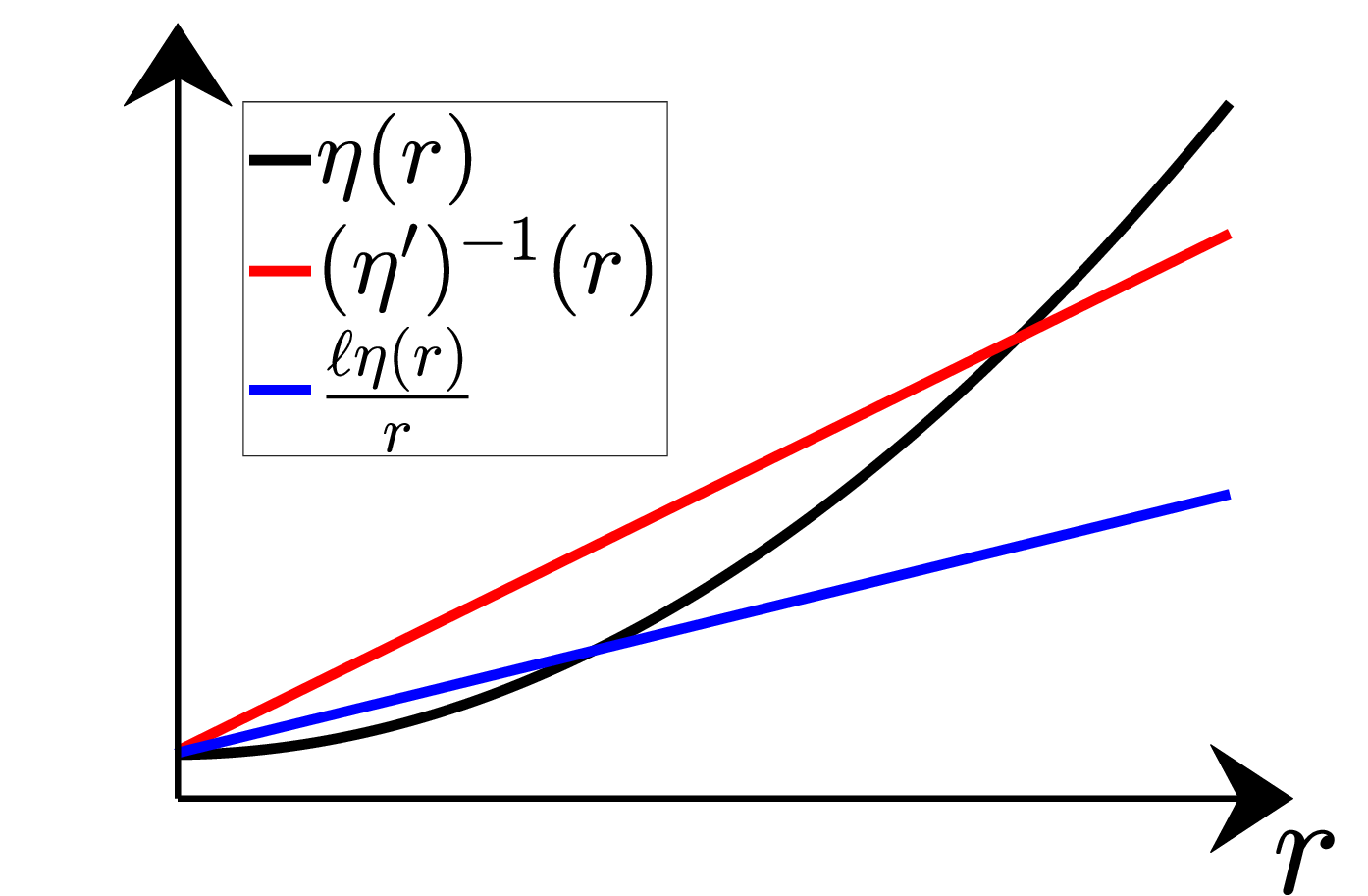}
\caption{$\eta(r) = r^2 / 2$ in (black), $(\eta^{\prime})^{-1}(r) = r$ in (red) and $\frac{\ell \eta (r)}{r} = r/2$ in (blue).}
\label{fig:eta_1}
\end{subfigure}
\begin{subfigure}[t]{0.24\textwidth}
\centering
\includegraphics[width=\textwidth]{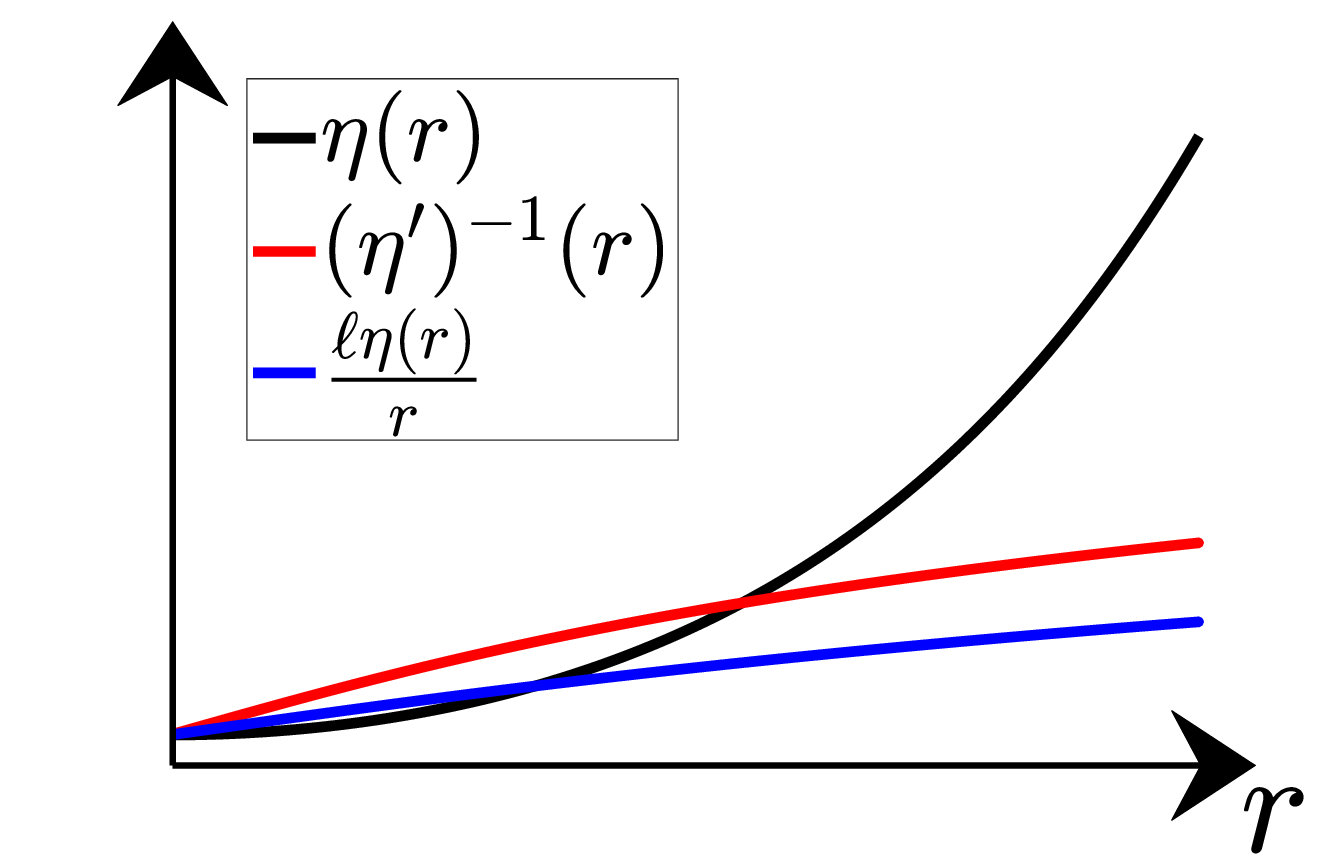}
\caption{$\eta(r) = \cosh(r) - 1$ in (black), $(\eta^{\prime})^{-1}(r) = {\rm arcsinh}(r)$ in (red) and $\frac{\ell \eta (r)}{r} = {\rm arcsinh}(r) - r/(1+\sqrt{r^2+1})$ in (blue).}
\label{fig:eta_2}
\end{subfigure}
\begin{subfigure}[t]{0.24\textwidth}
\centering
\includegraphics[width=\textwidth]{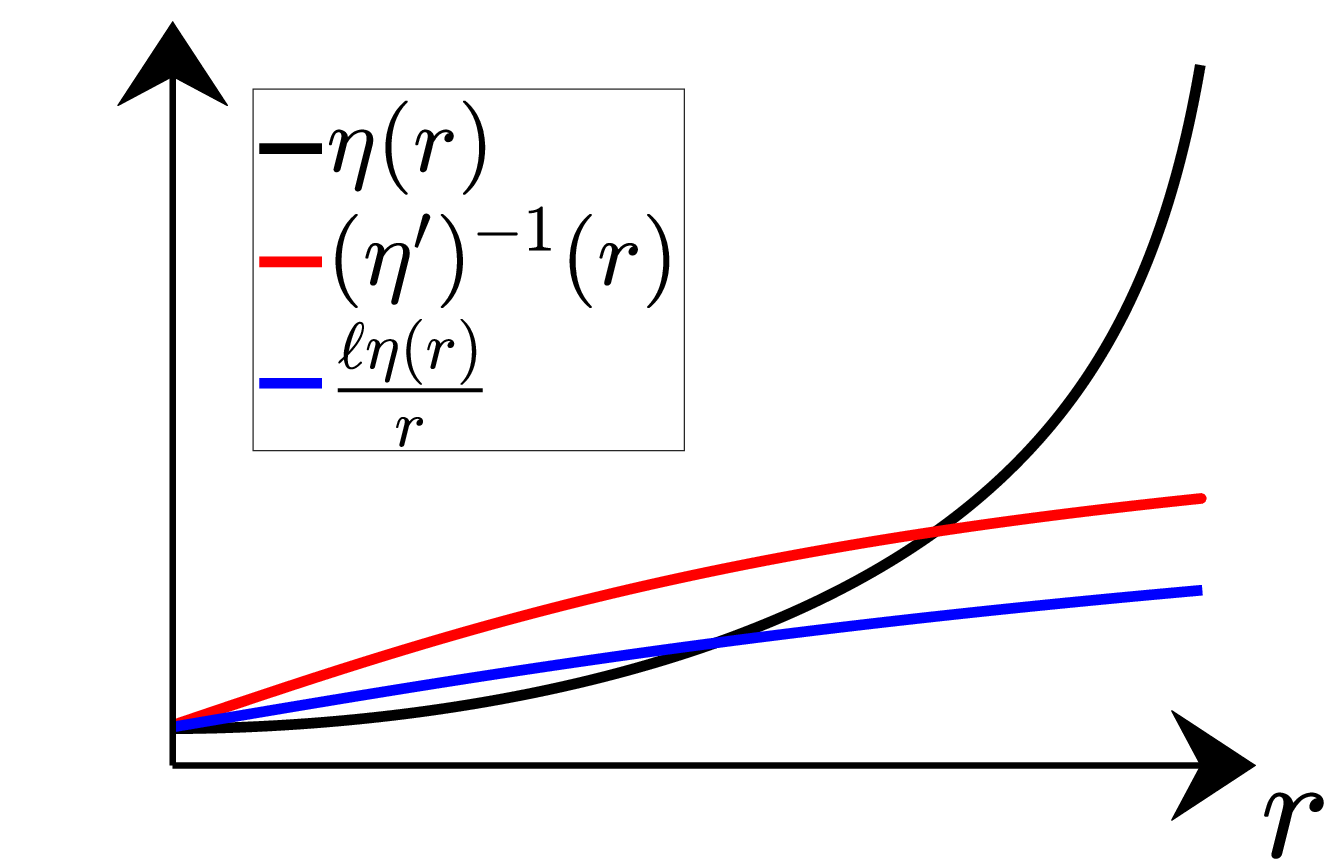}
\caption{$\eta(r) = -\ln(\cos(r))$ in (black), $(\eta^{\prime})^{-1}(r) = \arctan(r)$ in (red) and $\frac{\ell \eta (r)}{r} = \arctan(r) - \ln(1+r^2)/2r$ in (blue).}
\label{fig:eta_3}
\end{subfigure}
\begin{subfigure}[t]{0.24\textwidth}
\centering
\includegraphics[width=\textwidth]{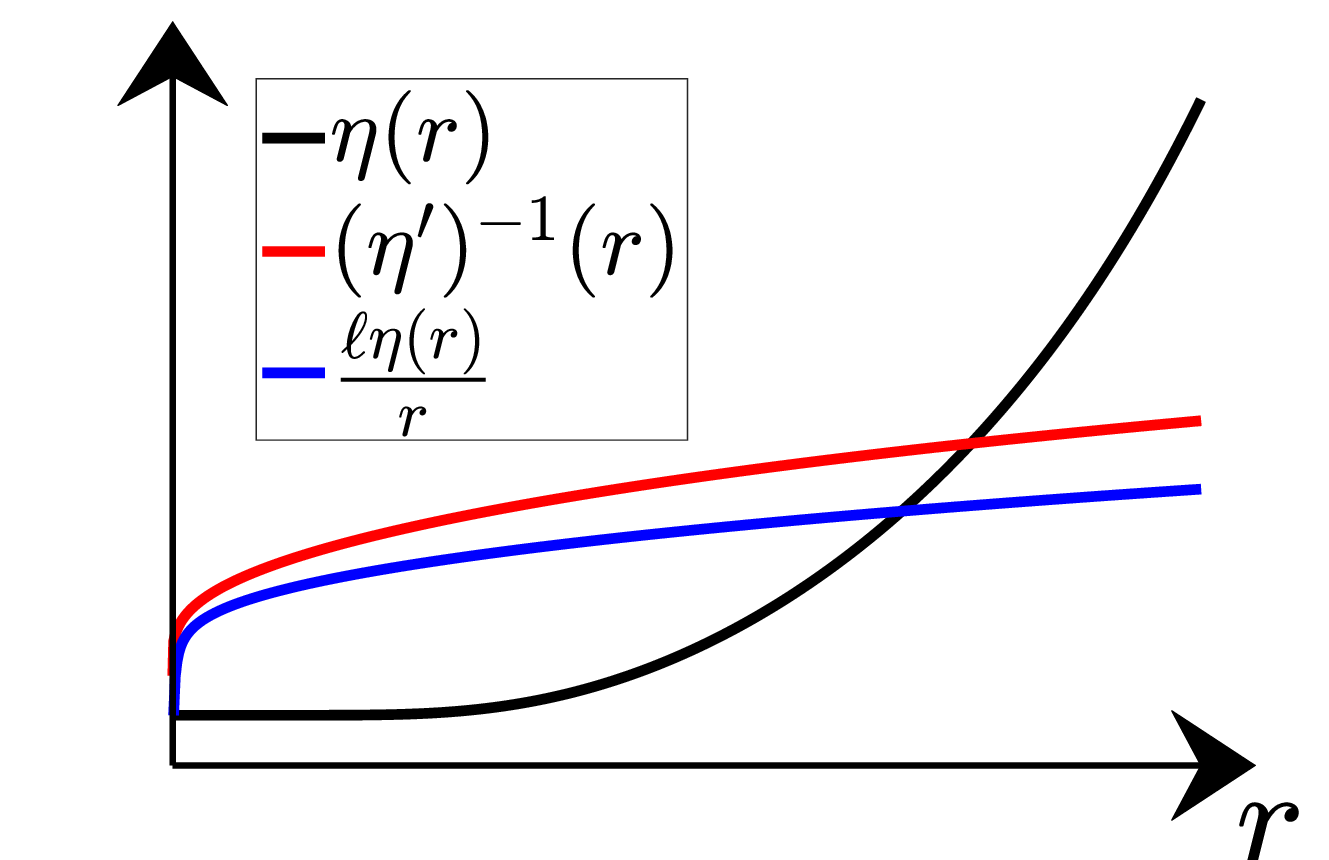}
\caption{$\eta(r) = e/(e^{1/r} - e)$ in (black), $(\eta^{\prime})^{-1}(r) = 1/(1+\ln(1+1/r))$ in (red) and $\frac{\ell \eta (r)}{r} = \frac{1}{r} \int \frac{{\rm d}r}{1+\ln(1+1/r)}$ in (blue).}
\label{fig:eta_4}
\end{subfigure}

\caption{Cost-on-control functions from Table \ref{tab:cost-on-control-functions}.}
\label{fig:cost-on-control functions}
\end{figure*}

The design space of the stabilizer \eqref{eq:ioc_u} and its optimal counterpart \eqref{eq:ioc_u*} includes many parameters and infinitely many CLF choices. An exhaustive exploration is impossible, but one can explore qualitatively the tradeoff between the control effort and the related cost on the state.

We present four examples based on different choices of the function $\eta$, each leading to a distinct inverse optimal controller. Example~\ref{example:IOC1} yields a controller linear in $L_{\bar{g}_1}V$ and $L_{g_2}V$, already introduced in Section \ref{sec-basic-quadratic}, Example~\ref{example:IOC2} produces a sublinear relation of control with  $L_{\bar{g}_1}V$ and $L_{g_2}V$, and Examples~\ref{example:IOC3} and \ref{example:IOC4} result in bounded control laws. While Theorem \ref{thrm:IOC} does not require uniformity in the choice of $\eta_1$ and $\eta_2$, the following examples use the same function for both to highlight each behavior, and  the subscript on $\eta(r)$ is omitted henceforth. The four choices of $\eta$ are summarized in Table~\ref{tab:cost-on-control-functions}, while Figure~\ref{fig:cost-on-control functions} illustrates how the functions increase with respect to the input, reflecting the corresponding rise in control effort cost.

Additionally, since inverse optimal redesign relies on $L_gV$-based controllers, different CLFs naturally lead to different system behaviors. Hence, we conduct simulation comparisons between the examples using a particular GloBa CLF in the following corollary derived from \cite[Thm. 7]{Part1_todorovskiCLF2025}:
\begin{corollary}\label{cor:com_CLF}
For the system \eqref{eq:unicycle_polar}, the composite Lyapunov function
\begin{equation}\label{eq:comp_CLF}
V(\rho,\delta,\gamma) = \sqrt{1 + k_1\rho^2} + \sqrt{1+V_{\delta \gamma}(\delta,\gamma)} - 2\,,
\end{equation}
where
\begin{equation}
V_{\delta \gamma}(\delta,\gamma) = \delta^2 + k_3 \left(\gamma + \frac{1}{2}\arctan(2k_2 \delta) \right)^2\,,
\end{equation}
with $k_1,k_2,k_3 >0$ is a globally strict CLF on $\mathcal{S}$ for \eqref{eq:unicycle_polar} with respect to the input pair $(v/\rho, \omega)$ in the sense of \cite[Def.~2]{Part1_todorovskiCLF2025}.
\end{corollary}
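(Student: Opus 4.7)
The plan is to verify the three conditions that make $V$ a globally strict CLF for the driftless unicycle \eqref{eq:unicycle_polar} with respect to the input pair $(v/\rho,\omega)$, in the sense of~\cite[Def.~2]{Part1_todorovskiCLF2025}: (i) $V$ is positive definite and radially unbounded on $\mathcal{S}$, (ii) $V$ is smooth on $\mathcal{S}$, and (iii) the nondegeneracy condition $L_{\bar g_1}V(\Xi)=0$ and $L_{g_2}V(\Xi)=0$ jointly imply $\Xi=0$. For a driftless system, (iii) is \emph{the} CLF property, as recalled in Section~\ref{sec-alternatives}. Conditions (i) and (ii) are immediate: the scalar map $s\mapsto\sqrt{1+s}-1$ is smooth, strictly increasing on $[0,\infty)$, vanishes at $s=0$, and grows like $\sqrt{s}$, so $V$ inherits positive definiteness and radial unboundedness from the fact that $k_1\rho^2$ and $V_{\delta\gamma}(\delta,\gamma)$ are individually positive definite and radially unbounded on $\mathcal{S}$, as already established in the proof of~\cite[Thm.~7]{Part1_todorovskiCLF2025}.

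For (iii) I would compute the two Lie derivatives in order. Only $\partial V/\partial\gamma$ contributes to $L_{g_2}V$, and a direct chain-rule calculation gives
\[
L_{g_2}V \;=\; -\frac{k_3\,z}{\sqrt{1+V_{\delta\gamma}}}, \qquad z := \gamma+\tfrac12\arctan(2k_2\delta).
\]
Since the denominator is strictly positive everywhere, $L_{g_2}V=0$ if and only if $z=0$, i.e., $\gamma=-\tfrac12\arctan(2k_2\delta)$. On this constraint manifold, $V_{\delta\gamma}$ collapses to $\delta^2$, $\partial V_{\delta\gamma}/\partial\gamma=0$, and $\partial V_{\delta\gamma}/\partial\delta=2\delta$, so a second chain rule yields
\[
L_{\bar g_1}V\big|_{z=0} \;=\; -\frac{k_1\rho^2\cos\gamma}{\sqrt{1+k_1\rho^2}}+\frac{\delta\sin\gamma}{\sqrt{1+\delta^2}}.
\]

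The decisive step is a sign analysis on $\{z=0\}$. Because $\arctan(2k_2\delta)\in(-\pi/2,\pi/2)$, the constraint confines $\gamma$ to the open interval $(-\pi/4,\pi/4)$, so $\cos\gamma>0$ everywhere; moreover, $\gamma$ and $\delta$ have opposite signs, with $\gamma=0$ iff $\delta=0$, so $\delta\sin\gamma\le 0$ with equality iff $\delta=0$. Both terms of $L_{\bar g_1}V|_{z=0}$ are therefore nonpositive and cannot cancel: $L_{\bar g_1}V=0$ forces $\rho=0$ and $\delta=0$, which combined with $z=0$ yields $\gamma=0$. Condition (iii) is verified, and $V$ is a strict CLF on $\mathcal{S}$.

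The main obstacle I anticipate is precisely this simultaneous-sign argument, since a carelessly chosen phase shift could allow $\gamma$ to fall outside $(-\pi/2,\pi/2)$, flip the sign of $\cos\gamma$, and let $L_{\bar g_1}V$ vanish along a nontrivial curve, spoiling the CLF property. The factor $\tfrac12$ inside $\arctan(2k_2\delta)$, inherited from the GloBa construction of~\cite[Thm.~7]{Part1_todorovskiCLF2025}, is the structural feature that guarantees the two contributions carry the same sign and therefore that the composite square-root-of-quadratic Lyapunov function is a genuine strict CLF rather than merely a formally positive definite function.
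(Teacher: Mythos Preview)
Your argument is correct. The paper itself does not prove this corollary; it simply states that the function \eqref{eq:comp_CLF} is an instance of the composite GloBa CLFs supplied by \cite[Thm.~7]{Part1_todorovskiCLF2025}, and inherits the strict CLF property from that general result. Your route is a direct verification: you compute $L_{g_2}V$ and $L_{\bar g_1}V$ for this particular $V$, restrict to the set $\{L_{g_2}V=0\}=\{z=0\}$, and then exploit the key structural fact that on $\{z=0\}$ one has $\gamma=-\tfrac12\arctan(2k_2\delta)\in(-\pi/4,\pi/4)$, so that $\cos\gamma>0$ and $\delta\sin\gamma\le 0$, forcing both summands of $L_{\bar g_1}V|_{z=0}$ to be nonpositive and to vanish only at the origin. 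This is exactly the sign mechanism that underlies the GloBa construction, so your computation is not genuinely different from what the companion paper's Theorem~7 encodes---you have simply unpacked the argument for this specific composite reshaping $s\mapsto\sqrt{1+s}-1$ rather than invoking the general statement. The benefit of your approach is self-containedness; the benefit of the paper's citation is brevity and the observation that any monotone $C^1$ reshaping of the $\rho$- and $(\delta,\gamma)$-parts would work equally well.
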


\begin{example}\label{example:IOC1} \textbf{(Linear in $L_{\bar{g}_1}V$ and $L_{g_2}V$)}
Consider the quadratic cost function defined by
\begin{align}\label{eq:eta_quad}
\eta(r) = \frac{r^2}{2}
\,,
\end{align} 
with the derivative inverse $(\eta')^{-1}(r) = r$ and the Legendre-Fenchel transform of $\ell\eta = \frac{r^2}{2}$. With this choice $\eta$, one finds the classical quadratic cost on both the state running cost and the control, 
as detailed in the basic quadratic inverse optimal controller \eqref{eq-u*-quad0} in Section~\ref{sec-basic-quadratic}, with control linear in $L_{\bar{g}_1}V$ and $L_{g_2}V$.

The simulation results for this controller with $\varepsilon_1 = \varepsilon_2 = 1$, shown in Fig~\ref{fig:ioc_trajectory}, has a clear contrast to the analytically derived controllers in \cite{Part1_todorovskiCLF2025} (see Fig. 4 in \cite{Part1_todorovskiCLF2025}). Notably, for many initial conditions, the controller chooses to reverse at first, adjusting its position to achieve a more favorable alignment before moving forward toward the target. This differs from the analytical controller, which tended to favor minimal deviation from the shortest path. Furthermore, for the initial condition where the unicycle starts parallel and directly above the target (cyan), the behavior resembles a conventional parallel parking maneuver (albeit reversed), more aligned with how a human driver would intuitively approach the task.

\begin{figure}[t]
\centering
\includegraphics[width=\linewidth]{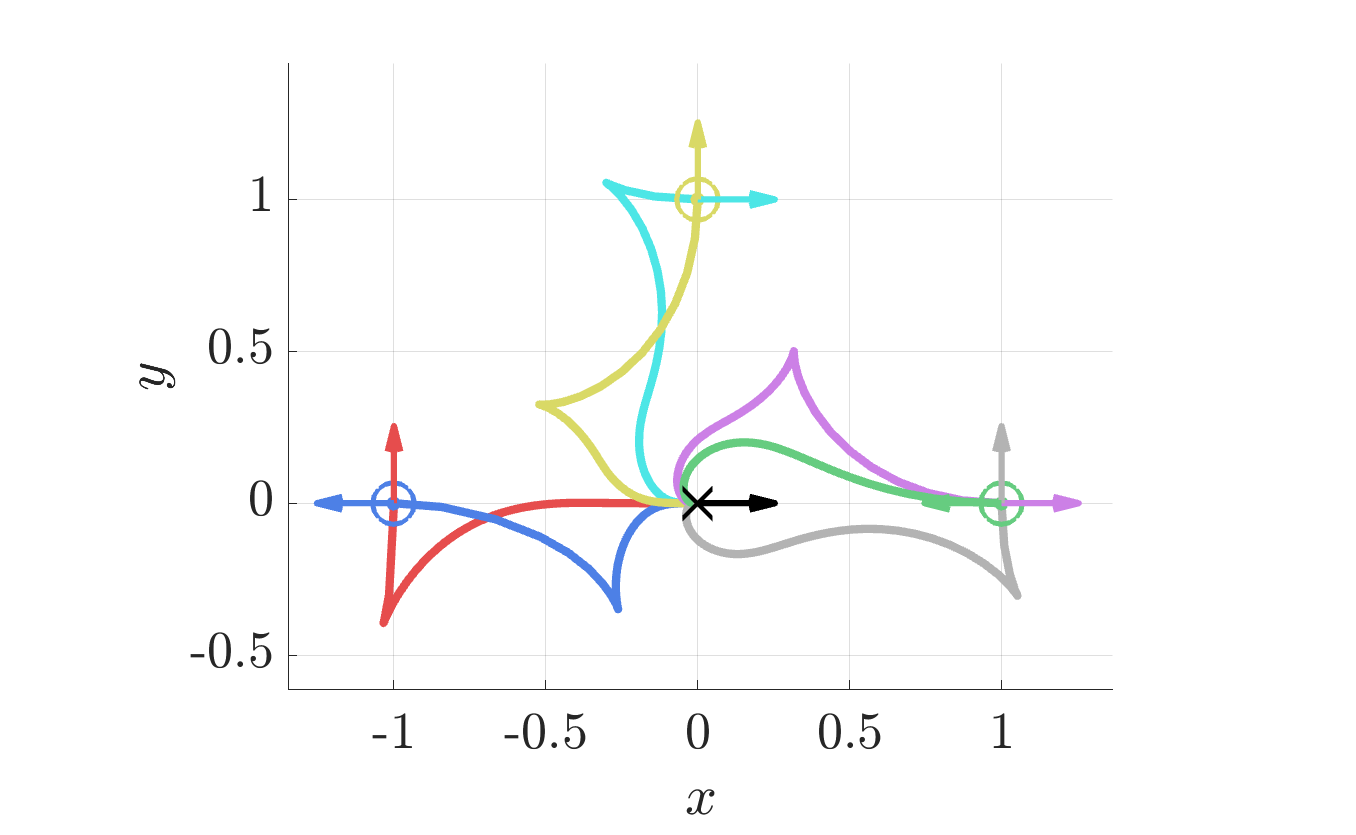}
\caption{Trajectory of the inverse optimal controller designed using the CLF~\eqref{eq:comp_CLF} with gains $(k_1,k_2,k_3) = (6.5,3,7)$ and the quadratic control cost~\eqref{eq:eta_quad}. The trajectory differences between Examples~\ref{example:IOC1}–\ref{example:IOC4} are negligible and therefore omitted.}
\label{fig:ioc_trajectory}
\end{figure}

However, quadratic costs on control and state may be naive for nonlinear systems in general, particularly for the unicycle. Although they penalize input over the time horizon, they can still yield occasional large values outside actuator limits. As shown in Fig.~\ref{fig:ioc_controleffort_ex1}, this results in forward velocity magnitudes up to about $10$ units/s and steering rate magnitudes up to about $2.5$ units/s---values not unreasonable but potentially beyond actuator range. In the next three examples, we present alternatives with stronger control penalties. Since the trajectories $y(x)$ remain nearly identical across cases, we focus instead on the differences in control effort.
\end{example}


\begin{figure}[t]
\centering
\begin{subfigure}{0.5\textwidth}
\centering
\includegraphics[width=.9\textwidth]{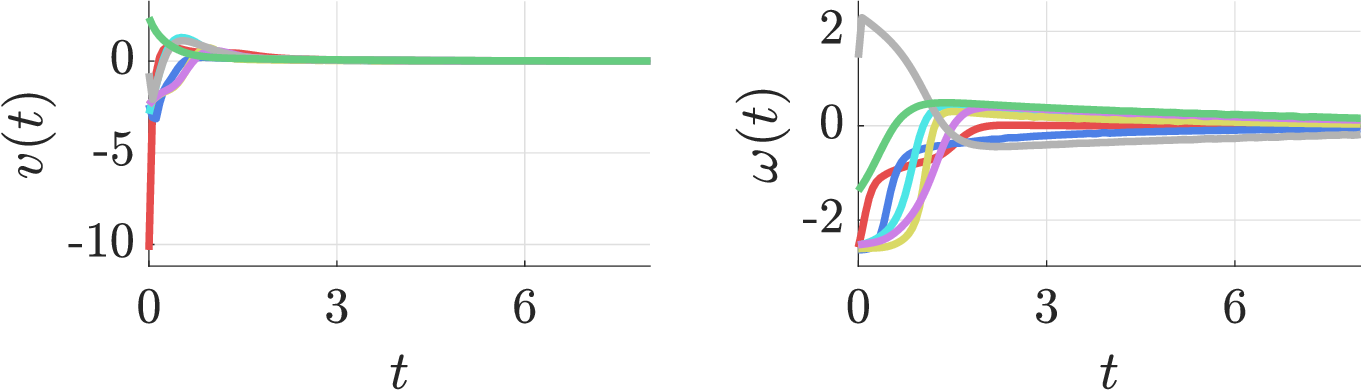}
\caption{Control effort of the linearly proportional to $L_{\bar{g}_1}V$ and $L_{g_2}V$ controller \eqref{eq-u*-quad0}.}
\label{fig:ioc_controleffort_ex1}
\end{subfigure}
\vspace{0.25cm}
\begin{subfigure}{0.5\textwidth}
\centering
\includegraphics[width=.9\textwidth]{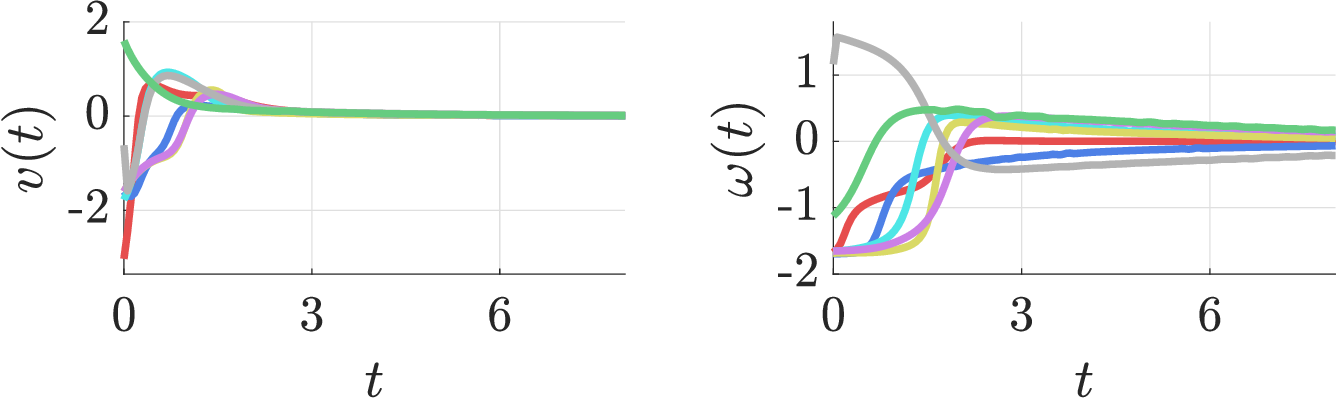}
\caption{Control effort of the sublinearly proportional to $L_{\bar{g}_1}V$ and $L_{g_2}V$ controller \eqref{eq-u*-cosh}.}
\label{fig:ioc_controleffort_ex2}
\end{subfigure}
\caption{Comparison in control effort between Example~\ref{example:IOC1} and Example~\ref{example:IOC2} with $\varepsilon_1 = \varepsilon_2 = 1$.}
\label{fig:ioc_ex1_vs_ex2}
\end{figure}

\begin{example}\label{example:IOC2} \textbf{(Sublinear/logarithmic in $L_{\bar{g}_1}V$ and $L_{g_2}V$)}
Consider the hyperbolic cosine cost function defined as
\begin{align}\label{eq:eta_hyperbolic}
\eta(r) = \cosh(r) - 1
\,,
\end{align}
with the associated derivative inverse $(\eta')^{-1}(r) = \mathrm{arcsinh}(r)$ and the Legendre-Fenchel transformation $\ell\eta(r) =  r \,\mathrm{arcsinh}(r)+ 1 - \sqrt{1+r^2}$. 
The  cost $\eta(r) = \cosh(r) - 1$ imposes a larger penalty when the control effort is large as compared to the quadratic cost in Example~\ref{example:IOC1}. This is reflected in the minimizer feedback
\begin{subequations}\label{eq-u*-cosh}
\begin{align}
v^* &= -\rho\varepsilon_1\mathrm{arcsinh}\left(\varepsilon_1L_{\bar{g}_1}V\right)\\
\omega^* &= -\varepsilon_2\mathrm{arcsinh}\left(\varepsilon_2L_{g_2}V\right)\,,
\end{align}
\end{subequations}
where compared to the linear optimal controller in \eqref{eq-u*-quad0}, the feedback law in \eqref{eq-u*-cosh} exhibits a sublinear dependence on $L_{\bar{g}_1}V$ and $L_{g_2}V$. Specifically, as $q = \varepsilon_iL_{g_i}V$ grows in magnitude, we observe that
\begin{align}
{\mathrm{arcsinh}(q)} &= {\ln\left(\sqrt{1+|q|^2}+ |q| \right)}\sgn(q)\nonumber\\
&\approx {\ln(1+|q|)}\sgn(q)\,,
\end{align}
indicating that the feedback scales logarithmically with $L_{\bar{g}_1}V$ and $L_{g_2}V$. This implies that the controller grows more slowly than linearly with respect to $L_{\bar{g}_1}V$ and $L_{g_2}V$, and thus applies less aggressive control effort. The feedback in \eqref{eq-u*-cosh} can be interpreted as a less effort-intensive alternative to the linear controller in \eqref{eq-u*-quad0}. 
This is reflected in Fig~\ref{fig:ioc_controleffort_ex2}, with $\varepsilon_1 = \varepsilon_2 = 1$, showing a nearly threefold reduction in control effort in the forward velocity input.
\end{example}

\begin{example}\label{example:IOC3}\textbf{(Bounded/saturating control)} While the hyperbolic cost on control effort significantly reduces the input magnitudes, one might still desire a stronger guarantee by directly enforcing a bound on the maximum allowable control input. 
To achieve bounded optimal control, we require that the inverse of $\eta'$ in the general expression for the optimal control be a bounded function. For instance, setting $(\eta')^{-1}(r) = \arctan(r)$ leads to
\begin{equation}\label{eq:eta_ln}
\eta(r) = - \ln(\cos(r))
\,,
\end{equation}
which is of class $\mathcal{K}_\infty$ on the interval $[0,\pi/2)$. 

This choice implies that the control cost becomes unbounded as $|v|$ approaches the limit $\rho\varepsilon_1\pi/2$ and $|\omega|$ approaches the limit $\varepsilon_2 \pi/2$, as seen in the optimal controller given by
\begin{subequations}\label{eq:u*_ln}
\begin{align}
v^* &= -\rho\varepsilon_1\arctan\left(\varepsilon_1L_{\bar{g}_1}V\right)\\
\omega^* &= -\varepsilon_2\arctan\left(\varepsilon_2L_{g_2}V\right)\,.
\end{align}
\end{subequations}
While the upper bound $|\omega| \leq \overline{\omega}$ is easily enforced by selecting $\varepsilon_2 = 2\overline{\omega}/\pi$, the upper bound $|v| \leq \overline{v}$ is more subtle. This is because the limiting value $\rho \varepsilon_1 \pi/2$ depends on $\rho$. To address this, we define
\begin{align}
\label{eq-eps1-rho}
\varepsilon_1(\rho) = \frac{\overline{v}}{\sigma + \rho} \frac{2}{\pi} > 0\,,
\end{align}
where $\sigma > 0$ is a small constant, to ensure the bound holds. Fig~\ref{fig:ioc_controleffort_ex3} illustrates the result for $\bar{v} = \bar{\omega} = 1$ and $\sigma = 0.3$, showing that neither control inputs exceeds magnitude $1$.

The running cost on the state is then $\ell\eta(r) = r \arctan(r) - \frac{1}{2}\ln(1+r^2)\approx \frac{\pi}{2}r$ as $r \rightarrow \infty$ indicating that the state incurs a linearly increasing penalty for large values of $L_{\bar{g}_1}V$ and $L_{g_2}V$, while remaining relatively tolerant near $0$. 
\end{example}


\begin{figure}[t]
\centering
\begin{subfigure}{0.5\textwidth}
\centering
\includegraphics[width=.9\textwidth]{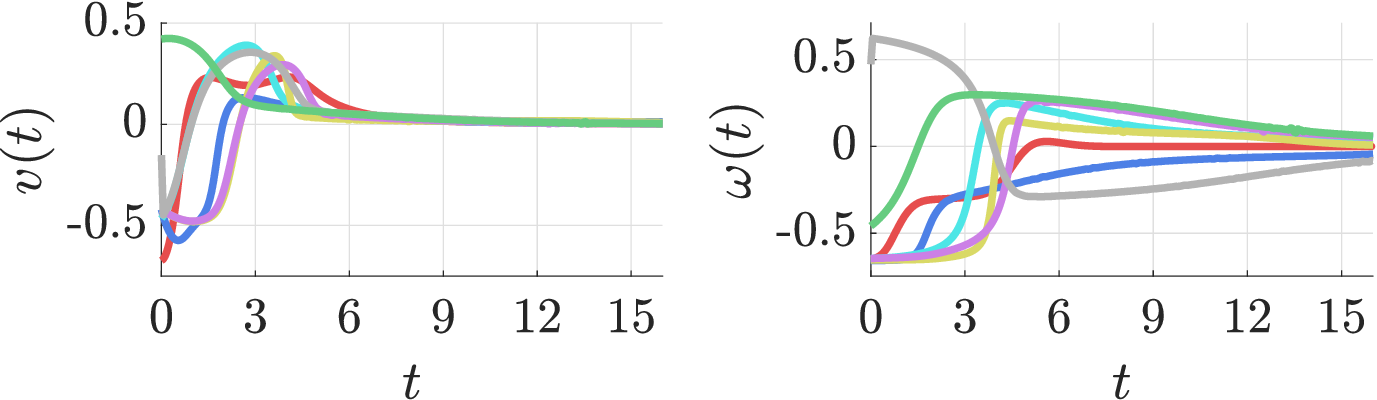}
\caption{Control effort of the bounded controller \eqref{eq:u*_ln}.}
\label{fig:ioc_controleffort_ex3}
\end{subfigure}
\vspace{0.25cm}
\begin{subfigure}{0.5\textwidth}
\centering
\includegraphics[width=.9\textwidth]{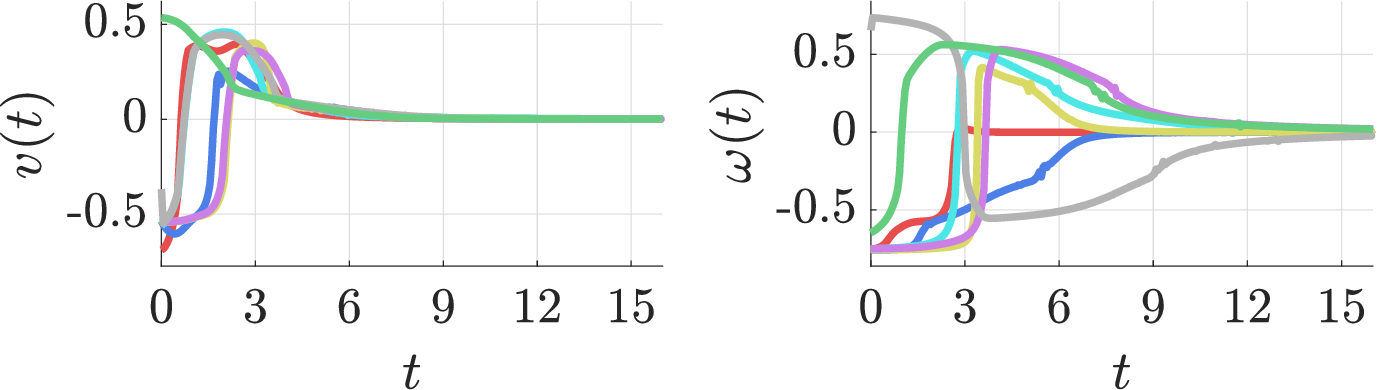}
\caption{Control effort of the bounded controller \eqref{eq:u*_bangbang}.}
\label{fig:ioc_controleffort_ex4}
\end{subfigure}
\caption{Comparison in control effort between Example~\ref{example:IOC3} and Example~\ref{example:IOC4}  with $\bar v = \bar \omega = 1$.}
\label{fig:ioc_ex3_vs_ex4}
\end{figure}

\begin{example}\label{example:IOC4}\textbf{(``Relay-approximating'' bounded control)}
Alternative to Example~\ref{example:IOC3}, consider the non-quadratic cost function defined by
\begin{align}\label{eq:eta_bangbang}
\eta(r) &= \frac{e}{e^{1/r} - e}
\,,
\end{align}
which represents the control penalty, is infinitely flat near the origin and blows up at $r = 1$. This means that small control magnitudes are penalized very lightly, while magnitudes approaching 1 incur an infinite cost, effectively enforcing a hard input constraint.

The state penalty $\ell\eta(r)$, while only numerically accessible, has infinite slope at $r = 0$, indicating that any nonzero value of the state incurs a large cost. However, the function grows slowly for large arguments, revealing a cost structure that strongly discourages small errors but is relatively tolerant of large deviations.

The associated derivative inverse $ (\eta')^{-1}(r) = \frac{1}{1+\ln\left(1+\frac{1}{r}\right)}$ is continuous, monotonically increasing, bounded by 1, and has an infinite derivative at the origin. It closely resembles a smoothed approximation of the discontinuous signum function commonly used in sliding mode control. This behavior, characterized by high gain for small arguments and saturation for large ones, is typical of feedback laws arising in time-optimal control. Although the cost here is not explicitly time-optimal, it captures similar characteristics, including finite-time convergence under bounded input. Hence, the resulting optimal feedback law is given as
\begin{subequations}\label{eq:u*_bangbang}
\begin{align}
v^* = -\rho\varepsilon_1\frac{\sgn(L_{\bar{g}_1}V)}{1+\ln\left(1+ \dfrac{1}{\varepsilon_1 |L_{\bar{g}_1}V|}\right)}\\
\omega^* = -\varepsilon_2\frac{\sgn(L_{g_2}V)}{1+\ln\left(1+ \dfrac{1}{\varepsilon_2 |L_{g_2}V|}\right)}\,.
\end{align}
\end{subequations}
However, for similar reasons as in Example~\ref{example:IOC3}, to guarantee the upper bound $|v| \leq \overline{v}$ and $|\omega| \leq \overline{\omega}$, we choose
\begin{align}
\varepsilon_1(\rho) &= \frac{\overline{v}}{\sigma + \rho} > 0\,,
\end{align}
for small $\sigma > 0$ and $\varepsilon_2 = \overline{\omega}$. Fig~\ref{fig:ioc_controleffort_ex4} shows the simulation with $\bar{v} = \bar{\omega} = 1$ and $\sigma = 0.3$. Notably, unlike the concentrated, high initial effort observed in Example~\ref{example:IOC3}, the control law \eqref{eq:u*_bangbang} produces a more evenly sustained control effort over time. The settling times on the controls in Fig~\ref{fig:ioc_controleffort_ex3} and Fig~\ref{fig:ioc_controleffort_ex4} are different due to this more sustained control effort. This behavior is due to the cost function \eqref{eq:eta_bangbang}, which assigns minimal cost to small inputs, thereby tolerating low-magnitude control efforts over the trajectory. One can certainly not expect the achievement of fixed-time (FxT) stabilization under bounded control, but only finite-time (FT) stabilization. 
\end{example}

\begin{figure}[t]
\centering
\includegraphics[width=.7\linewidth]{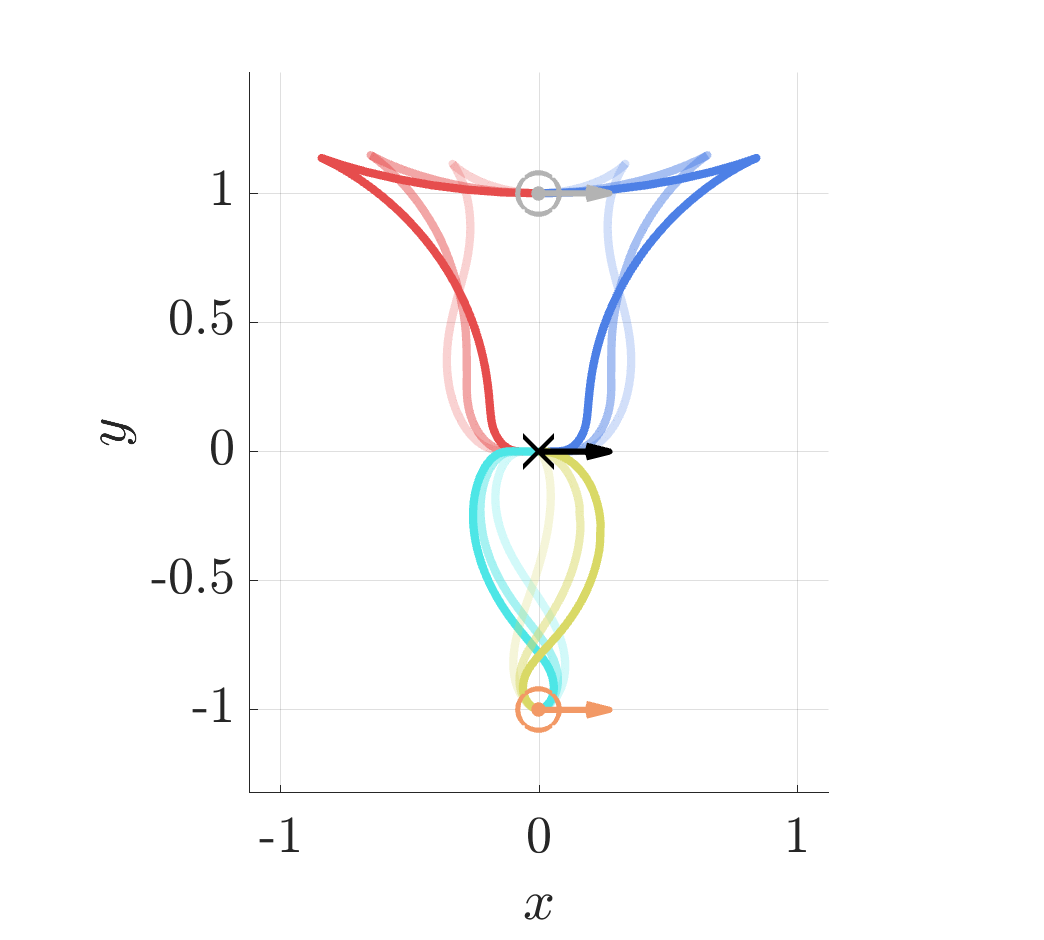}
\caption{Parallel parking trajectories for forward $L_gV$ control (red), reverse $L_gV$ control (blue), forward GloBa control~\cite[Thm. 7]{Part1_todorovskiCLF2025} (cyan), and reverse BoLSA control~\cite[Thm. 2]{Part1_todorovskiCLF2025} (yellow) with opacity representing different gain choices.}
\label{fig:parallel_ops}
\end{figure}

\paragraph{Gain margin and robustness to actuator saturation} Optimality is a benefit of inverse optimality. But the key practical benefit of the $L_gV$ inverse optimal control is the {\em gain margin}, i.e., the robustness to the uncertainty in the input coefficients. In the unicycle model, the coefficients on the inputs $(v,\omega)$ are taken, implicitly, as $(1,1)$. In the simplest of the inverse optimal controllers \eqref{eq-LQ-forward}, the gains $\varepsilon_1,\varepsilon_2>0$ are arbitrary, which means that the inverse optimal controller has a gain margin of $(0,\infty)$. This is a gain margin greater than the conventional gain margin $[1/2,\infty)$ of optimal controllers for systems with drift, namely, the property that one can reduce the optimal controller's gain by up to 2 and still preserve stability. The unusually strong gain margin $(0,\infty)$, which allows the reduction of the optimal controller's gain for the unicycle to an {\em arbitrarily low} value is a consequence of the unicycle system being driftless. 

Inverse optimality doesn't afford only robustness to the uncertainty in the input coefficient. It also enables operation under arbitrarily low input saturation, as we observe in Examples~\ref{example:IOC3} and \ref{example:IOC4}. For example, taking $v^*$ as in \eqref{eq:u*_ln}, \eqref{eq-eps1-rho}, we get the  feedback
\begin{align}
v^* &= \bar v \ \underbrace{\frac{\rho}{\sigma + \rho} \frac{2}{\pi}\arctan\left(\frac{\bar v}{\sigma + \rho}\frac{2}{\pi} \textcolor{blue}{L_{\bar{g}_1}V}
\right)}_{\mbox{\small magnitude $<1$}}\\
\omega^* &= \bar\omega\ \underbrace{\frac{2}{\pi}\arctan\left(\bar\omega\frac{2}{\pi}\textcolor{blue}{L_{g_2}V}
\right)}_{\mbox{\small magnitude $<1$}}\,,
\end{align}
where $\bar v, \bar\omega$ are the maximum forward and angular velocities, respectively, which the vehicle's propulsion and steering actuators can deliver, which may be arbitrarily low. So, the feedback always acts in the right direction, which is determined by the sign of the expressions in blue under the $\arctan$. The $L_{\bar{g}_1}V$ and $L_{g_2}V$ terms in blue (with full expressions given in~\eqref{eq-u*-quad0}) always deliver the ``correct directions'' for control by design: they are made to do so by our construction of the CLF $V(\rho,\delta,\gamma)$. 

\paragraph{Nonholonomic parallel parking} Parallel parking has long been considered a benchmark maneuver for nonholonomic vehicles due to its intrinsic difficulty requiring precise coordination of orientation and position, often in confined spaces. It captures the core difficulty of planning and control for systems that cannot move freely in all directions, such as cars or mobile robots. With a wide range of control laws and infinitely many configurable parameters in both this paper and~\cite{Part1_todorovskiCLF2025}, our framework offers numerous options that can be tailored to the specific needs of a given task. Fig~\ref{fig:parallel_ops} highlights several representative examples from this toolbox. The $L_gV$ controllers exhibit behaviors that closely resemble human-like parallel parking, with the reverse $L_gV$ controller exactly representing the traditional approach used by drivers. In contrast, the standard GloBa and reversed BoLSA controllers~\cite[Thms. 2 and 7]{Part1_todorovskiCLF2025} produce smoother ``S''-shaped trajectories that are more appropriate for maneuvering in tighter spaces.


\section{Adaptive Stabilization under Unknown Input Coefficients}\label{sec:adapt}

In view of the infinite gain margin as well as the $L_gV$ form feedback control law, in this section, we develop an adaptive control law for the unicycle model with model uncertainties.

Consider the model
\begin{subequations}
\label{eq:unicycle_polar_closed_loop-Gv-slip}
\begin{align}
\dot{\rho} &=  -b_1v \cos\gamma\\
\dot{\delta} &=  b_1\frac{v}{\rho}\sin(\gamma)
\\
\dot{\gamma} &= b_1\frac{v}{\rho}\sin(\gamma)
-b_2 \omega
\,,
\end{align}
\end{subequations}
where the constants $b_1,b_2$ are positive but otherwise completely unknown. For example, $b_1,b_2 \in (0,1]$ may physically represent {\em unknown} wheel slippage coefficients (on a two-wheeled mobile robot). 

Recall from \eqref{eq-u*-quad0} that
\begin{subequations}\label{eq-u*-quad0a}  
\begin{align}
\label{eq-LQ-forwarda}
L_{\bar{g}_1}V &:=  -\dfrac{\partial V}{\partial\rho}\rho\cos\gamma + \left(\dfrac{\partial V}{\partial\delta}+\dfrac{\partial V}{\partial\gamma}\right) \sin\gamma \\
\label{eq-LQ-steera}
L_{g_2}V &:= -\dfrac{\partial V}{\partial\gamma}\,,
\end{align}
\end{subequations}
and introduce the adaptive control laws
\begin{eqnarray}
\label{eq-v-adapt}
v &=& -\hat\varepsilon_1 \rho L_{\bar{g}_1}V\\
\label{eq-om-adapt}
\omega &=& -\hat\varepsilon_2 L_{g_2}V\,,
\end{eqnarray}
where $\hat \varepsilon_1, \hat \varepsilon_2$ are the online estimates of $1/b_1,1/b_2$, respectively. 
Take any of our strict CLFs $V$, on any of the state spaces considered. The derivative of $V$ is
\begin{equation}
\dot V = -(L_{\bar g_1} V)^2 \left(1 - b_1 \tilde\varepsilon_1 \right) -(L_{ g_2} V)^2 \left(1 - b_2 \tilde\varepsilon_2 \right)\,,
\end{equation}
where $\tilde\varepsilon_i = 1/b_i - \hat\varepsilon_i$. 
Taking the adaptive CLF
\begin{equation}\label{eq:CLF_adapt}
V_{\rm a} = \ln(1+n(V)) + \frac{b_1}{2\mu_1}\tilde\varepsilon_1^2 + \frac{b_2}{2\mu_2}\tilde\varepsilon_2^2 \,,
\end{equation}
with adaptation gains $\mu_1,\mu_2>0$ and any normalization function $n\in\mathcal{K}_\infty\cap C^1$, including, for example, $n(V) = n_0V, n_0>0$, we get
\begin{align}
\dot V_{\rm a} =& -\frac{n'(V)}{1+n(V)}\left[(L_{\bar g_1} V)^2 +(L_{ g_2} V )^2 \right]
\nonumber\\
& +b_1\tilde\varepsilon_1\left[  \frac{n'(V)}{1+n(V)}(L_{\bar g_1} V)^2  - \frac{\dot{\hat\varepsilon}_1}{\mu_1}\right] \nonumber\\
& +b_2\tilde\varepsilon_2\left[  \frac{n'(V)}{1+n(V)}(L_{ g_2} V)^2  - \frac{\dot{\hat\varepsilon}_2}{\mu_2}\right]\,.
\end{align}
Hence, we pick the update laws
\begin{eqnarray}
\label{eq-eps1-adapt}
\dot{\hat\varepsilon}_1 & = & \mu_1 \frac{n'(V)}{1+n(V)}(L_{\bar g_1} V)^2 \\
\label{eq-eps2-adapt}
\dot{\hat\varepsilon}_2 & = & \mu_2 \frac{n'(V)}{1+n(V)}(L_{ g_2} V)^2  \,, 
\end{eqnarray}
and obtain
\begin{equation}\label{eq:adapt_dotV}
\dot V_{\rm a} =  -\frac{n'(V)}{1+n(V)}\left[(L_{\bar g_1} V)^2 +(L_{ g_2} V )^2 \right]  \,,
\end{equation}
which is negative for all $(\rho,\delta,\gamma)\neq (0,0,0)$ on the state space considered but is not negative definite in the overall state of the adaptive system, which also includes $(\tilde\varepsilon_1,\tilde\varepsilon_2)$. With LaSalle's theorem, we obtain the following result.


\begin{theorem} \label{thrm:adapt}
Consider the system \eqref{eq:unicycle_polar_closed_loop-Gv-slip} with arbitrary unknown $b_1,b_2>0$, along with the control law \eqref{eq-v-adapt}, \eqref{eq-om-adapt} and the update laws \eqref{eq-eps1-adapt}, \eqref{eq-eps2-adapt}. For each CLF $V$ given by~\cite[Thms. 1-8]{Part1_todorovskiCLF2025}, with class $\mathcal{K}_\infty$ functions $\alpha_1,\alpha_2$ such that 
\begin{align}
\alpha_1\left(|\rho,\delta,\gamma|_{\mathcal{Q}}\right) \leq V(\rho,\delta,\gamma)
\leq \alpha_2\left(|\rho,\delta,\gamma|_{\mathcal{Q}}\right)\,,
\end{align}
for all initial conditions $(\rho(0),\delta(0),\gamma(0))$ on its respective state space $\mathcal{Q} \in \{\mathcal{S},\mathcal{S}_1,\mathcal{S}_2,\mathcal{S}_3\}$, and for all parameter initial conditions $\hat \varepsilon_1(0)\in\mathbb{R},\hat \varepsilon_2(0)\in\mathbb{R}$, the following holds:
\begin{align}\label{eq:adapt_upsilon}
&\Upsilon(t) \leq a_1^{-1}\left( M\left(e^{m \, a_2(\Upsilon(0))} - 1\right)\right)\,,\quad \forall t \geq 0\,,
\end{align}
with
\begin{align}
\Upsilon &= 
|(\rho,\delta,\gamma)|_{\mathcal{Q}} +|(\tilde\varepsilon_1,\tilde\varepsilon_2)|\,,
\end{align}
where
\begin{eqnarray}
a_1(r) &=& \min\{n\circ\alpha_1(r),r^2\}
\\
a_2(r) &=& \max\{n\circ\alpha_2(r),r^2\}
\end{eqnarray}
are $\mathcal{K}_\infty$ functions and
\begin{align}
M =  
\max\left\{1,\dfrac{c_2}{c_1}\right\}
\,, & \quad
\label{eq:adapt_M}
m =  \max\left\{c_2,\frac{1}{c_2}\right\}
\,,
\\
\label{eq-adaptivec1c2}
c_1 =  \min\left\{\frac{b_1}{2\mu_1},\frac{b_2}{2\mu_2}\right\}
\,, &\quad
c_2 =  \max\left\{\frac{b_1}{2\mu_1},\frac{b_2}{2\mu_2}\right\}\,.
\end{align}
In addition, $\rho(t),\delta(t),\gamma(t)\rightarrow 0$ as $t\rightarrow\infty$. 
\end{theorem}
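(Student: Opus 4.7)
The proof starts from the identity
\begin{equation*}
\dot V_{\rm a} = -\tfrac{n'(V)}{1+n(V)}\bigl[(L_{\bar g_1}V)^2+(L_{g_2}V)^2\bigr]\le 0,
\end{equation*}
already derived in the paragraph preceding the theorem statement, which immediately yields the forward invariance $V_{\rm a}(t)\le V_{\rm a}(0)$ for all $t\ge 0$. From this monotonicity, two separate deductions remain to be made: the quantitative estimate~\eqref{eq:adapt_upsilon}, and the asymptotic convergence $(\rho,\delta,\gamma)\to 0$.

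For the estimate, I would sandwich $V_{\rm a}$ between $\mathcal{K}_\infty$ functions of $\Upsilon$. Using $\ln(1+x)\le x$, the upper bound $V\le\alpha_2(|(\rho,\delta,\gamma)|_{\mathcal Q})$, and $b_i/(2\mu_i)\le c_2$, I obtain $V_{\rm a}(0)\le n(\alpha_2(\Upsilon(0)))+c_2\Upsilon(0)^2\le m\,a_2(\Upsilon(0))$, since $a_2(r)=\max\{n\circ\alpha_2(r),r^2\}$ dominates each of the two terms. For the lower bound, exponentiating $V_{\rm a}\ge\ln(1+n(V))+c_1|\tilde\varepsilon|^2$ yields
\begin{equation*}
e^{V_{\rm a}}\ge(1+n(V))e^{c_1|\tilde\varepsilon|^2}\ge 1+n(V)+c_1|\tilde\varepsilon|^2,
\end{equation*}
after which a case split on whether $|(\rho,\delta,\gamma)|_{\mathcal Q}$ or $|\tilde\varepsilon|$ carries the larger half of $\Upsilon$ produces $e^{V_{\rm a}(t)}-1\ge M^{-1}a_1(\Upsilon(t))$. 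Chaining these two estimates through $V_{\rm a}(t)\le V_{\rm a}(0)$ reproduces~\eqref{eq:adapt_upsilon}.

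For the convergence, the estimate just established guarantees that $(\rho,\delta,\gamma,\tilde\varepsilon)(t)$ remain uniformly bounded, hence so do $v$, $\omega$, $\dot{\hat\varepsilon}_i$, and in turn $\dot W$ along the closed-loop trajectory, where $W(\rho,\delta,\gamma):=-\dot V_{\rm a}\ge 0$. Since $V_{\rm a}$ is nonincreasing and bounded below, $\int_0^\infty W\,dt<\infty$, and Barbalat's lemma (with the uniform continuity of $W$ inherited from boundedness of $\dot W$) delivers $W(t)\to 0$, i.e.\ $(L_{\bar g_1}V)^2+(L_{g_2}V)^2\to 0$. Because the unicycle \eqref{eq:unicycle_polar} is driftless and $V$ is a strict CLF in the sense of~\cite[Def.~2]{Part1_todorovskiCLF2025}, the argument recalled in Section~\ref{sec-alternatives} forces the two Lie derivatives to vanish simultaneously only at $(\rho,\delta,\gamma)=0$; combined with the trajectory staying in a compact subset of $\mathcal Q$ (radial unboundedness of $V$ on $\mathcal Q$ keeps it off the boundary), this forces $(\rho,\delta,\gamma)\to 0$. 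Nothing is claimed about $\tilde\varepsilon_i$, as is standard for adaptive designs lacking persistency of excitation.

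The main technical obstacle is the quantitative lower bound $e^{V_{\rm a}(t)}-1\ge M^{-1}a_1(\Upsilon(t))$: because $V_{\rm a}$ mixes a $\ln(1+\cdot)$ term in $(\rho,\delta,\gamma)$ with a quadratic term in $\tilde\varepsilon$, a single class-$\mathcal{K}_\infty$ bound on the combined variable $\Upsilon$ does not follow by inspection. The product-of-exponentials identity followed by the case split on which half of $\Upsilon$ dominates is what reassembles the two pieces into the single function $a_1(r)=\min\{n\circ\alpha_1(r),r^2\}$ appearing in the statement; matching the precise constants $M$ and $m$ in~\eqref{eq:adapt_M}--\eqref{eq-adaptivec1c2} is then routine bookkeeping.
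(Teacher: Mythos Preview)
Your proposal is correct and follows essentially the same approach as the paper: start from $\dot V_{\rm a}\le 0$, sandwich $V_{\rm a}$ between class-$\mathcal{K}_\infty$ bounds in $\Upsilon$ using the inequality the paper displays, and conclude convergence from \eqref{eq:adapt_dotV}. The paper abbreviates the sandwich step to ``further calculations'' and invokes LaSalle rather than Barbalat for convergence, but your exponentiation/case-split argument and Barbalat-based convergence are the natural ways to fill in exactly those omissions.
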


\begin{proof}
For all $t \geq 0$, from~\eqref{eq:adapt_dotV} it follows that $V_a(t) \leq V_a(0)$, and from~\eqref{eq:CLF_adapt} that 
\begin{align}
&\ln\bigl(1+n\left(V(t)\right)\bigr) + c_1\left(\tilde{\varepsilon}_1^2(t) + \tilde{\varepsilon}_2^2(t)\right)\nonumber\\
&\leq\ln\bigl(1+n\left(V(0)\right)\bigr) + c_2\left(\tilde{\varepsilon}_1^2(0) + \tilde{\varepsilon}_2^2(0)\right)\,.
\end{align}
Further calculations lead to \eqref{eq:adapt_upsilon} with \eqref{eq:adapt_M}, \eqref{eq-adaptivec1c2}. The convergence follows from \eqref{eq:adapt_dotV} with LaSalle's theorem.
\end{proof}

\begin{figure}[t!]
\centering
\begin{subfigure}[b]{\linewidth}
\centering
\includegraphics[width=.75\linewidth]{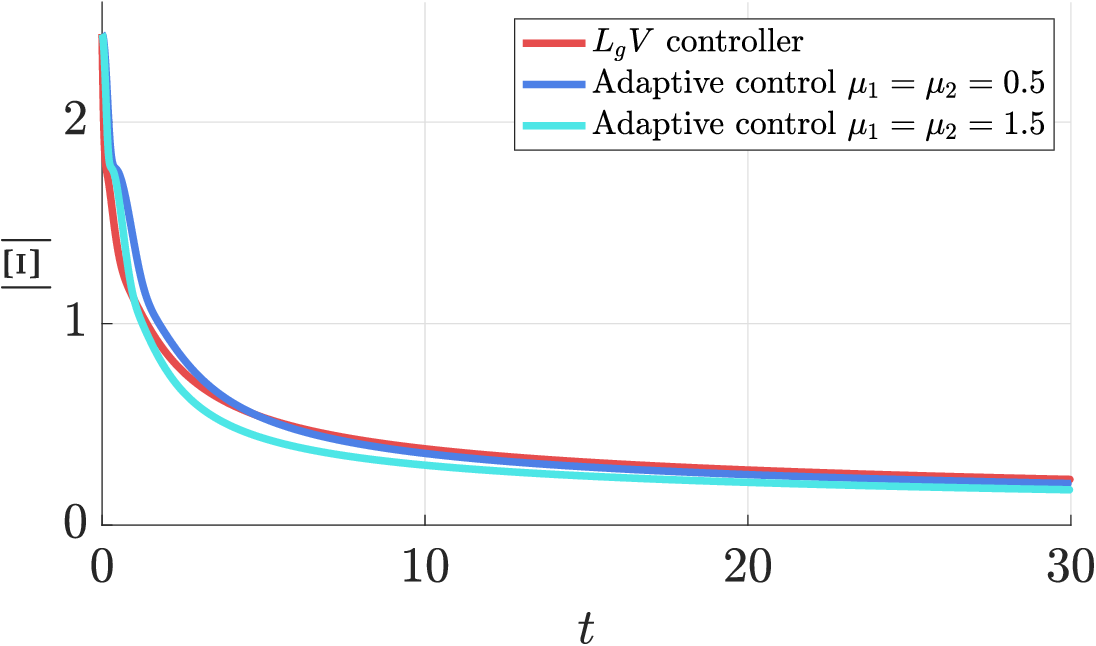}
\caption{Norm of $\Xi \coloneqq (\rho,\delta,\gamma)$ over time for each control law showing faster decay rate for both adaptive control laws.}
\label{fig:adapt_state_norm}
\end{subfigure}
\hfill
\vspace{0.1cm}
\begin{subfigure}[b]{\linewidth}
\centering
\includegraphics[width=.85\linewidth]{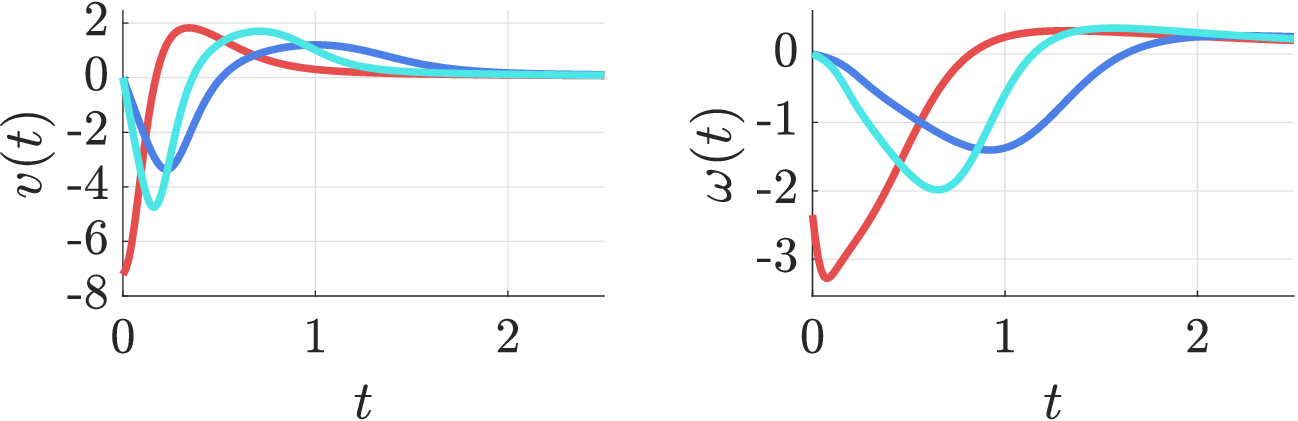}
\caption{Maximum magnitude of the control input is smaller for both adaptive control laws compared to the $L_gV$ controller despite the faster decay rate.}
\label{fig:normalized_control_input}
\end{subfigure}
\hfill
\vspace{0.1cm}
\begin{subfigure}[b]{\linewidth}
\centering
\includegraphics[width=.85\linewidth]{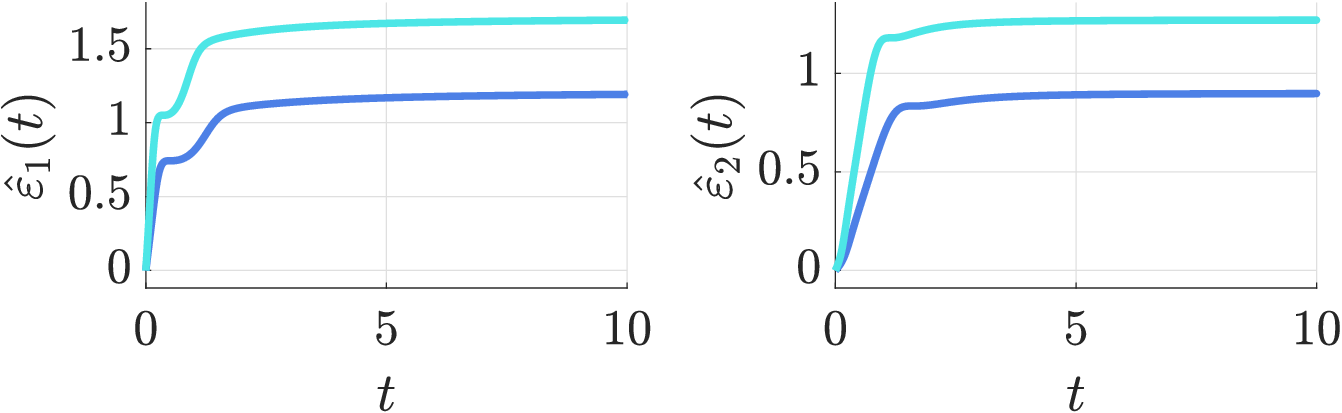}
\caption{$\hat{\varepsilon}_1$ and $\hat{\varepsilon}_2$ grows above $1$ leading to the faster decay rate.}
\label{fig:adaptive_eps}
\end{subfigure}
\caption{Comparison of state trajectories and control inputs under the quadratic cost $L_gV$ controller with both $b_1\varepsilon_1$ and $b_2\varepsilon_2$ taken as $1$ and the proposed adaptive controller with two different adaptation gains.}
\label{fig:adaptive_control_fig}
\end{figure}

To illustrate the performance of the proposed method, we consider the BoFo CLF $V = \rho^2 + V_{\delta\gamma}$, where $V_{\delta\gamma}$ is defined in~\cite[Thm. 6]{Part1_todorovskiCLF2025}, and choose gains $k_1 = 3$, $k_2 = 3.5$, and $k_3 = 4$. In Fig.~\ref{fig:adaptive_control_fig}, we compare the  quadratic cost $L_gV$ controller (red), given by $v = -\rho L_{\bar{g}_1}V$ and $\omega = -L_{g_2}V$, with the adaptive control law in \eqref{eq-eps1-adapt}--\eqref{eq-eps2-adapt}, which uses a normalization function of the form $n(V) = V$, initial parameter estimates $\hat{\varepsilon}_1(0) = \hat{\varepsilon}_2(0) = 0$, and adaptation gains $\mu_1 = \mu_2 = 0.5$ (blue) and $\mu_1 = \mu_2 = 1$ (cyan). The system is initialized at $(\rho(0),\delta(0),\gamma(0)) = (1,-\pi/2,-\pi/2)$, under uncertainty in $b_1 = b_2 = 1$. For the quadratic cost $L_gV$ controller, this corresponds to the assumption of $b_1\varepsilon_1 = b_2\varepsilon_2 = 1$. The adaptive controller exhibited smaller peak values in both forward velocity and steering input (Figure~\ref{fig:normalized_control_input}), even while achieving a faster decay of the state norm (Figure~\ref{fig:adapt_state_norm}). This behavior is clearly due to $\hat{\varepsilon}_{1,2}$ starting small but quickly growing beyond $1$, thereby accelerating the decay rate to be faster than the $L_gV$ controller.

The adaptive control law 
\eqref{eq-v-adapt}, \eqref{eq-om-adapt}, \eqref{eq-eps1-adapt}, \eqref{eq-eps2-adapt}
has similarities with the inverse optimal controller \eqref{eq-u*-quad0}. The difference is that the adaptive controller works even if it starts with gains of the wrong sign, $\hat\varepsilon_1(0)<0, \hat\varepsilon_2(0)<0$. Additionally, the adaptive controller's gain is improved, online, by learning from the transients reflected in $L_{\bar g}V$.

Of course, all update laws exhibit drift in the presence of disturbances. By adding leakage to the update laws \eqref{eq-eps1-adapt} and \eqref{eq-eps2-adapt}, as is standard, boundedness of the vehicle states and parameter estimates, as well as practical regulation of $\rho(t),\delta(t),\gamma(t)$, would be ensured. 

\section{Prescribed and Fixed-Time Nonholonomic Parking}\label{sec:PT}

In this section, we modify the controllers designed in Section~\ref{sec:nonmodular_unicycle_stabilization} such that the states $(\rho,\delta,\gamma)$ converge to the point $\rho = \delta = \gamma = 0$ in user-defined prescribed time $T > 0$. 

\begin{figure*}[t]
\centering
\begin{subfigure}{\linewidth}
\centering
\includegraphics[width=.87\linewidth]{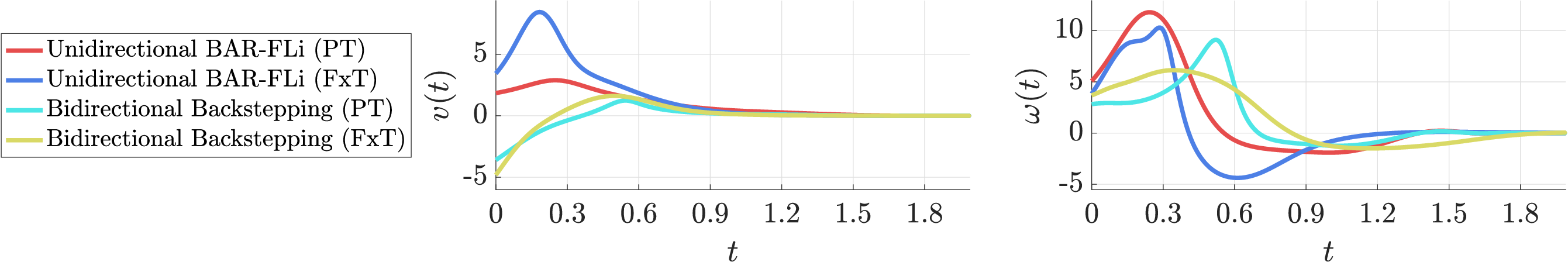}
\caption{Control inputs}
\label{fig:PT_control}
\end{subfigure}
\vspace{0.5em}  
\begin{subfigure}{\linewidth}
\centering
\includegraphics[width=.87\linewidth]{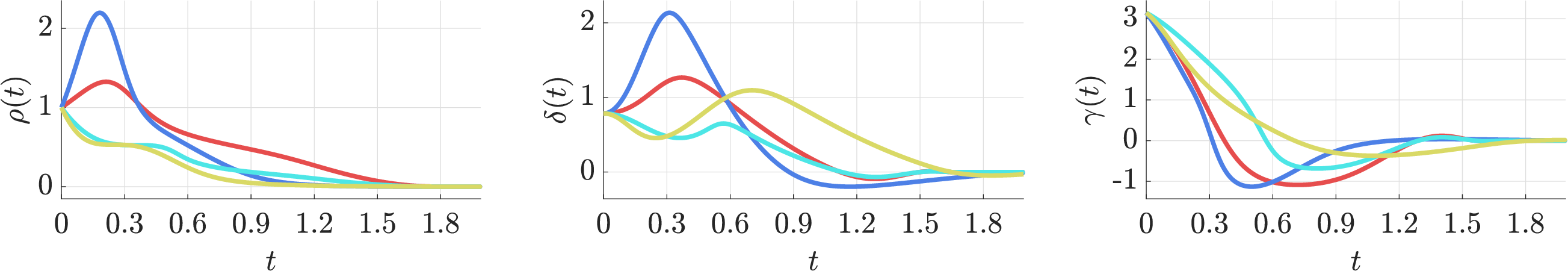}
\caption{State trajectories}
\label{fig:PT_traj}
\end{subfigure}
\caption{PT-parking and FxT-parking using the control laws~\eqref{eq:pt-control-laws} and~\eqref{eq:fxt-control-laws-exp} for $T=2$ ($p=0.3$ in the FxT case). The control gains are  $[k_1,k_2,k_3,k_4]=[1,2.2,2.5,0.5]$ for the PT case and $[k_1,k_2,k_3,k_4]=[3,0.8,0.5,1]$ for the FxT case, applied to both the unidirectional BAR-FLi controller~\eqref{eq:unidir_control_v}–\eqref{eq:unidir_control_omegabar} and the bidirectional backstepping controller~\eqref{eq:bidir_control_v}–\eqref{eq:bidir_control_omegabar}. Different gains are chosen to achieve comparable behaviors given the distinct scaling factors inherent in the PT and FxT formulations.}
\label{fig:PT}
\end{figure*}

\begin{theorem}[PT-Parking]
\label{thm:PT-control}
Consider the system \eqref{eq:unicycle_polar} in closed-loop with
\begin{subequations}
\label{eq:pt-control-laws}
\begin{align}
v(t) &= (1 + \nu^2(t-t_0))\tilde{v}(t)\label{eq:fxt-v}\\
\omega(t) &= (1 + \nu^2(t-t_0))\left( \frac{v}{\rho}\sin(\gamma) + \tilde{\omega}(t) \right) \label{eq:fxt-omega}
\end{align}
\end{subequations}
where 
\begin{equation}
\nu(t-t_0) = \tan\left(\frac{\pi(t-t_0)}{2T}\right)
\end{equation}
with $T > 0$ and the forward and angular velocities $\tilde{v}, \tilde{\omega}$ are respectively chosen from: (i) \eqref{eq:unidir_control_v} and \eqref{eq:unidir_control_omegabar}; or (ii) \eqref{eq:bidir_control_v} and \eqref{eq:bidir_control_omegabar}, each of which requires the gains to satisfy $k_1, k_2, k_3, k_4 > 0$.
Then, 
\begin{enumerate}
\item \label{it:fxt-1} there exists $K, \lambda >0$, such that $|(\rho(t),\delta(t), \gamma(t))|_{\mathcal{Q}}\leq K|(\rho_0,\delta_0, \gamma_0)|_{\mathcal{Q}} {\rm e}^{-\frac{2T}{\pi}\lambda \nu(t-t_0)} $  for all $t\in[t_0,t_0+T)$, where the state-space $\mathcal{Q}$ corresponds to any of $\mathcal{S}, \mathcal{S}_2$ depending on the choice of $\tilde{v}$, $\tilde{\omega}$,
\item \label{it:fxt-2} and the input signals in the control laws \eqref{eq:pt-control-laws} are bounded and converge to zero as $t \rightarrow t_0 + T$.
\end{enumerate}
\end{theorem}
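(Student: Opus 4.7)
\textit{Plan.} The strategy is the temporal-transformation (time-dilation) method: the scaling factor $\mu(t):=1+\nu^2(t-t_0)$ is designed so that the change of time $d\tau=\mu(t)\,dt$ maps $[t_0,t_0+T)$ onto $[0,\infty)$ and reduces the PT closed loop to the GES closed loop of Theorem~\ref{thm:unidirectional_bkstp} (case (i)) or Theorem~\ref{thm:bidirectional_bkstp} (case (ii)). The backbone identity, obtained by differentiating $\nu=\tan(\pi(t-t_0)/(2T))$, is $\mu(t)=(2T/\pi)\,\nu'(t-t_0)$, from which $\int_{t_0}^t\mu(s)\,ds=(2T/\pi)\,\nu(t-t_0)\to\infty$ as $t\to t_0+T$.

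First I would verify that the scaled feedback produces a closed loop whose vector field is exactly $\mu(t)$ times the nominal vector field of the chosen theorem. With $v=\mu\tilde v$, the $(v/\rho)\sin\gamma$ term appearing inside $\omega$ in \eqref{eq:fxt-omega} is the unscaled one---which is exactly what preserves the cancellation with the $(v/\rho)\sin\gamma$ in \eqref{eq:unicycle_polar_gammadot}. Substituting \eqref{eq:pt-control-laws} into \eqref{eq:unicycle_polar} would then yield $\dot\rho=\mu(t)\,\dot\rho^{\mathrm{Thm}}$, $\dot\delta=\mu(t)\,\dot\delta^{\mathrm{Thm}}$, $\dot\gamma=\mu(t)\,\dot\gamma^{\mathrm{Thm}}$, where ``$\mathrm{Thm}$'' denotes the closed-loop vector field of the invoked theorem.

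Next I would use the strict CLF $V$ from \eqref{eq:unidir_V_general} or \eqref{eq:bidir_V_general}. The chain rule gives $\dot V=\mu(t)\,\dot V^{\mathrm{Thm}}\le -\underline c\,\mu(t)\,V$, with $\underline c$ inherited from \eqref{eq:unidir_Vdot1} or \eqref{eq:bidir_Vdot1}. The comparison lemma then yields $V(t)\le V(t_0)\exp\!\bigl(-\underline c(2T/\pi)\nu(t-t_0)\bigr)$. Sandwiching $V$ by the positive-definite bounds available on $\mathcal{S}_2$ and $\mathcal{S}$ (in the unidirectional case $V_\delta=4\tan^2(\delta/2)\ge\delta^2$; in the bidirectional case $V$ is already a quadratic in $(\rho,\delta,z)$) gives item (\ref{it:fxt-1}) with $\lambda=\underline c/2$ and a suitable $K\ge 1$. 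A byproduct is that the bound on $V_\delta$ prevents $|\delta|$ from reaching $\pi$ on $[t_0,t_0+T)$, ensuring forward completeness inside $\mathcal{S}_2$.

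For item (\ref{it:fxt-2}), I would note that the unscaled feedbacks $\tilde v$ and $v/\rho\sin\gamma+\tilde\omega$ are smooth in $(\rho,\delta,\gamma)$ and vanish at the origin (they are stabilizing feedbacks with equilibrium $0$), so on any neighborhood of $0$ they are bounded by a class-$\mathcal{K}$ function $\beta(|(\rho,\delta,\gamma)|_{\mathcal{Q}})$. Combining with item (\ref{it:fxt-1}) yields $|v(t)|,|\omega(t)|\le(1+\nu^2(t-t_0))\,\beta\!\bigl(K_0 e^{-\lambda(2T/\pi)\nu(t-t_0)}\bigr)$. The step I expect to require the most care is showing that this product stays bounded on $[t_0,t_0+T)$ and vanishes as $t\to t_0+T$: it reduces to the super-exponential dominance $\nu^k e^{-c\nu}\to 0$ as $\nu\to\infty$ for any $k,c>0$, which ensures that the CLF's $e^{-c\nu}$ decay beats the $\mu$-blow-up of every polynomial order.
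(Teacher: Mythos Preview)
Your proposal is correct and follows essentially the same time-dilation approach as the paper: the paper introduces the explicit change of variable $\tau = t_0 + \tfrac{2T}{\pi}\nu(t-t_0)$, observes that the closed loop in $\tau$-time coincides with the nominal GES closed loops of Theorems~\ref{thm:unidirectional_bkstp}--\ref{thm:bidirectional_bkstp}, and then converts the resulting exponential bound and the input estimate (via an explicit polynomial bound $\tilde\alpha(r)\le a(r+r^2+r^3)$ on the unscaled feedback) back to $t$-time. Your direct use of the CLF inequality $\dot V\le -\underline c\,\mu(t)V$ together with the comparison lemma is equivalent bookkeeping, and your anticipated polynomial-versus-exponential dominance $\nu^k e^{-c\nu}\to 0$ is exactly the mechanism the paper invokes for item~\ref{it:fxt-2}.
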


\begin{proof}
\textbf{Step 1:} We dilate the finite time interval $t \in [t_0, t_0 + T)$ to the infinite time interval $\tau \in [t_0, \infty)$ via the transformation $t \rightarrow \tau$:
\begin{equation}
\tau = t_0 + \frac{2T}{\pi} \nu(t-t_0). \label{eq:t_to_tau}
\end{equation}
Considering the fact that
\begin{equation}
\frac{{\rm d} t}{{\rm d} \tau} = \frac{1}{1 + \left(\frac{\pi}{2T}(\tau - t_0)\right)^2} \,,
\end{equation}
the control laws \eqref{eq:pt-control-laws} in time $\tau$ are 
\begin{subequations}
\label{eq:fxt-control-laws-in-tau}
\begin{align}
v(\tau) &= \left(1 + \frac{\pi^2}{4T^2}(\tau - t_0)^2\right) \tilde{v}, \\
\omega(\tau) &= \left(1 + \frac{\pi^2}{4T^2}(\tau - t_0)^2\right)\left(\frac{k_1}{2}\sin(2\gamma) + \tilde{\omega} \right)\,.
\end{align}
\end{subequations}
Hence, the closed-loops in time $\tau$ are exactly the same as \eqref{eq:rho_dot_unidir}, \eqref{eq:delta_dot_unidir}, \eqref{eq:z_dot_unidir} with \eqref{eq:unidir_control_v} and \eqref{eq:unidir_control_omegabar} or as \eqref{eq:rho_dot_bidir} \eqref{eq:delta_dot_bidir},  \eqref{eq:z_dot_bidir} with \eqref{eq:bidir_control_v} and \eqref{eq:bidir_control_omegabar} where ${\rm d}/ {\rm d} t$ replaced by ${\rm d}/ {\rm d} \tau$.
Thus, the corresponding stability guarantees stated in Theorems~\ref{thm:unidirectional_bkstp}--\ref{thm:bidirectional_bkstp} hold in time $\tau$, respectively. 
This implies the existence of positive constants $K, \lambda > 0$, such that, 
\begin{equation}
\hspace*{-0.38cm}
\abs{(\rho(\tau),\delta(\tau),\gamma(\tau)}_{\mathcal{Q}} \le K \abs{(\rho(t_0), \delta(t_0),\gamma(t_0)}_{\mathcal{Q}} {\rm e}^{-\lambda(\tau - t_0)} \label{eq:KL-estimate-tau}
\end{equation}
where $\mathcal{Q}$ corresponds to any of $\mathcal{S},  \mathcal{S}_2$ depending on the choice of $\tilde{v}$ and $\tilde{\omega}$.
In addition, we  bound the control laws \eqref{eq:fxt-control-laws-in-tau} as
\begin{equation}
\begin{aligned}[b]
\abs{v(\tau)} + \abs{\omega(\tau)} \le &\left(1 + \frac{\pi^2}{4T^2}(\tau - t_0)^2\right) \times \\ &\tilde{\alpha}(\abs{(\rho(\tau),\delta(\tau),\gamma(\tau))}_{\mathcal{Q}}), 
\end{aligned}
\label{eq:norm_control_law_tau}
\end{equation}
where $\tilde{\alpha} \in \mathcal{K}_{\infty}$ is defined according to the chosen controller and the corresponding state space $\mathcal{Q}$ as follows: On $\mathcal{Q} = \mathcal{S}$ the function $\tilde{\alpha}(r) = \max\{c_1, c_2\} (r + r^2 + r^3)$ corresponds to  \eqref{eq:bidir_control_v}, \eqref{eq:bidir_control_omegabar} with $c_1 = 2k_1 \sqrt{1 + k_2^2} + k_4 (1+k_2) + k_3 \sqrt{1 + k_2^2} + k_1 k_2 (\sqrt{1 + k_2^2}(1+k_2) + 1)$ and $c_2 = 4k_1k_2 + k_3 + 4k_2 k_3 + 2k_1 k_2^2(1 + k_2)$, while
on $\mathcal{Q} =\mathcal{S}_2$, the function $\tilde{\alpha}(r) = c_1 (r+r^2 + r^3)$ corresponds to \eqref{eq:unidir_control_v}, \eqref{eq:unidir_control_omegabar}.
Based on \eqref{eq:KL-estimate-tau} and \eqref{eq:norm_control_law_tau}, we obtain
\begin{equation}
\begin{aligned}[b]
\abs{v(\tau)} + &\abs{\omega(\tau)} \le \left(1 + \frac{\pi^2}{4T^2}(\tau - t_0)^2\right) \times  \\ &\tilde{\alpha}\left( K \abs{\left(\rho(t_0), \delta(t_0), \gamma(t_0)\right)}_{\mathcal{Q}} {\rm e}^{-\lambda (\tau-t_0)} \right). \label{eq:input_exp_bound_tau}
\end{aligned}
\end{equation}
Considering the functions $\tilde{\alpha}$ as in \eqref{eq:norm_control_law_tau}, established above to all satisfy $\tilde\alpha(r)\leq a(r+r^2 + r^3), a\in\mathbb{R}_{>0}$,  and letting $\tau \rightarrow \infty$ in \eqref{eq:input_exp_bound_tau}, we observe that $\abs{v(\tau)} + \abs{\omega(\tau)}$ converges to zero.

\textbf{Step 2:} We now convert the above results into time $t$ by contracting the infinite time interval $[t_0, \infty)$ into the finite time interval $[t_0, t_0 + T)$ in time $t$, by taking the inverse transformation of \eqref{eq:t_to_tau} as
\begin{equation}
t = t_0 + \frac{2T}{\pi} \arctan\left(\frac{\pi}{2T} (\tau - t_0)\right).
\end{equation}
Then, \eqref{eq:KL-estimate-tau} in time $t$ reads 
\begin{equation}
\begin{aligned}[b]
\abs{(\rho(\tau),\delta(\tau),\gamma(\tau)}_{\mathcal{Q}} \le K\abs{(\rho(t_0), \delta(t_0),\gamma(t_0)}_{\mathcal{Q}}  {\rm e}^{-\frac{2T}{\pi}\lambda \nu(t-t_0)} 
\end{aligned}
\end{equation}
which proves property~\ref{it:fxt-1} of Theorem~\ref{thm:PT-control}. 
Analogously, \eqref{eq:input_exp_bound_tau} in time $t$ reads as
\begin{equation}
\begin{aligned}[b]
&\abs{v(t)} + \abs{\omega(t)}  \le (1 + \nu^2(t-t_0)) \times \nonumber \\
&\tilde{\alpha}\left( K\abs{(\rho(t_0), \delta(t_0),\gamma(t_0)}_{\mathcal{Q}}  {\rm e}^{-\frac{2T}{\pi}\lambda \nu(t-t_0)} \right). \label{eq:input_exp_bound_t}
\end{aligned}
\end{equation}
Letting $t \rightarrow t_0 + T$ in \eqref{eq:input_exp_bound_t}, we conclude that $\abs{v(t)} + \abs{\omega(t)}$ is bounded and converges to zero. Thus, property~\ref{it:fxt-2} of Theorem~\ref{thm:PT-control} holds.
\end{proof}

\begin{theorem}[FxT-Parking]
\label{thm:FxT-control}
Consider the system \eqref{eq:unicycle_polar} in closed-loop with
\begin{subequations}
\label{eq:fxt-control-laws-exp}
\begin{align}
v(t) &=\kappa(\rho,\delta,\gamma)\tilde{v}(t)\label{eq:fxt-v-exp}\\
\omega(t) &= \kappa(\rho,\delta,\gamma) \left( \frac{v}{\rho}\sin(\gamma) + \tilde{\omega}(t) \right) \label{eq:fxt-omega-exp}
\end{align}
\end{subequations}
where 
\begin{equation}
\kappa(\rho,\delta,\gamma) = \frac{1}{\underline{c}   pT} \exp(V^p) V^{-p}, 
\end{equation}
with $p \in (0, \frac{1}{2})$ and $T > 0$ and the forward and angular velocities $\tilde{v}, \tilde{\omega}$, the backstepping transformations $z$ and the Lyapunov functions $V$ are respectively chosen from: (i) \eqref{eq:unidir_control_v}, \eqref{eq:unidir_control_omegabar}, \eqref{eq:unidir_z} and \eqref{eq:unidir_V_general} with $\underline{c}$ as in \eqref{eq:unidir_Vdot1}; or (ii) \eqref{eq:bidir_control_v},  \eqref{eq:bidir_control_omegabar}, \eqref{eq:bidir_z} and \eqref{eq:bidir_V_general} with $\underline{c}$ as in \eqref{eq:bidir_Vdot1}, each of which requires the gains to satisfy $k_1, k_2, k_3, k_4 > 0$.
Then, 
\begin{enumerate}
\item \label{it:fxt-1-exp} there exists $K_1, K_2 >0$, such  that it holds
\begin{equation} 
\hspace*{-0.4cm}
\begin{aligned}[b]
|(&\rho(t),\delta(t), \gamma(t))|_{\mathcal{Q}} \leq \\ &K_1 \left[\ln \left( \frac{1}{\frac{t-t_0}{T} + \exp(-K_2 \abs{(\rho_0,\delta_0, \gamma_0)}^{2p}_{\mathcal{Q}})}\right)\right]^{\frac{1}{2p}} \label{eq:fxt_states_bound}
\end{aligned}
\end{equation}  
for all $t\in[t_0, t_0 + T_{\rm s}(\rho_0, \delta_0, \gamma_0))$, and  $|(\rho(t),\delta(t), \gamma(t))|_{\mathcal{Q}} = 0$ for all $t \ge t_0 + T_{\rm s}(\rho_0, \delta_0, \gamma_0)$, and for all $(\rho_0,\delta_0,\gamma_0)$, the settling time $T_{\rm s}$ is such that
\begin{equation}
\hspace*{-0.2cm}
  T_{\rm s}(\rho_0,\delta_0, \gamma_0) \le [1 - \exp(-K_2 \abs{(\rho_0,\delta_0,\gamma_0)}_{\mathcal{Q}}^{2p})] T  \leq T \,, \label{eq:settling_time}
\end{equation}
where the state-space $\mathcal{Q}$ corresponds to either $\mathcal{S}$ or $ \mathcal{S}_2$ depending on the choice of $\tilde{v}$ and $\tilde{\omega}$,
\item \label{it:fxt-2-exp} and the input signals in the control laws \eqref{eq:fxt-control-laws-exp} are bounded and converge to zero no later than $t = t_0 + T$.
\end{enumerate}
\end{theorem}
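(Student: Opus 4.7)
The approach is to exploit the affine-in-$(v,\omega)$ structure of~\eqref{eq:unicycle_polar}: the proposed feedback~\eqref{eq:fxt-control-laws-exp} multiplies both $\tilde v$ and the composite angular feedback $\omega_{\rm orig}:=\tilde v/\rho\sin\gamma+\tilde\omega$ from Theorems~\ref{thm:unidirectional_bkstp}--\ref{thm:bidirectional_bkstp} by the same state-dependent factor $\kappa(\rho,\delta,\gamma)$, so the new closed-loop vector field is exactly $\kappa$ times the one whose Lyapunov analysis is carried out in~\eqref{eq:unidir_Vdot1}/\eqref{eq:bidir_Vdot1}. Hence $\dot V|_{\eqref{eq:fxt-control-laws-exp}} = \kappa\,\dot V|_{\rm orig} \le -\kappa\,\underline c\,V = -\tfrac{1}{pT}\,e^{V^p}V^{1-p}$, by definition of $\kappa$.

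I would then introduce $W := V^p$ to reduce this to a scalar separable inequality. By the chain rule, $\dot W = pV^{p-1}\dot V \le -\tfrac{1}{T}e^W$, which rewrites as $\tfrac{d}{dt}(-e^{-W})\le -1/T$. Integrating on $[t_0,t]$ yields $e^{-W(t)}\ge e^{-W(t_0)}+(t-t_0)/T$ and hence $W(t)\le \ln\bigl(1/[e^{-W(t_0)}+(t-t_0)/T]\bigr)$ for as long as the right-hand side is nonnegative. The right-hand side vanishes at $t-t_0 = T(1-e^{-V^p(t_0)}) \le T$, which is the settling time $T_{\rm s}$; since $V\ge 0$ and $V\le W^{1/p}$, we must have $V(t)=0$ for $t\ge t_0+T_{\rm s}$. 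To rewrite the bound in terms of $|(\rho,\delta,\gamma)|_{\mathcal Q}$ I invoke the quadratic $\mathcal{K}_\infty$ bounds $c_1|(\rho,\delta,\gamma)|_{\mathcal Q}^2\le V\le c_2|(\rho,\delta,\gamma)|_{\mathcal Q}^2$ satisfied by each CLF on its respective state space (with the norm $|\cdot|_{\mathcal Q}$ chosen to reflect the geometry of $\mathcal Q$), giving~\eqref{eq:fxt_states_bound} with $K_1=1/\sqrt{c_1}$ and~\eqref{eq:settling_time} with $K_2=c_2^p$.

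For property~\ref{it:fxt-2-exp}, the key observation is that $\tilde v$ and $\omega_{\rm orig}$ are smooth and vanish at the origin, with $|\tilde v|,|\omega_{\rm orig}|\le cV^{1/2}$ on any sublevel set of $V$. Therefore $|v|=\kappa|\tilde v|\le \tfrac{c}{\underline c\, pT}\,e^{V^p}V^{1/2-p}$, and analogously for $|\omega|$. Because $p<1/2$ the exponent $1/2-p>0$, so both inputs are bounded on $[t_0,t_0+T_{\rm s})$ and decay to zero as $t\to t_0+T_{\rm s}\le t_0+T$; I would then continue both by zero beyond $T_{\rm s}$. The main technical obstacle is verifying that solutions are well-defined at the settling boundary, where $\kappa\to\infty$ but the products $\kappa\tilde v,\kappa\omega_{\rm orig}$ must remain continuous; this rests precisely on the $O(V^{1/2})$ estimate for $\tilde v,\omega_{\rm orig}$ near the origin, a consequence of the quadratic structure of $V$ in Theorems~\ref{thm:unidirectional_bkstp}--\ref{thm:bidirectional_bkstp}, and ensures existence of Carathéodory solutions up to and through $T_{\rm s}$.
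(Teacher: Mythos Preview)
Your proposal is correct and follows essentially the same route as the paper: rescale the closed-loop vector field by $\kappa$ to obtain $\dot V\le -\tfrac{1}{pT}e^{V^p}V^{1-p}$, integrate (you via the substitution $W=V^p$, the paper via direct comparison) to the explicit $V$-bound and settling-time estimate, and then pass to the state norm through the quadratic sandwich $c_1|\cdot|_{\mathcal Q}^2\le V\le c_2|\cdot|_{\mathcal Q}^2$. The only cosmetic difference is in property~\ref{it:fxt-2-exp}: the paper uses a global polynomial bound $|\tilde v|+|\tilde\omega|\le a(r+r^2+r^3)$ together with $V^{-p}\le c_1^{-p}|\cdot|_{\mathcal Q}^{-2p}$ to get exponents $1-2p,2-2p,3-2p>0$, whereas your sublevel-set localization yielding $|\tilde v|,|\omega_{\rm orig}|\le cV^{1/2}$ (valid because $\dot V\le 0$ keeps solutions in the initial sublevel set) reaches the same conclusion with a single exponent $1/2-p>0$.
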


\begin{proof}
With the rescaled control laws \eqref{eq:fxt-control-laws-exp}, we follow the proofs of Theorems~\ref{thm:unidirectional_bkstp} and~\ref{thm:bidirectional_bkstp} and arrive at the rescaled versions of \eqref{eq:unidir_Vdot1} and \eqref{eq:bidir_Vdot1}, which now read as
\begin{equation}
   \dot V \le - \underline{c} k(\rho,\delta,\gamma) V  = - \frac{1}{p T} \exp(V^p) V^{1-p}.   
\end{equation}
By the comparison principle, this implies that
\begin{equation}
    V(t) \le \left[\ln\left(\frac{1}{\frac{t-t_0}{T_c} + \exp(-V_0 ^p)}\right)\right]^{1/p} \,.
\end{equation}
Considering the fact that both Lyapunov functions  \eqref{eq:unidir_V_general} and \eqref{eq:bidir_V_general} are bounded as
\begin{equation}
 k_1 \abs{(\rho,\delta, \gamma)}_{\mathcal{Q}}^2 \le V \le  k_2 \abs{(\rho,\delta,\gamma)}_{\mathcal{Q}}^2  \,,\label{eq:fxt_V_up_low_bound} 
\end{equation}
where $k_1 = \min\left\{ 1/2, k_1/(2k_1 k_2^2 + k_3)\right\}$, $k_2 = \max\left\{1, 1 + 2 k_1k_2^2/k_3, 2k_1/k_3\right\}$, on their respective spate-space $\mathcal{Q} = \{\mathcal{S}, \mathcal{S}_2\}$, we arrive at \eqref{eq:fxt_states_bound}.
 The norm of the control laws \eqref{eq:fxt-control-laws-exp} is
\begin{equation}
    \abs{v} + \abs{\omega} \le  \frac{1}{\underline{c} p T} \exp(V^p) V^{-p} \tilde{\alpha}(\abs{(\rho,\delta,\gamma}_{\mathcal{Q}})\,,
\end{equation}
where $\tilde{\alpha}(r) = a(r+r^2 + r^3), a \in \R_{>0}$ as already established in \eqref{eq:norm_control_law_tau}. 
Considering the lower bound from \eqref{eq:fxt_V_up_low_bound}, we obtain 
\begin{equation}
\begin{aligned}[b]
    \abs{v} + \abs{\omega} \le  &\frac{a\exp(V^p)}{\underline{c} k_1^p p T} (\abs{(\rho,\delta,\gamma)}_{\mathcal{Q}}^{1-2p} \\ &  +\abs{(\rho,\delta,\gamma)}_{\mathcal{Q}}^{2-2p} + \abs{(\rho,\delta,\gamma)}_{\mathcal{Q}}^{3-2p})
\end{aligned}
\label{eq:norm_control_fxt}
\end{equation}
where $1-2p > 0$, $2-2p >0$,  $3-2p > 0$ since $p \in (0,1/2)$, hence \eqref{eq:norm_control_fxt} is bounded as $\abs{(\rho,\delta,\gamma)}_{\mathcal{Q}}$ goes to zero. Now, consider \eqref{eq:fxt_states_bound} and note that if $(t-t_0)/T + \exp(-K_2 \abs{(\rho,\delta,\gamma)}_{\mathcal{Q}}^{2p}) = 1$, then $\abs{(\rho,\delta,\gamma)}_{\mathcal{Q}}  = 0$. Since ``$\exp$'' is less than or equal to 1, the settling time $T_s(\rho_0,\delta_0, \gamma_0) \le T [1 - \exp(-K_2 \abs{(\rho_0,\delta_0,\gamma_0)}_{\mathcal{Q}}^{2p})] \le T$, from which statements \ref{it:fxt-1-exp} and \ref{it:fxt-2-exp} of Theorem \ref{thm:FxT-control} follow.
\end{proof}

The PT-controllers \eqref{eq:pt-control-laws} admit a directly tunable settling time, since the convergence horizon $T$ can be explicitly prescribed by the designer and remains independent of both the initial conditions and the controller gains.
In contrast, the more conservative FxT-controllers \eqref{eq:fxt-control-laws-exp} provide only an upper-bound estimate of the settling time of $(\rho,\delta,\gamma)$ to zero given in \eqref{eq:settling_time} that depends on the initial conditions and the chosen control parameters. Consequently, the designer can specify only an overestimate $T$ of the true convergence time.
 Moreover, the FxT-control laws \eqref{eq:fxt-control-laws-exp} are nonsmooth even in the polar coordinates and can add additional discontinuities to the control laws when transformed to Cartesian space, making their implementation more sensitive to measurement noise. As opposed to that, the PT-control laws 
 \eqref{eq:pt-control-laws} yield control signals that are smooth in both time and the states $(\rho,\delta,\gamma)$. 

The PT-controller is meaningful only for the time interval
$[t_0, t_0 + T)$, however, this feature could be advantageous in applications that require the motion to be completed within a prescribed-time window, after which the control can be safely deactivated.
In contrast, the FxT controller operates continuously beyond the convergence to zero, which makes it more suitable for ongoing parking tasks where an infinite horizon is necessary.
In both cases, the control inputs remain bounded and converge smoothly to zero, ensuring comparable terminal behavior despite the structural differences between the two approaches.
The properties asserted by Theorems~\ref{thm:PT-control} and \ref{thm:FxT-control} are illustrated by the simulations shown in Fig. \ref{fig:PT}.

\section{Nonovershooting Control: Strictly Forward Parking without Curb Violation}\label{sec:safety}

Finally, we incorporate an analytical nonovershooting formulation that enforces safety constraints. Suppose the goal is to regulate the unicycle to the point $x = y = \theta = 0$, but without overshooting in $y$, i.e., with $y(t) \leq 0$ for all finite time---starting and remaining in the lower half of the Cartesian plane, akin to parking a vehicle without going over the curb. The full interval in which $y = -\rho\sin\delta \leq 0$ where $\delta \in [-\pi,\pi)$ corresponds to $\delta \in [0,\pi)$. So, we formulate our ``nonovershooting control'' as the stabilization of $\rho=\delta=\gamma=0$
with an arbitrary $\delta_0\in(0,\pi) $ and with $\gamma_0$ possibly limited to some interval around $\gamma=0$. 

We utilize the BAR-FLi CLFs as in~\cite[Thm. 8]{Part1_todorovskiCLF2025}, which have barrier-like properties as discussed in~\cite[Section~III.D]{Part1_todorovskiCLF2025} and serve to restrict the polar angle $\delta$. Building on the backstepping transformation of BAR-FLi, we design $\tilde{\omega}(t)$ to ensure that $\delta(t) \in [0, \pi)$ for all $t \geq 0$, which implies that $y(t) \leq 0$ for all $t \geq 0$. The following result introduces a control law that renders the origin GAS on $\mathcal{S}_2$, and further ensures $\delta(t) \in [0, \pi)$ for all $t \geq 0$ if the initial condition satisfies $\delta_0 \in (0, \pi)$ and $\gamma_0 \in (-\pi/4, \pi/4)$ through an appropriate gain selection. Furthermore, under the same initial condition constraints, the forward velocity input $v(t)$ remains nonnegative for all $t \geq 0$, which is crucial in applications where the unicycle model represents fixed-wing aircraft or guided missiles. Fig~\ref{fig:nonovershoot_fig} illustrates the corresponding admissible region in the $(\delta, \gamma)$ space and its geometric interpretation in Cartesian coordinates.

\begin{theorem}\label{thrm:nonovershoot}
For the system \eqref{eq:unicycle_polar}, the control laws 
\begin{eqnarray}
v &=&  k_1 \rho \cos(\gamma)\label{eq-v-general}\\
\omega &=& \dfrac{k_1}{2} \sin(2\gamma) +\tilde\omega\label{eq-omega-general}\\
\tilde{\omega} &=& \left(k_4 + \dfrac{k_3}{k_2} 
\psi^2(z,\delta)N(\delta)
\left(1+\tan^2\displaystyle\frac{\delta}{2}\right)\right) z \nonumber\\
&& + \dfrac{k_1}{2}\dfrac{k_2}{\cos^2\dfrac{\delta}{2}+ 16k_2^2\sin^2\dfrac{\delta}{2}} \sin(2\gamma)\label{eq:omega_nonovershoot}\,,
\end{eqnarray}
where 
\begin{eqnarray}
z &=& \gamma + \frac{1}{2}\arctan\left(4k_2\tan\frac{\delta}{2}\right)\label{eq:z_BARFLi}\\
N(\delta) &=& \sqrt{1 + 16 k_2^2 \tan^2(\delta/2)}\label{eq:N_BARFLi}\\
\psi(z,\gamma) &=& \frac{\sin(2z-2\gamma) +\sin(2\gamma)}{2z}\label{eq:psi_definition}\,,
\end{eqnarray}
and with $k_1, k_2, k_3, k_4 > 0$, render the point $\rho = \delta = \gamma = 0$ GAS on $\mathcal{S}_2$, in accordance with \cite[Def.~1]{Part1_todorovskiCLF2025}. Moreover, if the initial conditions satisfy $\delta_0 \in (0, \pi)$ and $\gamma_0 \in (-\pi/4, \pi/4)$ with $k_2$ chosen as 

\begin{align}\label{eq:nonovershoot_k2_interval}
k_2 \in \left(\frac{\tan\left(\max\{0,-2\gamma_0\}\right)}{4\tan(\delta_0/2)},\frac{\tan\left(\frac{\pi}{2}+\min\{0,-2\gamma_0\}\right)}{4\tan(\delta_0/2)}\right)
\,,
\end{align}
the polar angle $\delta(t)$ remains in the interval $[0, \pi)$, implying that $(x(t),y(t),\theta(t))\rightarrow 0$ as $t\rightarrow \infty$ with $y(t) \leq 0$, and the forward velocity input $v(t)$ remains nonnegative for all $t\geq 0$.
\end{theorem}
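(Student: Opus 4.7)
The plan is to build on Theorem~8 of~\cite{Part1_todorovskiCLF2025} by using the BAR-FLi CLF together with the backstepping transformation $z$ in \eqref{eq:z_BARFLi}, treating the three additional claims (forward invariance of $\delta\in(0,\pi)$, positivity of $v$, and Cartesian convergence) as consequences of one sharp structural property of the closed-loop $z$-subsystem. First I would establish GAS on $\mathcal{S}_2$; then I would exploit the linearity of the $z$-dynamics in $z$ to deduce sign preservation, which in turn gives the nonovershooting guarantees.

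For GAS, substituting $v = k_1\rho\cos\gamma$ into \eqref{eq:unicycle_polar_rhodot}--\eqref{eq:unicycle_polar_gammadot} yields $\dot\rho = -k_1\rho\cos^2\gamma$, $\dot\delta = \tfrac{k_1}{2}\sin(2\gamma)$, and $\dot\gamma = -\tilde\omega$. Differentiating \eqref{eq:z_BARFLi} then gives
\begin{equation*}
\dot z \;=\; -\tilde\omega + \tfrac{k_1 k_2}{2[\cos^2(\delta/2)+16k_2^2\sin^2(\delta/2)]}\sin(2\gamma),
\end{equation*}
and substituting \eqref{eq:omega_nonovershoot} one sees that the $\sin(2\gamma)$ terms cancel exactly by construction, leaving the scalar equation
\begin{equation*}
\dot z \;=\; -\bigl(k_4 + \tfrac{k_3}{k_2}\psi^2(z,\gamma)\,N(\delta)\sec^2(\delta/2)\bigr)\, z \;=:\; -\alpha(\rho,\delta,\gamma)\,z,
\end{equation*}
with $\alpha(\cdot)\geq k_4 > 0$. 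Combined with the BAR-FLi Lyapunov function $V = \rho^2 + V_\delta(\delta) + q^2 z^2$ of~\cite[Thm.~8]{Part1_todorovskiCLF2025}, whose barrier-like term $V_\delta$ is radially unbounded on $(-\pi,\pi)$, routine manipulation then gives $\dot V \leq -\underline c\, V$ on $\mathcal{S}_2$ and hence GAS.

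For the nonovershooting statements, the scalar ODE $\dot z = -\alpha(\cdot)\, z$ implies $z(t) = z_0\exp\bigl(-\int_0^t \alpha(s)\,ds\bigr)$, so $z$ preserves its sign and $|z(t)|\leq|z_0|$ for all $t\geq 0$. I would then invert \eqref{eq:z_BARFLi} at $t=0$ to show, via a short case split on the sign of $\gamma_0$, that the interval \eqref{eq:nonovershoot_k2_interval} is exactly equivalent to $z_0\in(0,\pi/4)$. This gives $0 < z(t) \leq z_0 < \pi/4$ for all $t\geq 0$, and since $\tfrac{1}{2}\arctan(4k_2\tan(\delta/2)) \in [0,\pi/4)$ whenever $\delta\in[0,\pi)$, the inverse relation $\gamma = z - \tfrac{1}{2}\arctan(4k_2\tan(\delta/2))$ keeps $\gamma(t)\in(-\pi/4,\pi/4)$ throughout, which in turn yields $v(t) = k_1\rho(t)\cos\gamma(t) \geq 0$.

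To close forward invariance of $\delta\in[0,\pi)$, I would argue by contradiction: if $\delta(t^*)=0$ at a first time $t^*>0$, the $\arctan$ term in \eqref{eq:z_BARFLi} vanishes, so $\gamma(t^*) = z(t^*) \in (0,\pi/4)$, giving $\dot\delta(t^*) = \tfrac{k_1}{2}\sin(2z(t^*))>0$, contradicting $\delta$ being nonincreasing just before $t^*$. The barrier property of $V_\delta$ together with $\dot V\leq 0$ rules out $\delta$ ever reaching $\pi$. Hence $\delta(t)\in(0,\pi)$ and $y(t)=-\rho(t)\sin\delta(t)\leq 0$, while Cartesian convergence follows from GAS via $\theta=\delta-\gamma$. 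The main obstacle is the $\sin(2\gamma)$ cancellation that collapses the $z$-dynamics to a linear ODE in $z$---this is exactly the purpose of the second term in \eqref{eq:omega_nonovershoot}---together with the case analysis needed to match the $\max/\min$ structure of \eqref{eq:nonovershoot_k2_interval} to $z_0\in(0,\pi/4)$.
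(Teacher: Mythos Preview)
Your overall strategy matches the paper's: both pivot on the observation that the second term of \eqref{eq:omega_nonovershoot} cancels the $\sin(2\gamma)$ contribution in $\dot z$, collapsing the $z$-dynamics to $\dot z = -\alpha(\cdot)\,z$ with $\alpha\ge k_4$, and both then identify the $k_2$-interval \eqref{eq:nonovershoot_k2_interval} with $z_0\in(0,\pi/4)$ and read off the $\gamma$- and $\delta$-bounds from the inverse relation $\gamma = z - \tfrac12\arctan(4k_2\tan(\delta/2))$. Your use of the barrier property of $V_\delta = 4\tan^2(\delta/2)$ together with $\dot V\le 0$ to exclude $\delta\to\pi$ is in fact cleaner than the paper's argument, which rewrites $\dot\delta = \tfrac{k_1\cos(2z)}{2N(\delta)}\bigl(\tan(2z)-4k_2\tan(\delta/2)\bigr)$ and tracks the sign of this expression directly.

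There is, however, a genuine gap in the GAS step. The claim that ``routine manipulation gives $\dot V \le -\underline{c}\,V$ on $\mathcal{S}_2$'' is false: since $\tfrac{d}{dt}\rho^2 = -2k_1\rho^2\cos^2\gamma$, no uniform bound of the form $-\underline{c}\,\rho^2$ is available when $\gamma$ is near $\pm\pi/2$, and the paper accordingly claims only GAS (via negative definiteness of $\dot V$ on $\{\rho\ge 0\}\times\mathcal{T}_2$), not an exponential estimate. Moreover, even with $\dot z = -\alpha z$ in hand, the manipulation is not routine: $\dot V_\delta$ still contains the indefinite cross term $4k_1\tan(\delta/2)\sec^2(\delta/2)\,\psi(z,\gamma)\,z$, which must be absorbed into the $\tfrac{k_3}{k_2}\psi^2 N\sec^2(\delta/2)$ portion of $\alpha$ via Young's inequality---this is precisely why that term is present in $\tilde\omega$, and the paper carries out this step explicitly to reach \eqref{eq:Vdot_ndf_nonovershoot}. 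Finally, your derivation of $\gamma(t)\in(-\pi/4,\pi/4)$ presupposes $\delta(t)\in[0,\pi)$, which you establish only afterward; since neither your contradiction argument at $\delta=0$ nor the barrier argument at $\delta=\pi$ actually uses the $\gamma$-bound, simply swapping the order of these two steps removes the circularity.
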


\begin{proof}
Consider again the BAR-FLi Lyapunov function introduced in~\cite[Thm.~8]{Part1_todorovskiCLF2025},
\begin{equation}\label{eq:BAR-Fli_lyap}
V_{\delta \gamma}(\delta,\gamma) = 4 \tan^2 \frac{\delta}{2} + q^2 z^2\,,
\end{equation}
where $z$ is defined in~\eqref{eq:z_BARFLi} and $q = \sqrt{k_1/k_3}$. The time derivative of~\eqref{eq:BAR-Fli_lyap} with \eqref{eq-v-general} and \eqref{eq-omega-general} yields
\begin{align}
\dot{V}_{\delta \gamma} =& -\frac{8k_1k_2\tan^2(\delta/2)}{N(\delta)}\left(1+\tan^2\displaystyle\frac{\delta}{2}\right)\nonumber\\
&+ 2q^2\,z\biggl[\frac{k_1k_2 \sin(2\gamma)}{2\cos^2(\delta/2)+32 k_2^{2}\sin^{2}(\delta/2) }\nonumber\\
&+ 2k_3 \psi(z,\gamma)\left(1 + \tan^2\frac{\delta}{2}\right)\tan\frac{\delta}{2} -\tilde{\omega} \biggr] .
\end{align}
Noting that $N(\delta) >0$, from Young's inequality, we get
\begin{align}
&4k_1z\psi(z,\gamma)\tan(\delta/2) \nonumber\\
&\qquad \leq  \frac{2k_1k_2\tan^2(\delta/2)}{N(\delta)} + \frac{2k_1}{k_2}N(\delta)\psi^2(z,\delta)z^2
\,.
\end{align}
Then, 
\begin{align}
\dot{V}_{\delta\gamma} \leq& -\frac{6k_1k_2\tan^2(\delta/2)}{N(\delta)}\left(1+\tan^2\displaystyle\frac{\delta}{2}\right)\nonumber\\
&+ \frac{2k_1}{k_2}N(\delta)\psi^2(z,\delta)\left(1+\tan^2\displaystyle\frac{\delta}{2}\right)z^2 \nonumber\\
&+2q^2z\Biggl[\frac{k_1k_2\sin(2\gamma)}{2\cos^2(\delta/2)+32k_2^{2}\sin^{2}(\delta/2)\bigr)}-\tilde\omega\Biggr]
\,.
\end{align}
Choosing the $\tilde\omega$ as \eqref{eq:omega_nonovershoot} we get
\begin{equation}\label{eq:Vdot_ndf_nonovershoot}
\dot{V}_{\delta\gamma} \leq -\frac{6k_1k_2}{N(\delta)}\left(1+\tan^2\displaystyle\frac{\delta}{2}\right)\tan^2 \frac{\delta}{2}
-2k_4q^2z^2
\,.
\end{equation}  
Thus, under the control laws \eqref{eq-v-general} and \eqref{eq-omega-general} with \eqref{eq:omega_nonovershoot}, the time derivative of the CLF
\begin{align}
V(\rho,\delta,\gamma) = \rho^2 + 4\tan^2 \frac{\delta}{2} + q^2 z^2\,,
\end{align}
is negative definite on $\{\rho \ge 0\} \times \mathcal{T}_2$. By reasoning similar to that of the proof of~\cite[Thm. 8]{Part1_todorovskiCLF2025}, the point $\rho = \delta = \gamma = 0$ is GAS on $\mathcal{S}_2$, implying that $(x(t), y(t), \theta(t)) \to 0$ as $t \to \infty$.

To prove the second part, first observe that with \eqref{eq-omega-general} and \eqref{eq:omega_nonovershoot} the time derivative of the backstepping transformation \eqref{eq:z_BARFLi} yields
\begin{align}\label{eq-zdot-nonovershoot-tangent2}
\dot{z} = -\left(k_4 + \dfrac{k_3}{k_2}
{\psi^2(z,\delta)}
N(\delta)\left(1+\tan^2\displaystyle\frac{\delta}{2}\right)
\right)z
\,,
\end{align}
which guarantees $z(t) \geq 0$ and is decreasing for all $z_0 \geq 0$. Then, if $\delta_0 \in (0,\pi)$, $\gamma_0 \in (-\pi/4,\pi/4)$, and $k_2$ is chosen as in \eqref{eq:nonovershoot_k2_interval}, the expression \eqref{eq:z_BARFLi} guarantees that $z_0 \in (0,\pi/4)$ and hence $z(t) \in [0,\pi/4)$ for all $t \geq 0$. Given that, therefore,  $\frac{k_1\cos(2z(t)) }{2 N(\delta(t))}\tan(2z(t))\geq 0$ and rewriting $\dot{\delta}$ as
\begin{align}
\dot{\delta} &=  \frac{k_1\cos(2z) }{2 N(\delta)}
\left(\tan(2z)-4k_2\tan\frac{\delta}{2}\right)\label{eq:delta_dot_nonovershoot}
\,,
\end{align}
we observe that $\dot{\delta} \geq -\frac{2k_1k_2}{N(\delta)}\tan(\delta/2)$ which for $\delta = 0$ implies $\dot{\delta} \geq 0$. Hence, $\delta(t)$ remains nonnegative for all $t \geq 0$. Additionally, from \eqref{eq:delta_dot_nonovershoot}, we observe that whenever $\delta(t) \geq 2\arctan\left(\frac{\tan(2z(t))}{4k_2}\right)$ for any $t \geq 0$, it follows that $\dot{\delta}(t) \leq 0$ and since $z(t) < \pi/4$ for all $t \geq 0$, $\delta(t) < 2\arctan(\infty) = \pi$ holds and ensures that $\delta(t) < \pi$ for all $t \geq 0$ if $\delta_0 \in (0,\pi)$. Thus, $\delta(t) \in [0, \pi)$ for all $t \geq 0$, which implies that $y(t) \leq 0$ for all $t \geq 0$. Finally, since $z(t) \in [0, \pi/4)$ and $\delta(t) \in [0, \pi)$ for all $t \geq 0$, then by \eqref{eq:z_BARFLi} it follows that $\gamma(t) \in (-\pi/4, \pi/4)$, yielding $\cos(\gamma(t)) > \sqrt{2}/2$ and thus guarantees that the forward velocity control law \eqref{eq-v-general} remains nonnegative for all $t \geq 0$.
\end{proof}

\begin{figure}[t]
\centering
\begin{subfigure}{0.45\textwidth}
\centering
\includegraphics[width=0.5\textwidth]{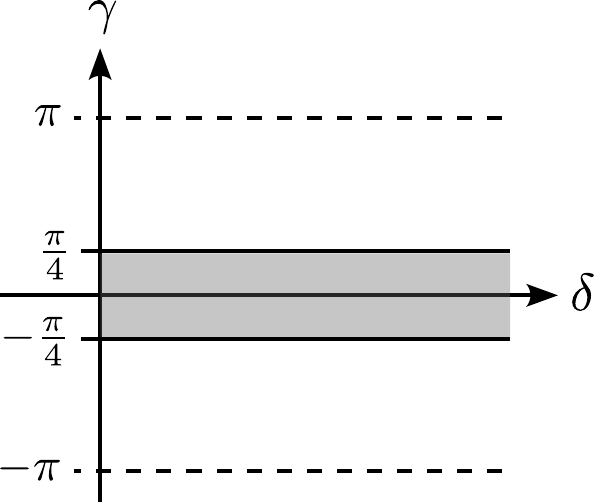}
\caption{Admissible set in $(\gamma,\delta)$}
\label{fig:nonovershoot1}
\end{subfigure}
\vspace{0.5cm}
\begin{subfigure}{0.45\textwidth}
\centering
\includegraphics[width=0.6\textwidth]{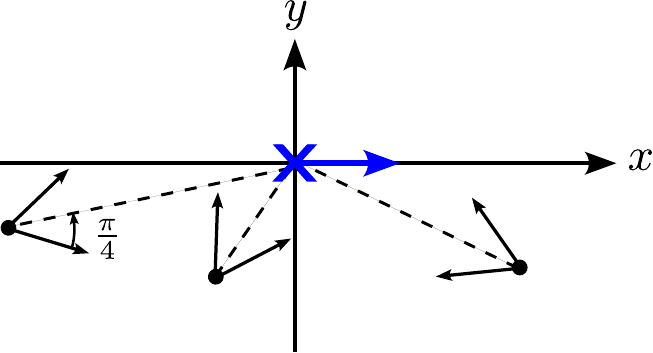}
\caption{Geometry of the LOS angle $\gamma$}
\label{fig:nonovershoot2}
\end{subfigure}

\caption{Visualization of the angular variables used in the nonovershooting controller design. (a) The admissible set in $\delta$ and $\gamma$ for guaranteed nonovershoot. (b) Illustration of the admissible LOS angle $\gamma$ for selected initial positions relative to the target’s position and heading (blue).}
\label{fig:nonovershoot_fig}
\end{figure}

To demonstrate this result, Fig~\ref{fig:trajectory_nonovershoot} presents simulations that illustrate its ability to enforce the curb constraint under a range of initial conditions. The strict enforcement of the safety constraint is verified in Fig~\ref{fig:states_nonovershoot}, where $\delta(t) \geq 0$ holds for all initial conditions and time, ensuring the system remains within the lower half-plane. Notably, in the extreme case where the initial condition satisfies $\delta_0 \approx \pi$ (grey trajectory), the controller actively suppresses the growth of $\delta(t)$ to prevent any potential violation of the safety constraint.

\begin{figure}[t]
\centering
\includegraphics[width=.9\linewidth]{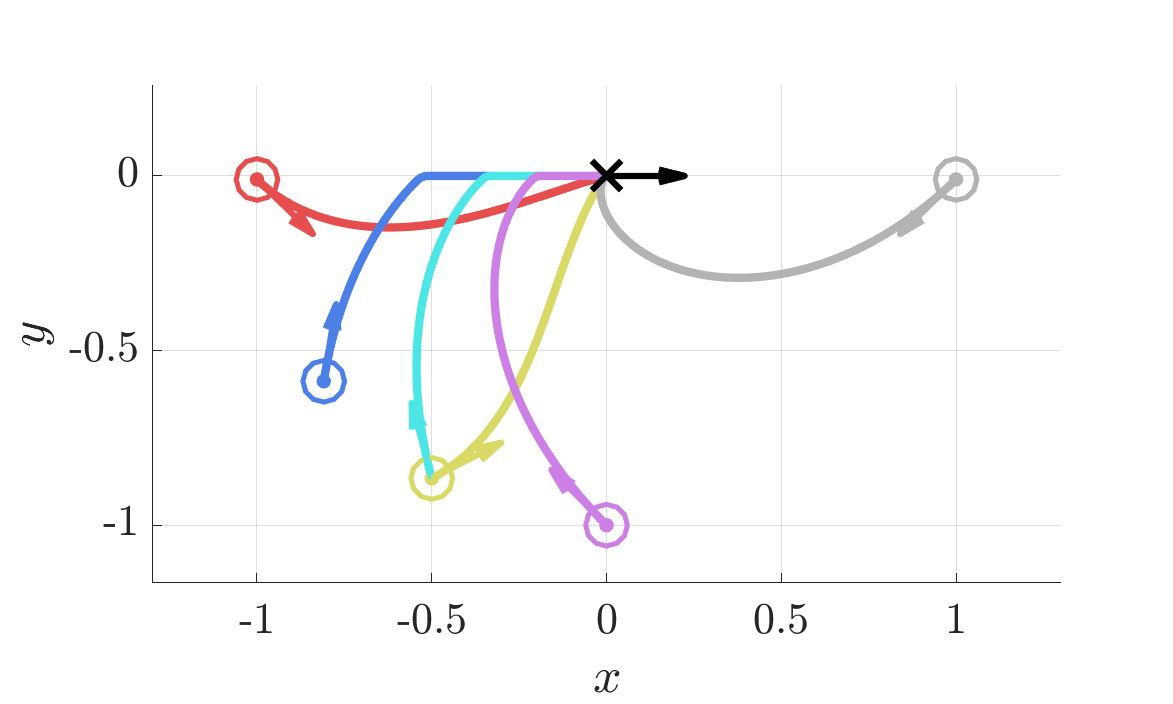}
\caption{Trajectories under the nonovershooting controls \eqref{eq-v-general} and \eqref{eq-omega-general}, using \eqref{eq:omega_nonovershoot} with gains $k_1 = k_3 = k_4 = 1$, and $k_2$ chosen within the interval in \eqref{eq:nonovershoot_k2_interval}.}
\label{fig:trajectory_nonovershoot}
\end{figure}

\begin{figure}[t!]
\centering
\begin{subfigure}{\linewidth}
\centering
\includegraphics[width=.9\linewidth]{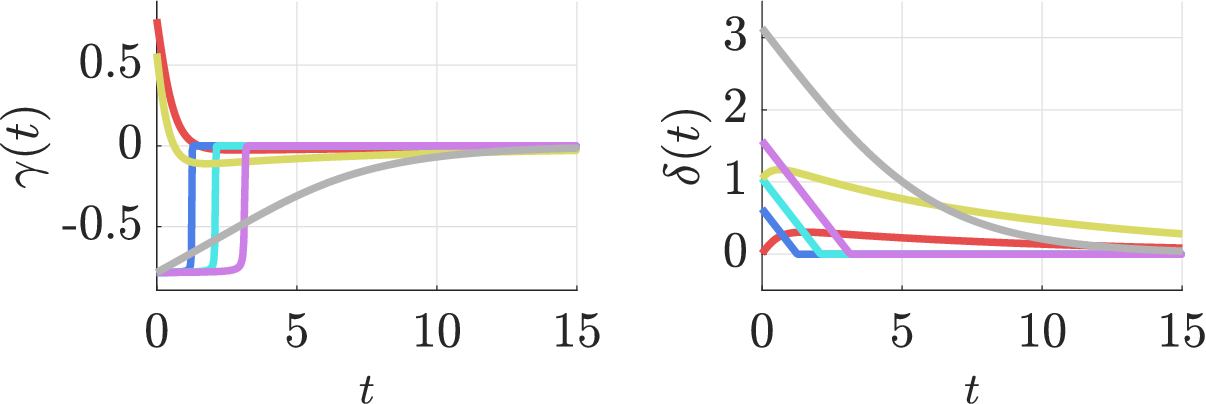}
\caption{$\delta(t)$ remains non-negative for all time, with $\gamma(t)$ exhibiting abrupt turns in certain trajectories (blue, cyan, and lavender) once $\delta(t)$ reaches zero.}
\label{fig:states_nonovershoot}
\end{subfigure}
\vskip\baselineskip
\begin{subfigure}{\linewidth}
\centering
\includegraphics[width=.9\linewidth]{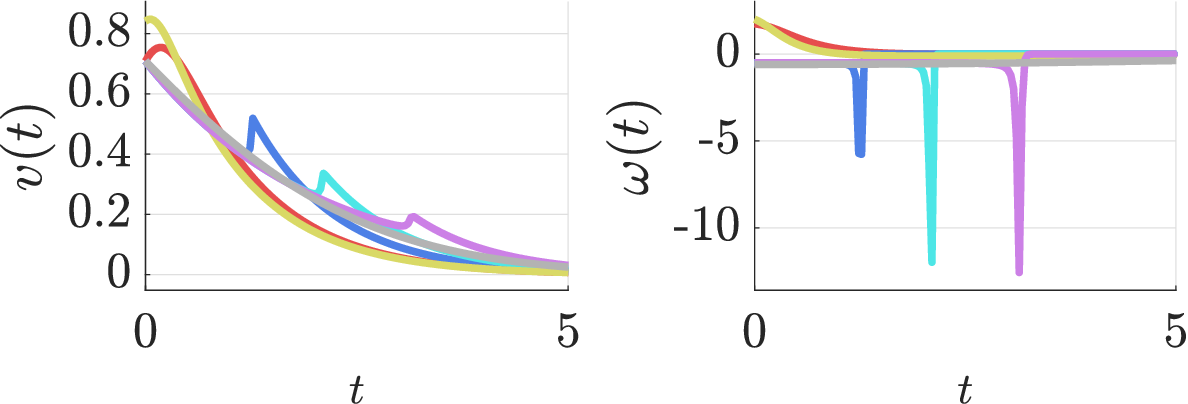}
\caption{Control inputs $v$ and $\omega$ for the nonovershooting trajectories.}
\label{fig:control_nonovershoot}
\end{subfigure}
\caption{Simulation results under nonovershooting control laws.}
\label{fig:nonovershoot_combined}
\end{figure}

For several initial conditions (blue, cyan, and lavender) one observes an abrupt turn in the trajectory once the unicycle reaches the boundary, represented as a sharp jump in $\gamma(t)$ in Fig~\ref{fig:states_nonovershoot}, akin to the discontinuity typical of conventional safety filters.  The corresponding impulse-like spikes in $\omega(t)$ (Fig~\ref{fig:control_nonovershoot}) are not due to controller discontinuity, but a result of the necessary design choice of large $k_2$ values for these cases ($392.85$, $221.09$, and $127.65$, respectively). Recalling the expression for $\tilde{\omega}$ in \eqref{eq:omega_nonovershoot}, when $k_2$ is large, the last term remains small for large $\delta$ since the $k_2^2$ term in the denominator dominates the $k_2$ term in the numerator. As $\delta$ approaches zero, the denominator becomes small, causing the expression to grow rapidly. However, once the unicycle reaches the boundary and $\gamma(t)$ becomes zero, the term vanishes. This results in a control input that appears impulse-like and, while it can be large, it is still continuous.

\section{Conclusion}

This paper presents constructive analytical results for nonmodular exponential stabilization, inverse optimality, robustness, adaptive control, prescribed/fixed-time convergence, and safety of the unicycle system. We develop nonmodular backstepping feedback laws that guarantee global exponential stability with either unidirectional or bidirectional velocity actuation. Leveraging families of strict CLFs, we introduce a general inverse optimal framework that yields broad families of stabilizers with infinite gain margins, ensuring robustness under arbitrarily low input saturation and actuator uncertainty. To further address uncertainty, we design adaptive laws for unknown input coefficients that achieve improved transient performance with reduced initial control effort. We further present controllers that ensure stabilization within a user-defined time horizon through two systematic approaches: a time-dilated prescribed-time design that guarantees smooth convergence of both states and inputs to zero, and a homogeneity-based fixed-time control scheme that offers an analytical bound on the settling time.
Finally, nonovershooting control ensures geometric safety constraints and strictly forward motion without optimization-based enforcement. Together with Part I, this work establishes a broad constructive foundation for the stabilization of the nonholonomic unicycle, unifying multiple advanced control properties in a systematic and analytically transparent framework.



\section*{References}
\bibliographystyle{IEEEtranS}
\bibliography{root}

\vspace{-1.2 cm}

\end{document}